\documentclass[aps,10pt,twocolumn,superscriptaddress,nofootinbib]{revtex4-2}

\newcommand{\externalizetikz}{0} 

\usepackage[english]{babel}
\usepackage[utf8]{inputenc}

\usepackage{mymath}
\usepackage{amsthm}
\usepackage{mathdots}
\usepackage{nicefrac}
\usepackage{notation}

\usepackage[caption=false]{subfig}
\usepackage{subcaption}
\usepackage{ragged2e}
\DeclareCaptionJustification{justified}{\justifying}

\usepackage{tabularx,multirow,booktabs}
\newcolumntype{L}{>{\raggedright\arraybackslash}X}
\newcolumntype{C}{>{\centering\arraybackslash}X}
\newcolumntype{R}{>{\raggedleft\arraybackslash}X}

\usepackage{xcolor}
\definecolor{myblue}{rgb}{0,0.4470,0.7410}
\definecolor{myred}{rgb}{0.8500,0.3250,0.0980}
\definecolor{myorange}{rgb}{0.9290,0.6940,0.1250}
\definecolor{mypurple}{rgb}{0.4940,0.1840,0.5560}
\definecolor{mygreen}{rgb}{0.4660,0.6740,0.1880}
\definecolor{mylightblue}{rgb}{0.3010,0.7450,0.9330}
\definecolor{mydarkred}{rgb}{0.6350,0.0780,0.1840}
\definecolor{mygrey}{rgb}{0.6, 0.6, 0.6}
\colorlet{Deltacolor}{mygreen!60!white}
\colorlet{Gammacolor}{Deltacolor!40!white}
\colorlet{Rzcolor}{Gammacolor!40!white}
\colorlet{CLkcolor}{myorange!60!white}
\colorlet{ECcolor}{mydarkred!60!white}
\colorlet{DickeOnecolor}{myblue!40!white}
\colorlet{DickeOneDcolor}{DickeOnecolor}
\colorlet{DickeOneUcolor}{DickeOnecolor}
\colorlet{DickeOneUDcolor}{DickeOnecolor}
\colorlet{DickeTwokNcolor}{DickeOnecolor}
\colorlet{DickeTwokNUcolor}{DickeTwokNcolor}
\colorlet{DickeTwokNDcolor}{DickeTwokNcolor}
\colorlet{DickeTwokNUDcolor}{DickeTwokNcolor}
\definecolor{PRcolor}{rgb}{0.6,0.6,0.6}
\definecolor{PLcolor}{rgb}{0.6,0.6,0.6}
\definecolor{PRucolor}{rgb}{0.6,0.6,0.6}
\definecolor{PLucolor}{rgb}{0.6,0.6,0.6}
\definecolor{PRccolor}{rgb}{1,1,1}
\definecolor{PLccolor}{rgb}{1,1,1}
\definecolor{subPRcolor}{rgb}{0.85,0.85,0.85}
\definecolor{mutedgreen}{rgb}{0.6,0.85,0.6}

\ifnum\externalizetikz=2
  \usepackage{tikzexternal}
  \tikzexternalize
  \tikzsetexternalprefix{fig/}
\else
  \usepackage{tikz}
  \usepackage{pgfplots}
  \usetikzlibrary{calc}
  \usetikzlibrary{positioning}
  \usetikzlibrary{plotmarks}
  \pgfplotsset{
    compat=newest,
    table/header=false,
    tick label style={font=\footnotesize},
    label style={font=\small},
    legend style={font=\footnotesize},
    legend cell align=left,
    colormap={parula}{
      rgb255=(53,42,135)
      rgb255=(15,92,221)
      rgb255=(18,125,216)
      rgb255=(7,156,207)
      rgb255=(21,177,180)
      rgb255=(89,189,140)
      rgb255=(165,190,107)
      rgb255=(225,185,82)
      rgb255=(252,206,46)
      rgb255=(249,251,14)
    }
  }
  \pgfplotsset{
    myColOne/.style={myblue},
    myColTwo/.style={myred},
    myColThr/.style={myorange},
    myColFou/.style={mypurple},
    myColFiv/.style={mygreen},
    myColSix/.style={mylightblue},
    myColSev/.style={mydarkred}
  }
  \ifnum\externalizetikz=1
    \usepgfplotslibrary{external}
    \tikzexternalize[prefix=fig/]
  \fi
\fi

\newcommand*\circled[2][2]{\tikz[baseline=(char.base)]{
\node[shape=circle,draw,inner sep=#1pt,thick] (char) {\textbf{#2}};}}

\usepackage[smallmeter]{myqcircuit}
\newcommand{\myqctmp}[2]{\Qcircuit @C=#2em @R=#1em @!R}
\makeatletter
\NewEnviron{myqcircuit}{\vcenter{\myqctmp{\myqcircuit@rowsep}{\myqcircuit@colsep}{\BODY}}}
\NewEnviron{myqcircuitr}[1]{\vcenter{\myqctmp{#1}{\myqcircuit@colsep}{\BODY}}}
\NewEnviron{myqcircuitc}[1]{\vcenter{\myqctmp{\myqcircuit@rowsep}{#1}{\BODY}}}
\NewEnviron{myqcircuit*}[2]{\vcenter{\myqctmp{#1}{#2}{\BODY}}}
\makeatother
\usepackage[color=black]{qcircuit2D}

\usepackage[colorlinks,urlcolor=blue,linkcolor=black,citecolor=blue]{hyperref}
\usepackage[capitalize,nameinlink]{cleveref}
\usepackage{orcidlink}

\newtheorem{assumption}{Assumption}
\newtheorem{lemma}{Lemma}
\newtheorem{theorem}{Theorem}
\newtheorem{corollary}{Corollary}
\newtheorem{definition}{Definition}
\crefname{assumption}{Assumption}{Assumptions}
\crefname{lemma}{Lemma}{Lemmas}
\crefname{theorem}{Theorem}{Theorems}
\crefname{corollary}{Corollary}{Corollaries}
\crefname{definition}{Definition}{Definitions}

\begin{document}

\title{\bf Practical block encodings of matrix polynomials that can also be trivially controlled}

\author{Martina~Nibbi\orcidlink{0009-0001-6440-0498}}
\email{martina.nibbi@tum.de}
\affiliation{These authors contributed equally to this work.}
\affiliation{Technical University of Munich, School of Computation, Information and Technology, Boltzmannstra{\ss}e 3, 85748 Garching, Germany}

\author{Filippo~Della~Chiara\orcidlink{0009-0001-7966-439X}}
\email{filippo.dellachiara@kuleuven.be}
\affiliation{These authors contributed equally to this work.}
\affiliation{Department of Computer Science,
KU Leuven, University of Leuven, 3001 Leuven, Belgium}

\author{Yizhi~Shen\orcidlink{0000-0002-4160-5482}}
\email{yizhis@lbl.gov}
\affiliation{Applied Mathematics and Computational Research Division,
Lawrence Berkeley National Laboratory, Berkeley, CA 94720, USA}

\author{Aaron~Szasz\orcidlink{0000-0002-1127-2111}}
\email{aszasz@google.com}
\affiliation{Google Quantum AI, Santa Barbara, CA 93111, USA}

\author{Roel~Van~Beeumen\orcidlink{0000-0003-2276-1153}}
\email{rvanbeeumen@lbl.gov}
\affiliation{Department of Computer Science,
KU Leuven, University of Leuven, 3001 Leuven, Belgium}
\affiliation{Applied Mathematics and Computational Research Division,
Lawrence Berkeley National Laboratory, Berkeley, CA 94720, USA}
\date{\today}

\begin{abstract}
Quantum circuits naturally implement unitary operations on input quantum states. However, non-unitary operations can also be implemented through ``block encodings'', where additional ancilla qubits are introduced and later measured. 
While block encoding has a number of well-established theoretical applications, its practical implementation has been prohibitively expensive for current quantum hardware.
In this paper, we present practical and explicit block encoding circuits implementing matrix polynomial transformations of a target matrix. 
With standard approaches, block-encoding a degree-$d$ matrix polynomial requires a circuit depth scaling as $d$ times the depth for block-encoding the original matrix alone.
By leveraging the recently introduced Fast One-Qubit Controlled Select LCU (\foxlcu{}) framework, we show that the additional circuit-depth overhead required for encoding matrix polynomials can be reduced to scale linearly in $d$ with no dependence on system size or the cost of block encoding the original matrix. 
Moreover, we demonstrate that the \foxlcu{} circuits and their associated matrix polynomial transformations can be controlled with negligible overhead, enabling efficient applications such as Hadamard tests.
Finally, we provide explicit circuits for representative spin models, together with detailed non-asymptotic gate counts and circuit depths.
\end{abstract}
\maketitle

\section{Introduction}

Quantum circuits act unitarily on input states, but important applications of quantum computing to linear algebra and quantum simulation also require non-unitary operations. These operations can be implemented through block encodings \cite{Dong2021,Camps2022,Nguyen2022,Camps2024,Snderhauf2024,kuklinski2024,kuklinski2025,Nibbi_2024,Rullkotter2025,Liu2025,Childs2012,Childs2018,Gilyen2018,Babbush2018, Babbush_2019,sanders2020,Lee2021,Wan2021,Boyd2024,loaiza2024,Chakraborty2024,Kane2025,Georges2025,simon2025,Liu2025nbd,schillo2026,foqcs-lcu-arxiv}. 
Given a matrix $\Ham$, a block encoding embeds $\Ham$ as a sub-block of a larger unitary acting on an extended Hilbert space.
This construction enables the implementation of $\Ham$ using unitary quantum circuits, with post-selection on ancilla qubits restricting to the desired non-unitary matrix block.

Although well understood theoretically, block encoding has been viewed as impractical on present-day quantum hardware, as the standard routines require large multi-controlled gates and circuits that are non-local on two-dimensional qubit chip architectures. The recently introduced Fast One-Qubit Controlled Select Linear Combination of Unitaries (\foxlcu{}) framework \cite{foqcs-lcu-arxiv} overcomes the gate-complexity bottleneck for a wide class of matrices $\Ham$, replacing multi-controlled gates with just two parallel layers of two-qubit gates. 
In exchange, we must use a number of ancilla qubits that scales linearly with the system size.
For many important Hamiltonians, such as one-dimensional Ising and Heisenberg models, the \foxlcu{} block-encoding yields a short circuit depth that is linear in the number of spins.

Even with the barriers to block encoding removed for large classes of non-unitary matrices, there are further challenges to the practical application of block encodings in useful algorithms. The first key challenge is the controlled application of a block-encoded non-unitary operation. 
Such controlled block encodings are required, for example, in Hadamard-test circuits. 
The problem is that, even if a block encoding for $\Ham$ can be efficiently decomposed into native one- and two-qubit gates, once the encoding is controlled, we would naively end up with a large number of Toffoli gates. Such circuits would again be impractical on present-day devices. In this paper, we show that \foxlcu{} block-encodings for a large class of non-unitaries $\Ham$ can be controlled with negligible overhead, requiring just two additional \cnot{} gates compared with the cost of the non-controlled block encoding of $\Ham$.

The second key challenge is the need to block-encode not just $\Ham$, but polynomials of $\Ham$. This arises, for example, in using block encoding to implement time evolution under a quantum spin Hamiltonian, where we approximate the time-evolution operator $\text{exp}(-i\Ham t)$ with a suitable polynomial expansion in $\Ham$~\cite{Berry_2015,Childs2018,Meister_2022,Sze_2025,Kirby2023exactefficient}. For common techniques such as the Quantum Singular Value Transformation (QSVT) \cite{Low_2019,Gilyen2018,Martyn2021,Kikuchi2023,Motlagh2024,Meister_2022,Sze_2025,Sze:2025vjh}, the circuit depth to block-encode a degree-$d$ polynomial in $\Ham$ would scale as $d$ times the depth to encode $\Ham$. 
In this paper, we generalize the \foxlcu{} framework and show that the extra circuit depth when encoding a polynomial in $\Ham$ relative to encoding just $\Ham$ scales only with $d$; that is, the extra depth for the polynomial is independent of the complexity of the original block encoding of $\Ham$ and therefore also independent of system size. Even more striking, the entire block-encoding for the polynomial can additionally be controlled by adding just 4 \cnot{} gates, independent of system size and of the polynomial degree.

Furthermore, we show that our block encodings can be efficiently implemented on quantum chips with realistic two-dimensional layouts using only local (nearest-neighbor and single-qubit) gates.  
We first provide explicit circuit constructions in terms of single- and two-qubit gates assuming all-to-all connectivity, from which we derive exact, non-asymptotic \cnot{} gate counts and circuit depths.  We then provide a detailed mapping of those circuits to a two-dimensional square grid architecture, with the final circuit written only in terms of local gates in $2$D: single-qubit gates and nearest-neighbor two-qubit gates. 
The mapping from all-to-all to 2D incurs only a small overhead---the increase in circuit depth is just a few layers, independent of the system size.

To summarize, for important classes of non-unitary operators $\Ham$, including 1D quantum spin models, this paper resolves many of the bottlenecks that have made block encoding impractical until now:
\begin{enumerate}
\item[(1)] The block encoding of $\Ham$ can be controlled with negligible overhead. 
\item[(2)] Block encodings of matrix polynomials in $\Ham$ are nearly as efficient as block-encodings of $\Ham$ alone.
\item[(3)] Block encodings of matrix polynomials in $\Ham$ can be controlled with negligible overhead.
\item[(4)] All of these block encodings can be implemented in low-depth circuits using only nearest-neighbor two-qubit gates on a two-dimensional square lattice.
\end{enumerate}
By introducing explicit and practical circuits, our framework turns block encodings and controlled block encodings from theoretical abstractions into concrete modules that can be compiled, benchmarked, and stress-tested.

\paragraph*{\bf Road map.}
The paper is organized as follows. \cref{sec:genral_control_theory} discusses the simplification of a generic controlled multi-qubit gate when it acts on special reference states. 
This is a key ingredient for our main results on controlled block encodings.
In \cref{sec:foqcs_lcu}, we examine block encodings via the \foxlcu{} formalism, originally conceptualized in \cite{foqcs-lcu-arxiv}, and further develop the construction and implementation of its controlled versions. In \cref{sec:transformations_general}, we proceed to show how to realize a matrix polynomial in a generic operator $\Ham$ given its \foxlcu{} block encoding and discuss how to implement a controlled version of the polynomial matrix transformations with negligible overhead. 
In \cref{sec:applications}, we outline explicit circuits and potential applications of FOQCS-LCU-enabled block encodings of matrix polynomials, in both uncontrolled and controlled forms. 
We conclude and summarize
in \cref{sec:conclusions}.

\section{Controlled unitaries and common eigenstates}
\label{sec:genral_control_theory}
Controlled unitaries are essential and ubiquitous building blocks in many quantum algorithms. In this section, we consider unitaries with a decomposition of the form
\begin{equation}
 \genGate = \Bgate_s \cdot \Agate_s \cdots \Bgate_1 \cdot \Agate_1, 
 \label{eq:Udecomp}
\end{equation}
where we assume that the unitaries $\{\Agate_i\}_{i=1}^s$ can be implemented using very few elementary gates, and the unitaries $\{\Bgate_i\}_{i=1}^s$ share a common eigenstate $\ket{\xi}$ corresponding to eigenvalue $1$:
\begin{align}
 \Bgate_i \ket{\xi} &= \ket{\xi}, & \forall \ i &= 1,\dots,s.
 \label{eq:common-eigvec}
\end{align}
Next we prove that a controlled version of such unitaries $\genGate$, whenever \emph{applied to} or \emph{measured in} the state $\ket{\xi}$, can be efficiently implemented with only negligible overhead compared to the implementation of $\genGate$ itself.

\subsection{Applying controlled unitaries to a specific state}

Controlling a unitary of the form \cref{eq:Udecomp} leads to the following lemma, which formalizes how such controlled implementations can be efficiently realized when applied to the specific common eigenstate $\ket{\xi}$.

\begin{lemma}\label{lemma:unitary_decomposition_initial_state}
Let $\genGate$ be defined as in \cref{eq:Udecomp,eq:common-eigvec}, with $\ket{\xi}$ a common eigenstate of $\{\Bgate_i\}_{i=1}^s$. Then, the controlled application of $\genGate$ to $\ket{\xi}$ can be simplified as follows:
\begin{equation}
 \quad%
 \begin{myqcircuit}
 & \qw & \ctrl{1} & \qw \\
 \lstick{\ket{\xi}\!\!} & {/\strut^{n}}\qw & \gate{\genGate} & \qw
 \end{myqcircuit}
 \ = \quad\ \,
 \begin{myqcircuit}
 & \qw & \ctrl{1} & \qw & \qw &&\cdots &&&\ctrl{1} &\qw & \qw \\
 \lstick{\ket{\xi}\!\!} & {/\strut^{n}}\qw & \gate{\Agate_1} & \gate{\Bgate_1} & \qw &&\cdots &&&\gate{\Agate_s} &\gate{\Bgate_s} & \qw
 \end{myqcircuit}
 \label{eq:gengate_controlled_decomposition}
\end{equation}
\end{lemma}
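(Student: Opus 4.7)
The plan is to verify the circuit identity by expanding the control qubit in the computational basis and tracking the target register separately in each branch; the proof is essentially a direct case analysis, and no heavy machinery is needed.

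First, I would write an arbitrary state on the control qubit as $\alpha\ket{0} + \beta\ket{1}$, so the joint input state is $(\alpha\ket{0} + \beta\ket{1})\otimes\ket{\xi}$. Both circuits are linear and the operations on the control wire do not depend on its amplitude, so it suffices to check the identity separately on the two computational-basis control inputs $\ket{0}\otimes\ket{\xi}$ and $\ket{1}\otimes\ket{\xi}$ and then invoke linearity.

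Second, I would dispatch the $\ket{1}$-branch, which is the easier one. The left-hand side produces $\ket{1}\otimes\genGate\ket{\xi}$ by definition of a controlled gate. On the right-hand side, every controlled-$\Agate_i$ fires (the control is in $\ket{1}$) and every uncontrolled $\Bgate_i$ also acts as $\Bgate_i$ on the target; composing these operators in the order dictated by the circuit reproduces exactly $\Bgate_s\Agate_s\cdots\Bgate_1\Agate_1 = \genGate$, so the outputs on the two sides coincide.

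Third, I would treat the $\ket{0}$-branch by induction on the gate index. The left-hand side trivially leaves the state as $\ket{0}\otimes\ket{\xi}$. On the right-hand side, after the first controlled-$\Agate_1$ the target is still $\ket{\xi}$ because the control is in $\ket{0}$, and after the subsequent uncontrolled $\Bgate_1$ it is again $\ket{\xi}$ by the hypothesis $\Bgate_1\ket{\xi}=\ket{\xi}$. Propagating this argument layer by layer shows that the target remains in $\ket{\xi}$ throughout the circuit, so the output is $\ket{0}\otimes\ket{\xi}$, matching the left-hand side. Combining the two branches by linearity concludes the proof.

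The one delicate point — hardly an obstacle, but worth flagging — is the $\ket{0}$-branch: removing the control from $\Bgate_i$ is admissible \emph{only} because $\Bgate_i\ket{\xi}=\ket{\xi}$ with eigenvalue exactly one. An eigenvalue $e^{i\varphi}\neq 1$ would accumulate a spurious relative phase between the two control branches that the genuinely controlled circuit does not produce, and the identity would fail. No analogous assumption is required on the $\Agate_i$'s, which is precisely why they must remain controlled on the right-hand side.
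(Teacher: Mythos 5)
Your proof is correct and follows essentially the same route as the paper's: the paper also evaluates both circuits on a general control state $\alpha\ket{0}+\beta\ket{1}$, which is just your basis-case analysis plus linearity written in one line, and it uses exactly the same two facts you do, namely $\prod_{i}\Bgate_i\Agate_i = \genGate$ for the $\ket{1}$-branch and $\Bgate_i\ket{\xi}=\ket{\xi}$ applied layer by layer for the $\ket{0}$-branch. Your closing remark that the eigenvalue must be exactly $1$ (lest a relative phase accumulate between the two control branches) is a correct and worthwhile observation that the paper leaves implicit.
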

\begin{proof}
Let the control qubit be prepared in an arbitrary superposition $\alpha\ket{0} + \beta\ket{1}$. Then the left-hand side of \cref{eq:gengate_controlled_decomposition} yields the following joint state:
\begin{equation}
 \label{eq:unitary_decomposition_initial_state}
 \left(\alpha\ket{0}+\beta\ket{1}\right)\ket{\xi} \longrightarrow
 \alpha\ket{0}\ket{\xi} + \beta\ket{1}\genGate\ket{\xi},
\end{equation}
while the right-hand side yields the state:
\begin{equation}\label{eq:proof_lemma1_0}
 \begin{split}
 \left(\alpha\ket{0}+\beta\ket{1}\right)\ket{\xi} &\longrightarrow \\
 \alpha\ket{0}\prod_{i=1}^s &\Bgate_i\ket{\xi} + \beta\ket{1}\prod_{i=1}^s \Bgate_i\Agate_i\ket{\xi}.
 \end{split}
\end{equation}
Next, using the eigenstate properties in \cref{eq:common-eigvec} and the decomposition in \cref{eq:Udecomp}, we obtain:
\begin{align}
 \alpha\ket{0}\prod_{i=1}^s \Bgate_i\ket{\xi} &= \alpha\ket{0}\ket{\xi}, \label{eq:proof_lemma1_1}\\
 \beta\ket{1}\prod_{i=1}^s \Bgate_i\Agate_i\ket{\xi} &= \beta\ket{1}\genGate\ket{\xi}. \label{eq:proof_lemma1_2}
\end{align}
Inserting \cref{eq:proof_lemma1_1,eq:proof_lemma1_2} into \cref{eq:proof_lemma1_0} yields \cref{eq:unitary_decomposition_initial_state}, completing the proof.
\end{proof}

As a consequence, it suffices to only control the unitaries $\Agate_i$. Under the assumption that the implementation costs of $\Agate_i$ are negligible relative to those of $\Bgate_i$, the controlled unitary acting on $\ket{\xi}$ has a cost comparable to its non-controlled implementation.

\subsection{Measuring controlled unitaries in a specific state}

From \cref{lemma:unitary_decomposition_initial_state}, we can derive similar simplifications for the measurement of a controlled unitary.
These are summarized in the following lemma and corollary, where $ \text{P}_\xi$ is defined to be a unitary that prepares the common eigenstate of $\{\Bgate_i\}_{i=1}^s$, $\ket{\xi}$:
\begin{equation}
 \ket{\xi} = \text{P}_\xi \ket{0}_n.
 \label{eq:Pxi}
\end{equation}

\begin{lemma}\label{lemma:unitary_decomposition_measurement_2}
Let $\genGate$ be defined as in \cref{eq:Udecomp,eq:common-eigvec,eq:Pxi}.
Then, the controlled application of $\genGate$ to an arbitrary $n$-qubit state $\ket{\varphi}$, followed by a measurement post-selected on the outcome $\ket{\xi}$, can be simplified as follows:
\begin{equation}
 \begin{split}
	&{\begin{myqcircuitr}{-0.2}
 & \qw & \ctrl{1} & \qw & \qw \\
 \lstick{\ket{\varphi}\!\!}& {/\strut^{n}}\qw & \gate{\genGate} & \gate{\text{P}_\xi^\dagger} & \meter{} & \cw & 0
	\end{myqcircuitr}} \ \ \raisebox{1ex}{\upshape=} \\[1ex]
	&\qquad{\begin{myqcircuitr}{-0.1}
 & \qw & \ctrl{1} & \qw & \qw &&\cdots &&& \ctrl{1} & \qw & \qw & \qw \\
 \lstick{\ket{\varphi}\!\!}& {/\strut^{n}}\qw & \gate{\Agate_1} & \gate{\Bgate_1} & \qw &&\cdots &&&\gate{\Agate_s} &\gate{\Bgate_s} & \gate{\text{P}_\xi^\dagger} & \meter{} & \cw & 0
	\end{myqcircuitr}\hspace{-1.5em}}
 \end{split}
 \label{eq:gengatedag_controlled_decomposition_2}
\end{equation}
\end{lemma}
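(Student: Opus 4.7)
The plan is to track the joint control-plus-system state through both sides of \cref{eq:gengatedag_controlled_decomposition_2} with the control prepared in a generic superposition $\alpha\ket{0}+\beta\ket{1}$, and then compare the two sides after the final post-selection on $\ket{\xi}$.

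On the left-hand side, the controlled-$\genGate$ produces $\alpha\ket{0}\ket{\varphi}+\beta\ket{1}\genGate\ket{\varphi}$. Applying $\text{P}_\xi^\dagger$ to the system register and post-selecting the measurement outcome $0$ (equivalently, projecting onto $\ket{\xi}$ just before $\text{P}_\xi^\dagger$) yields, up to normalization, the control-register state $\alpha\,\langle\xi|\varphi\rangle\,\ket{0}+\beta\,\bra{\xi}\genGate\ket{\varphi}\,\ket{1}$.

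On the right-hand side, the layer-by-layer calculation used in the proof of \cref{lemma:unitary_decomposition_initial_state}, applied now to the arbitrary input $\ket{\varphi}$ rather than $\ket{\xi}$, gives the pre-measurement joint state $\alpha\ket{0}\bigl(\prod_{i=1}^s \Bgate_i\bigr)\ket{\varphi}+\beta\ket{1}\genGate\ket{\varphi}$, where the $\ket{1}$ branch is immediate from the decomposition \cref{eq:Udecomp} and the $\ket{0}$ branch arises because the (uncontrolled) $\Bgate_i$ factors act on both branches. The $\ket{1}$ branches already coincide with those of the LHS, so after applying $\text{P}_\xi^\dagger$ and projecting, the claim reduces to showing the single scalar identity $\bra{\xi}\bigl(\prod_{i=1}^s \Bgate_i\bigr)\ket{\varphi}=\langle\xi|\varphi\rangle$ for every $\ket{\varphi}$.

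This final step is where the hypothesis \cref{eq:common-eigvec} enters, and it is really a one-line observation: each $\Bgate_i$ is unitary with $\Bgate_i\ket{\xi}=\ket{\xi}$, so $\Bgate_i^\dagger\ket{\xi}=\ket{\xi}$ and therefore $\bra{\xi}\Bgate_i=\bra{\xi}$; iterating across all $s$ factors gives $\bra{\xi}\prod_{i=1}^s \Bgate_i=\bra{\xi}$, which closes the argument. The one subtle point worth stating explicitly is that unitarity is needed precisely to promote the right-eigenstate condition of \cref{eq:common-eigvec} into the left-eigenstate relation that the $\ket{0}$ branch requires after post-selection; aside from that, no real obstacle arises, and the whole proof is essentially a bookkeeping extension of \cref{lemma:unitary_decomposition_initial_state} combined with this dual-eigenstate observation.
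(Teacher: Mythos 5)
Your proposal is correct and follows essentially the same route as the paper's own proof: track a generic control superposition $\alpha\ket{0}+\beta\ket{1}$ through both circuits, reduce the comparison to the $\ket{0}$ branch, and close it with the observation that unitarity of the $\Bgate_i$ upgrades the right-eigenstate condition \cref{eq:common-eigvec} to the left-eigenstate relation $\bra{\xi}\Bgate_i=\bra{\xi}$. The paper makes this same dual-eigenvector point (its \cref{eq:common-eigvec-left}), so there is nothing to add.
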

\begin{proof}
Starting from a generic control-qubit state $\alpha\ket{0}+\beta\ket{1}$, the left-hand side of \cref{eq:gengatedag_controlled_decomposition_2} produces the following joint state before the measurement:
\begin{equation}
\left(\alpha\ket{0}+\beta\ket{1}\right)\ket{\varphi} \longrightarrow
 \alpha\ket{0}\text{P}^\dagger_{\xi}\ket{\varphi} + \beta\ket{1}\text{P}^\dagger_{\xi}\genGate\ket{\varphi}.
\end{equation}
Consequently, when the bottom $n$ qubits are measured in the state $\ket{0}$, we obtain:
\begin{equation}
 \label{eq:unitary_decomposition_measurement_2}
 \alpha\ket{0} \braket{\xi\vert \varphi}+\beta\ket{1} \bra{\xi}\genGate\ket{\varphi}.
\end{equation}

On the other hand, the right-hand side of \cref{eq:gengatedag_controlled_decomposition_2} produces, before measurement, the joint state:
\begin{equation}
 \begin{split}
 \left(\alpha\ket{0}+\beta\ket{1}\right)\ket{\varphi} &\longrightarrow \\
 \alpha\ket{0}\text{P}^\dagger_{\xi}\prod_{i=1}^s &\Bgate_i\ket{\varphi} + \beta\ket{1}\text{P}^\dagger_{\xi}\prod_{i=1}^s \Bgate_i\Agate_i\ket{\varphi},\hspace{-1ex}
 \end{split}
\end{equation}
and after measuring the bottom $n$ qubits in the state $\ket{0}$, we obtain:
\begin{equation} \label{eq:proof_lemma2_0}
 \alpha\ket{0}\bra{\xi}\prod_{i=1}^s \Bgate_i\ket{\varphi} + \beta\ket{1}\bra{\xi}\prod_{i=1}^s \Bgate_i\Agate_i\ket{\varphi}.
\end{equation}
Next, we note that, as $\Bgate_i$ are unitary, they admit identical left and right eigenvectors. Thus, \cref{eq:common-eigvec} is equivalent to:
\begin{align}
 \bra{\xi}\Bgate_i &= \bra{\xi}, & \forall \ i &= 1,\dots,s.
 \label{eq:common-eigvec-left}
\end{align}
Finally, using these properties and the decomposition in \cref{eq:Udecomp}, we obtain:
\begin{align}
 \alpha\ket{0}\bra{\xi}\prod_{i=1}^s \Bgate_i\ket{\varphi} &= \alpha\ket{0} \braket{\xi\vert \varphi}, \label{eq:proof_lemma2_1}\\
 \beta\ket{1}\bra{\xi}\prod_{i=1}^s \Bgate_i\Agate_i\ket{\varphi} &= \beta\ket{1} \bra{\xi}\genGate\ket{\varphi}. \label{eq:proof_lemma2_2}
\end{align}
Inserting \cref{eq:proof_lemma2_1,eq:proof_lemma2_2} into \cref{eq:proof_lemma2_0} yields \cref{eq:unitary_decomposition_measurement_2}, completing the proof.
\end{proof}

\begin{corollary}\label{lemma:unitary_decomposition_measurement}
Let $\genGate$ be defined as in \cref{eq:Udecomp,eq:common-eigvec,eq:Pxi}.
Then, the controlled application of $\genGate^\dagger$ to an arbitrary $n$-qubit state $\ket{\varphi}$, followed by a measurement post-selected on the outcome $\ket{\xi}$, can be simplified as follows:
\begin{equation}
 \begin{split}
	\quad&{\begin{myqcircuitr}{-0.2}
 & \qw & \ctrl{1} & \qw & \qw \\
 \lstick{\ket{\varphi}\!\!}& {/\strut^{n}}\qw & \gate{\genGate^\dagger} & \gate{\text{P}_\xi^\dagger} & \meter{} & \cw & 0
	\end{myqcircuitr}} \ \ \raisebox{1ex}{\upshape=} \\[1ex]
	&\qquad{\begin{myqcircuitr}{-0.1}
 & \qw & \qw & \ctrl{1} & \qw &&\cdots &&& \qw & \ctrl{1} & \qw & \qw \\
 \lstick{\ket{\varphi}\!\!}& {/\strut^{n}}\qw & \gate{\Bgate_s^\dagger} & \gate{\Agate_s^\dagger} & \qw &&\cdots &&&\gate{\Bgate_1^\dagger} &\gate{\Agate_1^\dagger} & \gate{\text{P}_\xi^\dagger} & \meter{} & \cw & 0
	\end{myqcircuitr}\hspace{-1em}}
 \end{split}
 \label{eq:gengatedag_controlled_decomposition}
\end{equation}
\end{corollary}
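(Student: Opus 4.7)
The plan is to derive \cref{lemma:unitary_decomposition_measurement} as the operator-adjoint of \cref{lemma:unitary_decomposition_initial_state}. The reason I would \emph{not} proceed by substituting $\genGate\to\genGate^\dagger$ in \cref{lemma:unitary_decomposition_measurement_2} is that within each pair of the factorization $\genGate^\dagger=\Agate_1^\dagger\Bgate_1^\dagger\cdots\Agate_s^\dagger\Bgate_s^\dagger$ the eigenstate-preserving factor $\Bgate_i^\dagger$ is applied \emph{before} $\Agate_i^\dagger$, which is the opposite of the ordering required by \cref{eq:Udecomp}. Thus $\genGate^\dagger$ does not fit the hypothesis of \cref{lemma:unitary_decomposition_measurement_2} out of the box, and the cleanest route is instead to adjoint \cref{lemma:unitary_decomposition_initial_state}, which already has the correct within-pair ordering.

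First I would recast \cref{lemma:unitary_decomposition_initial_state} as the operator identity
\[
 (\text{C-}\genGate)(I\otimes\ket{\xi}) = (I\otimes\Bgate_s)(\text{C-}\Agate_s)\cdots(I\otimes\Bgate_1)(\text{C-}\Agate_1)(I\otimes\ket{\xi})
\]
on the joint Hilbert space of control qubit and bottom register. Taking the Hermitian conjugate of both sides and using the elementary identity $(\text{C-}V)^\dagger=\text{C-}V^\dagger$ for any unitary $V$ yields
\[
 (I\otimes\bra{\xi})(\text{C-}\genGate^\dagger) = (I\otimes\bra{\xi})(\text{C-}\Agate_1^\dagger)(I\otimes\Bgate_1^\dagger)\cdots(\text{C-}\Agate_s^\dagger)(I\otimes\Bgate_s^\dagger).
\]

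Second, I would translate this operator identity back into the circuit language of \cref{eq:gengatedag_controlled_decomposition}. Applying $\text{P}_\xi^\dagger$ to the bottom register and post-selecting on $\ket{0}$ realizes the projector $I\otimes\bra{\xi}$, so the LHS of the displayed identity is exactly what the LHS circuit of \cref{eq:gengatedag_controlled_decomposition} computes. Reading the right-hand product right-to-left gives the circuit sequence $\Bgate_s^\dagger,\Agate_s^\dagger,\Bgate_{s-1}^\dagger,\Agate_{s-1}^\dagger,\ldots,\Bgate_1^\dagger,\Agate_1^\dagger$, with only the $\Agate_i^\dagger$ factors controlled, which is precisely the RHS circuit of \cref{eq:gengatedag_controlled_decomposition}.

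The only step requiring care is keeping the bookkeeping straight under the adjoint: the global gate order reverses, while each gate's controlled structure is preserved because the control is a Hermitian projector on the top qubit that commutes with taking adjoints on the target register. As a cross-check I would also carry out, in the style of the proof of \cref{lemma:unitary_decomposition_measurement_2}, a direct verification on a generic control state $\alpha\ket{0}+\beta\ket{1}$: both circuits should produce the post-measurement state $\alpha\ket{0}\braket{\xi|\varphi}+\beta\ket{1}\bra{\xi}\genGate^\dagger\ket{\varphi}$, using the adjoint eigenvector relation $\bra{\xi}\Bgate_i^\dagger=\bra{\xi}$ to collapse the $\ket{0}$-branch and the telescoping $\prod_{i=s}^{1}\Agate_i^\dagger\Bgate_i^\dagger=\genGate^\dagger$ to recover the $\ket{1}$-branch. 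I do not anticipate any serious obstacle beyond this bookkeeping.
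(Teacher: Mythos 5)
Your proof is correct, but it takes a genuinely different route from the paper's. The paper proves \cref{lemma:unitary_decomposition_measurement} as a one-line instantiation of \cref{lemma:unitary_decomposition_measurement_2}, taking $\genGate^\dagger$ as the target unitary and noting that the $\Bgate_i^\dagger$ fix $\ket{\xi}$ because a unitary shares eigenvectors with its adjoint. Your objection---that $\genGate^\dagger = \Agate_1^\dagger\Bgate_1^\dagger\cdots\Agate_s^\dagger\Bgate_s^\dagger$ has the eigenstate-preserving factor applied \emph{before} the cheap factor within each pair, so it does not literally match the form of \cref{eq:Udecomp}---is a fair reading of the hypothesis, and indeed the paper leaves the required re-indexing implicit (its displayed adjoint decomposition is even slightly garbled in its final factor). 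But it is not a genuine obstruction: padding with identities, e.g.\ $\Agate_1' \equiv \Ip$, $\Bgate_i' \equiv \Bgate_{s-i+1}^\dagger$, $\Agate_{i+1}' \equiv \Agate_{s-i+1}^\dagger$, $\Bgate_{s+1}' \equiv \Ip$, puts $\genGate^\dagger$ in the form of \cref{eq:Udecomp} with $s+1$ pairs, after which \cref{lemma:unitary_decomposition_measurement_2} applies verbatim and the trivial controlled-identity and identity factors drop out, reproducing \cref{eq:gengatedag_controlled_decomposition}. Your alternative---recasting \cref{lemma:unitary_decomposition_initial_state} as the isometry identity $(\text{C-}\genGate)(\Ip\otimes\ket{\xi}) = (\Ip\otimes\Bgate_s)(\text{C-}\Agate_s)\cdots(\Ip\otimes\Bgate_1)(\text{C-}\Agate_1)(\Ip\otimes\ket{\xi})$, taking adjoints, and using $(\text{C-}V)^\dagger = \text{C-}V^\dagger$---is sound: the adjoint automatically reverses the global gate order while preserving which factors carry the control, and realizing $\Ip\otimes\bra{\xi}$ as $\text{P}_\xi^\dagger$ followed by post-selection on $\ket{0}_n$ is exactly the paper's interpretation of the measurement. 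What your route buys is that the ordering subtlety disappears entirely and only \cref{lemma:unitary_decomposition_initial_state} is needed; what the paper's route buys is brevity and reuse of the already-proved measurement lemma, including the left-eigenvector relation $\bra{\xi}\Bgate_i = \bra{\xi}$ that your direct cross-check also invokes. Either argument would serve as a complete proof.
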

\begin{proof}
 We use \cref{lemma:unitary_decomposition_measurement_2}, with $\genGate^\dagger$ as the target unitary, which in particular admits the following decomposition:
 \begin{equation}
 \genGate^\dagger = \Agate_1^\dagger \cdot \Bgate_1^\dagger \cdots \Bgate_s^\dagger \cdot \Agate_s^\dagger
 \end{equation}
 instead of the decomposition of $\genGate$ given in \cref{eq:Udecomp}.
 Note that since $\Bgate_i$ are unitary, they share the same eigenstates with their adjoints $\Bgate_i^\dagger$, including the state $\ket{\xi}$.
\end{proof}

In the following sections, we show how these decompositions lead to substantial simplifications in several of the \foxlcu{}'s applications.

\section{The FOQCS-LCU block encoding}\label{sec:foqcs_lcu}
We start by reviewing the Fast One-Qubit Controlled Select Linear Combination of Unitaries (\foxlcu{}) block encoding proposed in \cite{foqcs-lcu-arxiv}.
In contrast to standard LCU, 
which expresses the target matrix as a linear combination of black-box unitaries $\Umat_\m$ \cite{Childs2012, Childs2018,Gilyen2018,sanders2020,Chakraborty2024,Babbush_2019,Kane2025,Babbush2018,Lee2021,loaiza2024,Wan2021,Boyd2024,Chakraborty2024,Georges2025,simon2025,Liu2025nbd,foqcs-lcu-arxiv,Loaiza_2023, Sze_2025}:
\begin{align}\label{eq:def_lcu}
 \Ham &= \sum_{\m=0}^{\M-1}\pscoeff_\m \Umat_\m, &
 \pscoeff_\m &\in \mathbb{C},
\end{align}
\foxlcu{} capitalizes on a structured operator basis. Specifically, without loss of generality and in line with quantum-simulation applications, the unitaries $\Umat_\m$ are chosen to be Pauli strings. Accordingly, we can rewrite \cref{eq:def_lcu} by formally separating the $\Xp$ and $\Zp$ components of each Pauli string \cite{Georges2025,foqcs-lcu-arxiv,Kirby2023exactefficient}:
\begin{equation}\label{eq:lcu_general_check_matrix}
 \Ham = \sum_{\ix=0}^{2^{\n}-1} \sum_{\jz=0}^{2^{\n}-1} \newpscoeff_{\ix\jz} \bigotimes_{\elle=0}^{{\n-1}}\Zp^{\jzbit{\elle}}\Xp^{\ixbit{\elle}},
\end{equation}
where the new coefficients $\newpscoeff_{\ix\jz}$ preserve the sparsity of the original $\alpha_\m$ and absorb an additional factor of the imaginary unit whenever $\ixbit{\elle}=\jzbit{\elle}=1$.

\subsection{High-level \foxlcu{} circuit}
\label{sec:foqcs_lcu_circuit}
A fully general \foxlcu{} block encoding circuit is given by \cite{foqcs-lcu-arxiv}:
\begin{equation}\label{eq:circuit_foqcs_lcu}
\begin{myqcircuit}
\lstick{\ket{0}} & {/\strut^{n}}\qw &\multicgate{1}{\PR}{PRcolor} & \qw &\ctrl{2} & \qw & \qw & \multicgate{1}{\PLdag}{PLcolor} & \meter{} & \cw & 0 \\
\lstick{\ket{0}} & {/\strut^{n}}\qw &\ghost{\PR} & \qw & \qw & \ctrl{1} & \qw & \ghost{\PL} & \meter{} & \cw & 0 \\
\lstick{\ket{\varphi}} & {/\strut^{n}}\qw & \qw & \qw & \targ{} & \control\qw & \qw & \qw & \qw & & \frac{\Ham\ket{\varphi}}{\normtwo{\Ham\ket{\varphi}}} \gategroup{1}{5}{3}{6}{1em}{--} \\
& & && \hspace*{1.75ex}\raisebox{-3ex}{\select{}} & & 
\end{myqcircuit}
\vspace{1ex}
\end{equation}
where the central routine, known as the \select{} oracle, consists of two parallel layers of $n$ \cnot{} and $n$ \cz{} gates, respectively:
\begin{align*}
 {\begin{myqcircuitr}{0.95}
 &{/\strut^{n}}\qw & \ctrl{1} & \qw \\
 &{/\strut^{n}}\qw & \targ{} & \qw
 \end{myqcircuitr}}
 \ &\equiv \quad\,\
 {\begin{myqcircuit*}{-0.1}{0.4}
 & \ctrl{5} & \qw & \ghost{\cdots} & \qw & \qw \\
 & \qw & \ctrl{5} & \ghost{\cdots} & \qw & \qw \\
 &&& \cdots & & \\
 & \qw & \qw & \ghost{\cdots} & \ctrl{5} & \qw \\
 &&&&& \\
 & \targ{} & \qw & \ghost{\cdots} & \qw & \qw \\
 & \qw & \targ{} & \ghost{\cdots} & \qw & \qw \\
 &&& \cdots & & \\
 & \qw & \qw & \ghost{\cdots} & \targ{} & \qw
 \inputgroupv{1}{4}{0.8em}{1.2em}{n\hspace*{-1em}}
 \inputgroupv{6}{9}{0.8em}{1.2em}{n\hspace*{-1em}}
 \end{myqcircuit*}} \\[1em]
 {\begin{myqcircuit*}{1.6}{0.33}
 & \qw &{/\strut^{n}}\qw & \qw & \qw & \ctrl{1} & \qw & \qw & \qw \\
 & \qw &{/\strut^{n}}\qw & \qw & \qw & \ctrl{0} & \qw & \qw & \qw
 \end{myqcircuit*}}
 \ &\equiv \quad\,\
 {\begin{myqcircuit*}{-0.1}{0.75}
 & \ctrl{5} & \qw & \ghost{\cdots} & \qw & \qw \\
 & \qw & \ctrl{5} & \ghost{\cdots} & \qw & \qw \\
 &&& \cdots & & \\
 & \qw & \qw & \ghost{\cdots} & \ctrl{5} & \qw \\
 &&&&& \\
 & \ctrl{0} & \qw & \ghost{\cdots} & \qw & \qw \\
 & \qw & \ctrl{0} & \ghost{\cdots} & \qw & \qw \\
 &&& \cdots & & \\
 & \qw & \qw & \ghost{\cdots} & \ctrl{0} & \qw
 \inputgroupv{1}{4}{0.8em}{1.2em}{n\hspace*{-1em}}
 \inputgroupv{6}{9}{0.8em}{1.2em}{n\hspace*{-1em}}
 \end{myqcircuit*}}
\end{align*}
In this paper, we employ a slightly different convention for the state-preparation oracles, \PR{} and \PL{}, acting on the $2n$ ancilla qubits, compared to that in~\cite{foqcs-lcu-arxiv}. More precisely,
\begin{align} 
 \PR\ket{0}_{2n} &= \frac{1}{\sqrt{\normfact}} \sum_{\ix=0}^{2^{\n}-1} \sum_{\jz=0}^{2^{\n}-1} e^{i\arg(\newpscoeff_{\ix\jz})}\sqrt{\abs{\newpscoeff_{\ix\jz}}} \ket{\ix}\ket{\jz}, \label{eq:pr_def}\\
 \PL\ket{0}_{2n} &= \frac{1}{\sqrt{\normfact}} \sum_{\ix=0}^{2^{\n}-1} \sum_{\jz=0}^{2^{\n}-1} {\sqrt{\abs{\newpscoeff_{\ix\jz}}}} \ket{\ix}\ket{\jz}, \label{eq:pl_def}
\end{align}
where phase information of the coefficients $\newpscoeff_{ij}$ is encoded solely in \PR{} and with an overall normalization factor: 
\begin{equation}\label{eq:def_foxlcu_norm}
 \normfact = \sum_{i=0}^{2^n-1}\sum_{j=0}^{2^n-1} \abs{\newpscoeff_{ij}} .
\end{equation}
Since the success probability of a block encoding, i.e., the probability that each ancilla qubit is measured in $\ket0$, is inversely proportional to the square of the normalization, one might be concerned that the \foxlcu{} normalization $\normfact$ is larger than for the standard LCU. In fact, for the standard LCU (\cref{eq:def_lcu}) with $U_m$ taken to be Pauli strings, the corresponding normalization factor is precisely $\normfact$. 
Indeed, the coefficients $\newpscoeff_{\ix\jz}$ possess the same sparsity pattern as $\pscoeff_m$ and differ only by phase factors.

A naive implementation of \PR{} and \PL{} amounts to preparing an arbitrary $2n$-qubit state, which in general entails an exponential overhead.
However, by making use of the preparation of Dicke states~\cite{Dicke_1954,Bartschi_2019, Bartschi_2022, Piroli2024, yu2024, Farrell2025}, we can obtain efficient implementations for a variety of Hamiltonians, including the one-dimensional Heisenberg and spin glass models~\cite{foqcs-lcu-arxiv}.

\subsection{Controlling \foxlcu{}}
\label{sec:control_foqcs_lcu}

Now we apply the theory in \cref{sec:genral_control_theory} to \foxlcu{}. In particular, we demonstrate how to control the entire block-encoding circuit by exploiting its factorization into \PR{}, \select{}, and \PLdag{}.
For the circuit in \cref{eq:circuit_foqcs_lcu}, it suffices to control only the \PR{} and \PLdag{} oracles, as formalized in the following theorem.

\begin{theorem}\label{theo:select_no_control}
 Let {\upshape$\Ufox$} be the unitary circuit implementing the \foxlcu{} block encoding defined in \cref{eq:circuit_foqcs_lcu}.
 If the $2n$ ancilla qubits are prepared 
 in the state $\ket{0}_{2n}$, then the controlled application of {\upshape$\Ufox$} can be simplified for any initial system state $\ket{\varphi}$ by controlling only the {\upshape\PR{}} and {\upshape\PLdag{}} oracles:
 {\upshape\begin{equation}\label{eq:circuit_control_foqcs_lcu}
 \quad
 \begin{myqcircuit}
 &\qw & \ctrl{1} & \qw \\
 \lstick{\ket{0}} &{/\strut^{n}}\qw &\multigate{2}{\Ufox} & \qw \\
 \lstick{\ket{0}} & {/\strut^{n}}\qw &\ghost{\Ufox} & \qw \\
 \lstick{\ket{\varphi}} & {/\strut^{n}}\qw & \ghost{\Ufox} & \qw
 \end{myqcircuit}
 \; = \qquad
 \begin{myqcircuit}
 & \qw & \ctrl{1} & \qw & \qw & \ctrl{1} & \qw & \\
 \lstick{\ket{0}} & {/\strut^{n}}\qw &\multicgate{1}{\PR}{PRcolor} & \ctrl{2} & \qw & \multicgate{1}{\PLdag}{PLcolor} & \qw & \\
 \lstick{\ket{0}} & {/\strut^{n}}\qw &\ghost{\PR} & \qw & \ctrl{1} & \ghost{\PL} & \qw & \\
 \lstick{\ket{\varphi}} & {/\strut^{n}}\qw & \qw & \targ{} & \control\qw & \qw & \qw & \\
 \end{myqcircuit}\hspace*{-1.25em}
 \end{equation}}
\end{theorem}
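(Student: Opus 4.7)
The plan is to derive the theorem as a direct application of \cref{lemma:unitary_decomposition_initial_state}, once a suitable decomposition of $\Ufox$ and a common eigenstate of its ``uncontrolled'' factors have been identified. Writing the \foxlcu{} unitary as $\Ufox = \PLdag \cdot \select \cdot \PR$, I would match it to the form of \cref{eq:Udecomp} with $s=2$ by setting $\Agate_1 = \PR$, $\Bgate_1 = \select$, $\Agate_2 = \PLdag$, and $\Bgate_2 = \mathbb{I}$, so that $\Bgate_2 \Agate_2 \Bgate_1 \Agate_1 = \PLdag \cdot \select \cdot \PR = \Ufox$.

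Next, I would identify the common eigenstate on which the lemma is to be applied. Because the theorem fixes the $2n$ ancilla qubits in $\ket{0}_{2n}$ while leaving the system register in an arbitrary $\ket{\varphi}$, the natural candidate is $\ket{\xi}_\varphi := \ket{0}_{2n} \otimes \ket{\varphi}$. The relation $\Bgate_2\ket{\xi}_\varphi = \ket{\xi}_\varphi$ is trivial. The nontrivial check is $\select\ket{\xi}_\varphi = \ket{\xi}_\varphi$; this follows directly from the explicit structure of \select{} recalled in \cref{sec:foqcs_lcu_circuit} as a parallel layer of \cnot{} gates followed by a parallel layer of \cz{} gates, each with its control on one of the ancilla qubits. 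With every ancilla in $\ket{0}$, none of these controls triggers, so \select{} acts as the identity on every state of the form $\ket{0}_{2n} \otimes \ket{\varphi}$, verifying \cref{eq:common-eigvec} for the decomposition above.

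Invoking \cref{lemma:unitary_decomposition_initial_state} then yields, for each $\ket{\varphi}$, the circuit equivalence in which only $\Agate_1 = \PR$ and $\Agate_2 = \PLdag$ are controlled while $\Bgate_1 = \select$ and $\Bgate_2 = \mathbb{I}$ remain uncontrolled. Dropping the vacuous $\mathbb{I}$ factor, this is exactly the right-hand side of \cref{eq:circuit_control_foqcs_lcu}. Since the eigenstate condition holds uniformly in $\ket{\varphi}$, linearity extends the identity from each fixed $\ket{\xi}_\varphi$ to the entire subspace in which the ancillas start in $\ket{0}_{2n}$, which is all that the theorem claims.

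I do not anticipate any serious obstacle here: the argument is essentially a repackaging of \cref{lemma:unitary_decomposition_initial_state}. The one point worth flagging is conceptual rather than technical---the lemma is stated for a single reference vector $\ket{\xi}$, whereas here it is effectively applied to the entire family $\{\ket{0}_{2n}\otimes\ket{\varphi}\}_{\varphi}$. This is legitimate because $\Bgate_1 = \select$ and $\Bgate_2 = \mathbb{I}$ act as the identity on that whole subspace, but a careful writeup should make this uniformity (and the implicit use of linearity to piece together the superpositions of $\ket{\varphi}$) explicit.
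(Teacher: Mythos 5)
Your proposal is correct and matches the paper's own proof essentially step for step: the paper likewise invokes \cref{lemma:unitary_decomposition_initial_state} with $s=2$, setting $\Agate_1 \equiv \PR\otimes\Ip_n$, $\Bgate_1 \equiv \select$, $\Agate_2 \equiv \PLdag\otimes\Ip_n$, $\Bgate_2 \equiv \Ip_{3n}$, after observing that \select{} fixes every state of the form $\ket{0}_{2n}\ket{\varphi}$. Your explicit verification of the eigenstate property from the \cnot{}/\cz{} structure, and your remark that the lemma is applied uniformly over the family $\{\ket{0}_{2n}\otimes\ket{\varphi}\}_\varphi$, are sound refinements of details the paper leaves implicit.
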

\begin{proof}
 We first note that, by construction, the \select{} oracle in \foxlcu{} acts trivially on states of the form $\ket{0}_{2n}\ket{\varphi}$, where $\ket{\varphi}$ is an arbitrary $n$-qubit state:
\begin{align}\label{eq:control_select_trivial}
 \select \ket{0}_{2n}\ket{\varphi} &= \ket{0}_{2n}\ket{\varphi}, & \forall &\ket{\varphi}.
\end{align}
That is, $\ket{0}_{2n}\ket{\varphi}$ is an eigenstate of the \select{} unitary.

Applying \cref{lemma:unitary_decomposition_initial_state}, with $s = 2$, to the decomposition defined by: 
\begin{align}
 \Agate_1 &\equiv \PR\otimes\Ip_n, & \Bgate_1 &\equiv \select, \nonumber \\
 \Agate_2 &\equiv \PLdag \otimes \Ip_n, & \Bgate_2 &\equiv \Ip_{3n},
\end{align}
proves \cref{eq:circuit_control_foqcs_lcu}.
\end{proof}

We next turn to the controlled \PR{} and \PLdag{} oracles. By additionally applying \cref{lemma:unitary_decomposition_initial_state} to these state-preparation oracles, we derive efficient controlled implementations under the assumption stated below, which is satisfied by all \PR{} and \PL{} oracles in both \cite{foqcs-lcu-arxiv} and \cref{sec:applications}.

\begin{assumption}
\label{ass:PR-PL}
The oracles {\upshape\PR{}} and {\upshape\PL{}} admit unitary decompositions of the form:
{\upshape\begin{align}
 \PR &\equiv \PRmod\cdot\PRc, & 
 \PL &\equiv \PLmod\cdot\PLc, \label{eq:pl_ab}
\end{align}}%
where {\rm \PRmod} 
and {\rm \PLmod}
have the all-zero state $\ket{0}_{2n}$ as an eigenstate corresponding to the eigenvalue $1$:
{\upshape\begin{align}
 \PRmod\ket{0}_{2n} &= \ket{0}_{2n}, & 
 \PLmod\ket{0}_{2n} &= \ket{0}_{2n}, \label{eq:bp_eigenstate}
\end{align}}%
and {\upshape\PRc{}} and {\upshape\PLc{}} consist of only $\bigO{1}$ single-qubit gates.
\end{assumption}

Note that the unitary decompositions in \cref{ass:PR-PL} are a special case, specifically with $s=1$ and $\ket{\xi} = \ket{0}_{2n}$, of \cref{lemma:unitary_decomposition_initial_state}:
\begin{equation}
 {\begin{myqcircuitc}{0.33}
 & \qw & \qw & \qw & \qw & \qw & \ctrl{1} & \qw & \qw \\
 \lstick{\ket{0}} & \qw & \qw & {/\strut^{2n}}\qw & \qw & \qw & \cgate{\PR}{PRcolor} & \qw & \qw
 \end{myqcircuitc}}
 \ \raisebox{0.5ex}= \qquad
 {\begin{myqcircuit*}{0.1}{0.33}
 & \qw & \qw & \qw & \qw & \qw & \ctrl{1} & \qw & \qw & \qw & \qw \\
 \lstick{\ket{0^{}}} & \qw & \qw & {/\strut^{2n}}\qw & \qw & \qw & \gate{\PRc} & \qw & \cgate{\PRmod}{PRucolor} & \qw & \qw
 \end{myqcircuit*}} \label{eq:pr_control_uncontrol}
 \end{equation}
 and \cref{lemma:unitary_decomposition_measurement}:
\begin{equation}
 {\begin{myqcircuit*}{0}{0.33}
 & \qw & \qw & \qw & \qw & \qw & \ctrl{1} & \qw & \qw \\
 & \qw & \qw & {/\strut^{2n}}\qw & \qw & \qw & \cgate{\PLdag}{PLcolor} & \qw & \meter{} & \cw & \cw & & 0
 \end{myqcircuit*}}
 \ \ \raisebox{1ex}= \,\
 {\begin{myqcircuit*}{-0.1}{0.33}
 & \qw & \qw & \qw & \qw & \qw & \qw & \qw & \ctrl{1} & \qw & \qw \\
 & \qw & \qw & {/\strut^{2n}}\qw & \qw & \qw & \cgate{\PLdagmod}{PLucolor} & \qw & \gate{\PLcdag} & \qw & \meter{} & \cw & \cw & & 0
 \end{myqcircuit*}} \label{eq:pldag_control_uncontrol}
\end{equation}

Assuming \cref{ass:PR-PL} holds, we obtain the following corollary as a direct consequence of \cref{theo:select_no_control}.

\begin{corollary}\label{corol:control_foqcs}
 Suppose {\upshape\PR{}} and {\upshape\PL{}} satisfy \cref{ass:PR-PL}. Then the circuit in \cref{eq:circuit_control_foqcs_lcu} can be simplified as follows:
 {\upshape\begin{equation}\label{eq:control_foxlcu_decomposed}
 \qquad\begin{myqcircuit}
 &\qw &\ctrl{1} & \qw & \qw & \qw & \qw &\ctrl{1} & \qw\\
 \lstick{\ket{0}} & {/\strut^{n}}\qw & \multigate{1}{\PRc} & \multicgate{1}{\PRmod}{PRucolor} & \ctrl{2} & \qw & \multicgate{1}{\PLdagmod{}}{PLucolor} & \multigate{1}{\PLcdag} & \qw & \\
 \lstick{\ket{0}} & {/\strut^{n}}\qw & \ghost{\PRc} & \ghost{\PRmod} & \qw &\ctrl{1} & \ghost{\PLdagmod} & \ghost{\PLcdag} & \qw & \\
 \lstick{\ket{\varphi}} & {/\strut^{n}}\qw & \qw & \qw & \targ{} & \ctrl{0} & \qw & \qw & \qw 
 \end{myqcircuit} 
 \end{equation}}
\end{corollary}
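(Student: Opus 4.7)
The plan is to invoke \cref{lemma:unitary_decomposition_initial_state} once more, using \cref{ass:PR-PL} to further factorize the controlled \PR{} and controlled \PLdag{} oracles that survive the application of \cref{theo:select_no_control}. Equivalently, we can apply \cref{lemma:unitary_decomposition_initial_state} directly to the full unitary $\Ufox$ with a four-pair decomposition induced by \cref{ass:PR-PL}, so the whole argument becomes a single invocation of the lemma.

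First I would use \cref{ass:PR-PL} to expand the block encoding as
\begin{equation*}
 \Ufox = \bigl[(\PLcdag \cdot \PLdagmod) \otimes \Ip_n\bigr] \cdot \select \cdot \bigl[(\PRmod \cdot \PRc) \otimes \Ip_n\bigr],
\end{equation*}
and then apply \cref{lemma:unitary_decomposition_initial_state} with $s=4$, common eigenstate $\ket{\xi} = \ket{0}_{2n}\ket{\varphi}$, and the pairing
\begin{align*}
 \Agate_1 &= \PRc \otimes \Ip_n,    & \Bgate_1 &= \PRmod \otimes \Ip_n, \\
 \Agate_2 &= \Ip_{3n},              & \Bgate_2 &= \select, \\
 \Agate_3 &= \Ip_{3n},              & \Bgate_3 &= \PLdagmod \otimes \Ip_n, \\
 \Agate_4 &= \PLcdag \otimes \Ip_n, & \Bgate_4 &= \Ip_{3n}.
\end{align*}
The required eigenstate condition $\Bgate_i\ket{\xi} = \ket{\xi}$ would be checked case by case: for $\Bgate_1$ directly from \cref{ass:PR-PL}; for $\Bgate_2$ from \cref{eq:control_select_trivial}, already established in the proof of \cref{theo:select_no_control}; for $\Bgate_3$ by combining \cref{ass:PR-PL} with the unitarity of $\PLmod$; and trivially for $\Bgate_4$. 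The lemma then yields a controlled circuit with controlled $\Agate_i$ and uncontrolled $\Bgate_i$. After discarding the two trivial controlled-identity operations for $\Agate_2$ and $\Agate_3$, this is exactly the circuit in \cref{eq:control_foxlcu_decomposed}.

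The only mildly subtle step is verifying that $\PLdagmod$ inherits $\ket{0}_{2n}$ as a $+1$ eigenstate from $\PLmod$. This is the standard adjoint-of-a-unitary argument that a unitary and its adjoint share every eigenvector whose eigenvalue is $1$, already used implicitly in the transition between \cref{eq:common-eigvec} and \cref{eq:common-eigvec-left} during the proof of \cref{lemma:unitary_decomposition_measurement_2}. Beyond this small bookkeeping step, I do not anticipate any further obstacles: the corollary is essentially a repackaging of \cref{lemma:unitary_decomposition_initial_state} applied at finer granularity.
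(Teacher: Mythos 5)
Your proposal is correct and follows essentially the same route as the paper: the paper's proof also invokes \cref{lemma:unitary_decomposition_initial_state} after splitting \PR{} and \PL{} via \cref{ass:PR-PL}, only with $s=2$, absorbing $\PLdagmod \cdot \select \cdot \PRmod$ into a single $\Bgate_1$ that fixes $\ket{0}_{2n}\ket{\phi}$, so your $s=4$ version with identity $\Agate_2,\Agate_3$ is just a finer-grained but equivalent bookkeeping of the same decomposition. The eigenstate checks you list---\cref{eq:control_select_trivial} for \select{} and the unitarity argument giving $\PLdagmod\ket{0}_{2n}=\ket{0}_{2n}$---are precisely the facts the paper relies on.
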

\begin{proof}
 The proof mirrors that of \cref{theo:select_no_control}.  Under \cref{ass:PR-PL}, we split $\PR$ into two parts: $\PRc$, which is composed of $\bigO{1}$ single-qubit gates, and $\PRmod$, which has $\ket{0}_{2n}$ as an eigenstate.
 Likewise, $\PL$ can be decomposed into two gates, namely $\PLc$ and $\PLmod$, with the same properties as $\PRc$ and $\PRmod$, respectively.
 Then, as a consequence of \cref{eq:control_select_trivial}, the composite unitary $\PRmod,\select,\PLdagmod$ leaves the state $\ket{0}_{2n}\ket{\phi}$ invariant for any $n$-qubit state $\ket{\phi}$:
 \begin{equation}
 \PLdagmod \cdot \select \cdot \PRmod \ket{0}_{2n}\ket{\phi} = \ket{0}_{2n}\ket{\phi}.
 \end{equation}
 Applying \cref{lemma:unitary_decomposition_initial_state}, with $s = 2$, to the decomposition defined by:
 \begin{align}
 \Agate_1 &\equiv \PRc\otimes\Ip_n, & \Bgate_1 &\equiv \PLdagmod \cdot \select \cdot \PRmod, \nonumber\\
 \Agate_2 &\equiv \PLcdag \otimes \Ip_n, & \Bgate_2 &\equiv \Ip_{3n},
 \end{align}
 proves \cref{eq:control_foxlcu_decomposed}.
\end{proof}

\renewcommand{\PRu}[0]{\ensuremath{\PRmod}}
\renewcommand{\PRudag}[0]{\ensuremath{\PRmod^\dagger}}
\newcommand{\PLu}[0]{\ensuremath{\PLmod}}
\newcommand{\PLudag}[0]{\ensuremath{\PLmod^\dagger}}

\section{Matrix polynomials using \foxlcu{}} \label{sec:transformations_general}

In this section, we describe how to explicitly implement matrix polynomials of a matrix $\Ham$ within the \foxlcu{} block-encoding framework.
We begin in \cref{sec:product of fox matrices} by constructing products of \foxlcu{} block encodings, which enables efficient implementation of
powers of $\Ham$. In \cref{sec:poly_general}, we generalize this approach to realize an arbitrary matrix polynomial in $\Ham$.
Finally, in \cref{sec:control_simp_poly}, we show how the simplifications introduced in \cref{sec:genral_control_theory} can be leveraged to reduce the overhead required to control the entire circuit implementing a matrix polynomial in $\Ham$.

\subsection{Block encodings for products of matrices}\label{sec:product of fox matrices}
Here, we lift the \foxlcu{} formalism by introducing a circuit for the product of block-encoded matrices.
We start by considering the product of two \foxlcu{} block encodings, which can be implemented as described in the following theorem.
\begin{theorem} \label{lemma:product}
Let $\Mat{1}$ and $\Mat{2}$ be two $n$-qubit matrices written as a linear combination of Pauli strings:
{\upshape\begin{align}
 \Mat{1} &= \sum_{i,j=0}^{2^n-1}\alpha_{ij}\bigotimes_{\ell= 0}^{n-1}Z^{\texttt{j}_\ell}X^{\texttt{i}_\ell},\\
 \Mat{2} &= \sum_{p,q=0}^{2^n-1}\beta_{pq}\bigotimes_{\ell= 0}^{n-1}Z^{\texttt{q}_\ell}X^{\texttt{p}_\ell}.
\end{align}}%
Then the product matrix $C = \Mat{2} \cdot \Mat{1}$ can be written as:
{\upshape\begin{equation}\label{product of 2 be}
 C = \sum_{i,j,p,q= 0}^{2^n-1}\alpha_{ij}\beta_{pq}\bigotimes_{\ell= 0}^{n-1}Z^{\texttt{q}_\ell}X^{\texttt{p}_\ell}Z^{\texttt{j}_\ell}X^{\texttt{i}_\ell}, 
\end{equation}}%
and a block encoding of $C$ can be realized by the circuit:
{\upshape\begin{equation}\label{circuit: product of 2}
\hspace*{2ex}\begin{myqcircuit}
\lstick{\ket{0}} & {/\strut^{n}}\qw &\multicgate{1}{\PRindex{\Mat{1}}}{PRcolor} & \ctrl{4} & \qw & \qw & \qw & \multicgate{1}{\PLindexdag{\Mat{1}}}{PLcolor} & \meter{} & \cw & 0 \\
\lstick{\ket{0}} & {/\strut^{n}}\qw &\ghost{\PRindex{\Mat{1}}} & \qw & \ctrl{3} & \qw & \qw & \ghost{\PLindexdag{\Mat{1}}} & \meter{} & \cw & 0 \\
\lstick{\ket{0}} & {/\strut^{n}}\qw &\multicgate{1}{\PRindex{\Mat{2}}}{PRcolor} & \qw & \qw & \ctrl{2} & \qw & \multicgate{1}{\PLindexdag{\Mat{2}}}{PLcolor} & \meter{} & \cw & 0 \\
\lstick{\ket{0}} & {/\strut^{n}}\qw &\ghost{\PRindex{\Mat{2}}} & \qw & \qw & \qw & \ctrl{1} & \ghost{\PLindexdag{\Mat{2}}} & \meter{} & \cw & 0 \\
\lstick{\ket{\varphi}} & {/\strut^{n}}\qw & \qw & \targ{} & \ctrl{0} & \targ{} & \ctrl{0} & \qw & \qw & & \frac{C\ket{\varphi}}{\norm{C\ket{\varphi}}_2} \\
\end{myqcircuit}
\end{equation}}%
where {\upshape$\PRindex{\Mat{1}}$}, {\upshape$\PLindex{\Mat{1}}^{\dagger}$}, {\upshape$\PRindex{\Mat{2}}$}, and {\upshape$\PLindex{\Mat{2}}^{\dagger}$} denote the state-preparation oracles defined in \cref{eq:pr_def,eq:pl_def}, associated with the \foxlcu{} block encodings of $\Mat{1}$ and $\Mat{2}$, respectively.
\end{theorem}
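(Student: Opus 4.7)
The plan is to split the claim into two independent parts: the product-of-Paulis expansion \cref{product of 2 be}, and the correctness of the block-encoding circuit \cref{circuit: product of 2}. The first part is purely algebraic: I would distribute $\Mat{2}\cdot\Mat{1}$ over the two sums by bilinearity and observe that the tensor product structure of the Pauli strings makes the overall product factor as a position-wise composition $Z^{\texttt{q}_\ell}X^{\texttt{p}_\ell}Z^{\texttt{j}_\ell}X^{\texttt{i}_\ell}$ on each of the $n$ qubits. No Pauli simplification is required; \cref{product of 2 be} is just a regrouping of indices.

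For the second part, the approach is to recognise \cref{circuit: product of 2} as a rearrangement of two independent \foxlcu{} block encodings of the form \cref{eq:circuit_foqcs_lcu}: one for $\Mat{1}$ on ancilla registers $1,2$ and one for $\Mat{2}$ on ancilla registers $3,4$, both acting on the same system register. Denote these two unitaries by $\Ufox_{\Mat{1}}$ and $\Ufox_{\Mat{2}}$, each silently extended by identity on the other ancilla pair, and let $S^{(1)}, S^{(2)}$ denote their respective select oracles. Reading the circuit left to right, it equals
\begin{equation*}
\bigl(\PLindex{\Mat{1}}^{\dagger}\!\otimes\!\PLindex{\Mat{2}}^{\dagger}\!\otimes\!\Ip_n\bigr)\,S^{(2)}\,S^{(1)}\,\bigl(\PRindex{\Mat{1}}\!\otimes\!\PRindex{\Mat{2}}\!\otimes\!\Ip_n\bigr).
\end{equation*}
Since $\PRindex{\Mat{2}}$ and $\PLindex{\Mat{2}}^{\dagger}$ act only on registers $3,4$, they commute with $S^{(1)}$ and with the $\Mat{1}$-preparation oracles; a symmetric statement holds for the $\Mat{1}$-oracles versus $S^{(2)}$. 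Repeated application of these disjoint-register commutations rewrites the above as the strict product $\Ufox_{\Mat{2}}\cdot\Ufox_{\Mat{1}}$. Crucially, $S^{(1)}$ and $S^{(2)}$ are \emph{not} interchanged, since they share the system register.

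Applied to $\ket{0}_{2n}\ket{0}_{2n}\ket{\varphi}$, the \foxlcu{} property of $\Ufox_{\Mat{1}}$ established in \cref{sec:foqcs_lcu_circuit} gives
\begin{equation*}
\Ufox_{\Mat{1}}\ket{0}_{2n}\ket{0}_{2n}\ket{\varphi} = \ket{0}_{2n}\ket{0}_{2n}\frac{\Mat{1}\ket{\varphi}}{\normfact_{\Mat{1}}} + \ket{\perp},
\end{equation*}
where $\ket{\perp}$ is orthogonal to $\ket{0}_{2n}$ on ancilla registers $1,2$; since $\Ufox_{\Mat{2}}$ is identity on those registers, it preserves this orthogonality and the final post-selection kills $\ket{\perp}$. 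On the signal branch, $\Ufox_{\Mat{2}}$ sends ancillas $3,4$ back to $\ket{0}_{2n}$ and returns $\Mat{2}\Mat{1}\ket{\varphi}/(\normfact_{\Mat{1}}\normfact_{\Mat{2}})$ on the system, which by \cref{product of 2 be} is $C\ket{\varphi}/(\normfact_{\Mat{1}}\normfact_{\Mat{2}})$. Post-selecting all four ancilla registers on $\ket{0}_{2n}$ therefore realises a block encoding of $C$.

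The main obstacle is not analytical depth but careful bookkeeping: I must justify the disjoint-register commutations precisely (being clear that the two select oracles are the only pair that cannot be commuted) and track the two successive post-selections so that the garbage $\ket{\perp}$ from the first encoding never contaminates the signal of the second. Once this accounting is organised cleanly, the theorem follows directly from the single-matrix \foxlcu{} block-encoding property.
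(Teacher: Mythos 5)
Your proposal is correct, but it takes a genuinely different route from the paper's. The paper's proof (\cref{app: proof lemma}) is a direct state-evolution computation: it expands $\PRindex{\Mat{1}}$ and $\PRindex{\Mat{2}}$ applied to $\ket{0}_{4n}\ket{\varphi}$ in the index basis $\ket{i}\ket{j}\ket{p}\ket{q}$, pushes this superposition through the two \select{} layers explicitly (accumulating $\bigotimes_{\ell} Z^{\texttt{q}_\ell}X^{\texttt{p}_\ell}Z^{\texttt{j}_\ell}X^{\texttt{i}_\ell}$ on the system register), and then contracts with $\bra{0}_{4n}\,\PLindexdag{\Mat{2}}\otimes\PLindexdag{\Mat{1}}$ to read off the non-normalized state $C\ket{\varphi}$ --- so \cref{product of 2 be} \emph{emerges} as the encoded operator rather than being invoked. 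You instead argue structurally: disjoint-register commutations factor the circuit as $\Ufox_{\Mat{2}}\cdot\Ufox_{\Mat{1}}$ (correctly noting that the two \select{} oracles, sharing the system register, are the one pair that cannot be interchanged), after which the claim reduces to the standard composition of block encodings, with the garbage branch $\ket{\perp}$ annihilated by post-selection because $\Ufox_{\Mat{2}}$ acts trivially on the first ancilla pair. Your route presupposes the single-matrix \foxlcu{} block-encoding property with normalization $\normfact$, which the paper asserts in \cref{sec:foqcs_lcu_circuit} (citing \cite{foqcs-lcu-arxiv}) but does not re-derive, whereas the appendix computation is self-contained from the definitions in \cref{eq:pr_def,eq:pl_def}; in exchange, your modular argument makes the extension to $k$ factors (\cref{corollary: products}) immediate without any new bookkeeping, and it cleanly explains \emph{why} the parallel-preparation layout of \cref{circuit: product of 2} is equivalent to running the two encodings in sequence. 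Both arguments are sound and yield the same normalization $\normfact_{\Mat{1}}\normfact_{\Mat{2}}$ for the encoded $C$.
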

\begin{proof}
 See \cref{app: proof lemma}.
\end{proof}

This theorem readily generalizes to products of more than two \foxlcu{} block-encoded matrices, as detailed in the following corollary.
\begin{corollary} \label{corollary: products}
 Let $\{M_s\}_{s= 1}^{k}$ be a set of matrices written as linear combination of Pauli strings: 
 {\upshape\begin{equation}
 M_s = \sum_{i_s,j_s=0}^{2^n-1}\alpha_{i_s j_s}\bigotimes_{\ell= 0}^{n-1}Z^{\texttt{j}_{\ell,s}}X^{\texttt{i}_{\ell,s}}.
 \end{equation}}%
Then the product matrix $C = M_k \cdots \Mat{2} \cdot \Mat{1}$ can be written as follows:
{\upshape\begin{equation}
 \begin{split}
 C = &\sum_{\{i_s=0,j_s=0\}_{s=1}^k}^{2^n-1} \alpha_{i_1 j_1}\alpha_{i_2j_2}\cdots \alpha_{i_k j_k} \\
 &\qquad\qquad\bigotimes_{\ell=0}^{n-1}Z^{\texttt{j}_{\ell,k}}X^{\texttt{i}_{\ell,k}}\cdots Z^{\texttt{j}_{\ell,2}} X^{\texttt{i}_{\ell,2}} Z^{\texttt{j}_{\ell,1}}X^{\texttt{i}_{\ell,1}},\hspace{-1em}
 \end{split}
\end{equation}}%
and a block encoding of $C$ can be realized by the circuit:
{\upshape\begin{equation}\label{circuit: product of k}
\begin{myqcircuit}
\lstick{\ket{0}} & {/\strut^{n}}\qw &\multicgate{1}{\PR^{[1]}}{PRcolor} & \ctrl{6} & \qw & \ghost{\cdots\hspace*{0.5em}\cdots} & \qw & \qw &\multicgate{1}{{\PL^{[1]}}^\dagger}{PLcolor} & \meter{} & \cw & 0 \\
\lstick{\ket{0}} & {/\strut^{n}}\qw &\ghost{\PR^{[1]}} & \qw & \ctrl{5} & \ghost{\cdots\hspace*{0.5em}\cdots} & \qw&\qw & \ghost{{\PL^{[1]}}^\dagger} & \meter{} & \cw & 0 \\
 \lstick{\raisebox{1.5ex}{$\vdots$}\hspace*{1.25ex}} && \cdots &&&\cdots\hspace*{0.5em}\phantom{\cdots} &&& \cdots & & \raisebox{1.5ex}{$\vdots$}\hspace*{1.25ex} & \\
 \lstick{\raisebox{1.5ex}{$\vdots$}\hspace*{1.25ex}} && \cdots &&& \phantom{\cdots}\hspace*{0.5em}\cdots &&& \cdots & & \raisebox{1.5ex}{$\vdots$}\hspace*{1.25ex} & \\ 
\lstick{\ket{0}} & {/\strut^{n}}\qw &\multicgate{1}{\PR^{[k]}}{PRcolor} & \qw & \qw & \ghost{\cdots\hspace*{0.5em}\cdots} & \ctrl{2} & \qw & \multicgate{1}{{\PL^{[k]}}^\dagger}{PLcolor} & \meter{} & \cw & 0 \\
\lstick{\ket{0}} & {/\strut^{n}}\qw &\ghost{\PR^{[k]}} & \qw & \qw& \ghost{\cdots\hspace*{0.5em}\cdots} &\qw& \ctrl{1} & \ghost{{\PL^{[k]}}^\dagger} & \meter{} & \cw & 0 \\
\lstick{\ket{\varphi}} & {/\strut^{n}}\qw & \qw & \targ{} & \ctrl{0} & \push{\hspace*{0.5em}\cdots\hspace*{0.5em}\cdots\hspace*{0.5em}}\qw & \targ{} & \ctrl{0} & \qw & \qw & & \frac{C\ket{\varphi}}{\norm{C\ket{\varphi}}_2} \\
\end{myqcircuit}\hspace{-2em}
\end{equation}}%
where {\upshape$\PR^{[s]}$} and {\upshape${\PL^{[s]}}^\dagger$} denote the state-preparation oracles defined in \cref{eq:pr_def,eq:pl_def}, associated with the \foxlcu{} block encodings of $\{M_s\}_{s=1}^{k}$.
\end{corollary}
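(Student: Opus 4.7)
The plan is to generalize the proof of \cref{lemma:product} from two factors to $k$ factors. The argument naturally splits into two parts: an algebraic identity for $C$, and a calculation tracing the circuit action on $\ket{0}_{2nk}\ket{\varphi}$.

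First, the algebraic identity for $C = M_k \cdots M_2 \cdot M_1$ would follow by direct expansion: substituting each $M_s$'s Pauli decomposition into the product and distributing over the $2k$ summation indices $\{i_s, j_s\}_{s=1}^{k}$ yields exactly the stated expression. The Pauli operators from different factors acting on the same site $\ell$ concatenate into a single tensor-product string per site, and the left-to-right ordering of the factors (with $M_k$ outermost on the left) is preserved throughout.

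Second, I would analyze the circuit (\ref{circuit: product of k}) in three stages. The $k$ parallel $\PR^{[s]}$ oracles, acting on disjoint ancilla registers, produce a product state whose amplitude on $\bigotimes_s \ket{i_s}\ket{j_s}$ equals $\prod_{s=1}^{k} e^{i\arg(\alpha^{[s]}_{i_s j_s})}\sqrt{\abs{\alpha^{[s]}_{i_s j_s}}}/\sqrt{\gamma^{[s]}}$ by \cref{eq:pr_def}. The $k$ select oracles then act sequentially from $s = 1$ to $s = k$, each applying $\bigotimes_\ell Z^{j_{\ell,s}} X^{i_{\ell,s}}$ to the system register conditioned on the $s$-th ancilla pair; because they stack leftward in matrix order, the resulting system-register transformation reproduces exactly the Pauli-string product appearing in the algebraic expansion of $C$. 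Finally, applying each $\PL^{[s]\dagger}$ and post-selecting that pair on $\ket{0}$ projects out the amplitude $\sqrt{\abs{\alpha^{[s]}_{i_s j_s}}}/\sqrt{\gamma^{[s]}}$ (by \cref{eq:pl_def}); combined with the $\PR^{[s]}$ phases, this reassembles the full coefficient $\alpha^{[s]}_{i_s j_s}$. Summing over all index configurations yields $C\ket{\varphi}/\prod_s \gamma^{[s]}$ on the system register, which is the required block-encoding relation.

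The main obstacle is purely notational --- the $2k$ index variables and the $k$ normalization factors multiply independently across different values of $s$, so no new structural ingredient arises beyond the $k = 2$ case already handled in \cref{lemma:product}. A natural alternative would be induction on $k$ with \cref{lemma:product} as the base case, but the inductive hypothesis would supply a block encoding of $M_{k-1} \cdots M_1$ on $2n(k-1)$ ancillas rather than in the standard \foxlcu{} form that \cref{lemma:product} assumes. Reconciling this mismatch is no simpler than the direct bookkeeping sketched above, so I would proceed by direct calculation.
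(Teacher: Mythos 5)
Your proposal is correct and matches the paper's approach exactly: the paper's proof is a one-line appeal to \cref{lemma:product}, stating that the $k$-factor case ``differs only in the inclusion of additional index sets,'' and your direct calculation is precisely the appendix proof of that theorem (\cref{app: proof lemma}) carried out with $k$ ancilla-register pairs --- parallel $\PR^{[s]}$ oracles, sequential \select{} layers stacking the Pauli strings leftward in matrix order, then $\PL^{[s]\dagger}$ with post-selection reassembling each coefficient $\alpha^{[s]}_{i_s j_s}$. Your explicit tracking of the normalization factors $\prod_s \gamma^{[s]}$ is in fact slightly more careful than the paper's appendix, which suppresses them, and your reason for avoiding induction (the inductive hypothesis would not deliver a block encoding in standard \foxlcu{} form) is sound.
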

\begin{proof}
The proof is a straightforward generalization of \cref{lemma:product} and differs only in the inclusion of additional index sets.
\end{proof}

Finally, we show how \cref{corollary: products} applies in the specific case of block encodings of powers of $\Ham$.

\begin{corollary}\label{corol:powers}
 Let $\Ham$ be a matrix expressed as a linear combination of Pauli strings, as in \cref{eq:lcu_general_check_matrix}. Then, a block encoding of $\Ham^k$ can be realized by the circuit:
 {\upshape\begin{equation}\label{circuit: product of k}
 \begin{myqcircuit}
 \lstick{\ket{0}} & {/\strut^{n}}\qw &\multicgate{1}{\PR}{PRcolor} & \ctrl{6} & \qw & \ghost{\cdots\hspace*{0.5em}\cdots} & \qw & \qw &\multicgate{1}{\PLdag}{PLcolor} & \meter{} & \cw & 0 \\
 \lstick{\ket{0}} & {/\strut^{n}}\qw &\ghost{\PR} & \qw & \ctrl{5} & \ghost{\cdots\hspace*{0.5em}\cdots} & \qw&\qw & \ghost{\PLdag} & \meter{} & \cw & 0 \\
 \lstick{\raisebox{1.5ex}{$\vdots$}\hspace*{1.25ex}} && \cdots &&&\cdots\hspace*{0.5em}\phantom{\cdots} &&& \cdots & & \raisebox{1.5ex}{$\vdots$}\hspace*{1.25ex} & \\
 \lstick{\raisebox{1.5ex}{$\vdots$}\hspace*{1.25ex}} && \cdots &&& \phantom{\cdots}\hspace*{0.5em}\cdots &&& \cdots & & \raisebox{1.5ex}{$\vdots$}\hspace*{1.25ex} & \\ 
 \lstick{\ket{0}} & {/\strut^{n}}\qw &\multicgate{1}{\PR}{PRcolor} & \qw & \qw & \ghost{\cdots\hspace*{0.5em}\cdots} & \ctrl{2} & \qw & \multicgate{1}{\PLdag}{PLcolor} & \meter{} & \cw & 0 \\
 \lstick{\ket{0}} & {/\strut^{n}}\qw &\ghost{\PR} & \qw & \qw& \ghost{\cdots\hspace*{0.5em}\cdots} &\qw& \ctrl{1} & \ghost{\PLdag} & \meter{} & \cw & 0 \\
 \lstick{\ket{\varphi}} & {/\strut^{n}}\qw & \qw & \targ{} & \ctrl{0} & \push{\hspace*{0.5em}\cdots\hspace*{0.5em}\cdots\hspace*{0.5em}}\qw & \targ{} & \ctrl{0} & \qw & \qw & & \frac{\Ham^k\ket{\varphi}}{\norm{\Ham^k\ket{\varphi}}_2} \\
 \end{myqcircuit}\hspace{-1em}
 \end{equation}}%
 where {\upshape\PR{}} and {\upshape\PL{}} are defined in \cref{eq:pr_def,eq:pl_def} and are performed, in parallel, $k$ times.
\end{corollary}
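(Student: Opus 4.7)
The plan is to obtain \cref{corol:powers} as a direct specialization of \cref{corollary: products} to the case where every factor is the same matrix $\Ham$. First I would set $M_s = \Ham$ for all $s = 1, \ldots, k$ in \cref{corollary: products}. Under this identification, the coefficients $\alpha_{i_s j_s}$ in each factor become the same set of coefficients $\newpscoeff_{i_s j_s}$ that appear in the \foxlcu{} decomposition of $\Ham$ in \cref{eq:lcu_general_check_matrix}, and the state-preparation oracles $\PR^{[s]}$ and $\PL^{[s]}$ all coincide with the single pair $\PR$ and $\PL$ defined in \cref{eq:pr_def,eq:pl_def} for $\Ham$.

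Next I would verify that the product formula specializes correctly: substituting $M_s = \Ham$ for all $s$ into the expression for $C$ in \cref{corollary: products} yields
\begin{equation}
C = \Ham \cdot \Ham \cdots \Ham = \Ham^k,
\end{equation}
as required. The circuit in \cref{corollary: products} then immediately reduces to the circuit in the statement of the corollary, with the $k$ blocks of state-preparation oracles being identical copies of $(\PR, \PLdag)$ acting on disjoint pairs of $n$-qubit ancilla registers and a single system register. Post-selecting on the all-zero outcome of every ancilla register projects onto the $\Ham^k$ block in precisely the same way as in the general product case, so no additional argument beyond the substitution is needed.

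Since the $k$ pairs of oracles act on disjoint ancilla registers, all copies of $\PR$ (and likewise all copies of $\PLdag$) can be scheduled in parallel, which justifies the parenthetical remark that the oracles are ``performed, in parallel, $k$ times.'' There is no substantive obstacle in this proof: the only thing to check is that the specialization is consistent, which it is by construction. The bulk of the work has already been done in \cref{lemma:product,corollary: products}.
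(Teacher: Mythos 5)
Your proposal is correct and matches the paper's own proof, which likewise obtains \cref{corol:powers} as the special case $M_s \equiv \Ham$, for $s = 1,\ldots,k$, of \cref{corollary: products}. The extra verifications you include (coincidence of the oracles $\PR^{[s]}$, $\PL^{[s]}$ with the single pair \PR{}, \PL{}, and the parallel scheduling on disjoint ancilla registers) are consistent elaborations of the same one-line specialization.
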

\begin{proof}
 This corollary can be viewed as a special case of \cref{corollary: products}, where $M_s \equiv \Ham$, for $s = 1,\ldots,k$.
\end{proof}

\subsection{Block encodings for matrix polynomials}\label{sec:poly_general}

Given block encodings of the matrix powers $\Ham^k$, we now build a \foxlcu{}-type block encoding for their linear combination. 
Concretely, we consider a degree-$d$ matrix polynomial in $\Ham$:
\begin{equation}\label{eq:poly_transform_H}
 p_d(\Ham ) = a_0 I + a_1 \Ham + a_2 \Ham ^2 + \cdots + a_d \Ham ^d,
\end{equation}
with coefficients~$a_k \inC$ for $k = 0,1,\ldots,d$.

To realize an efficient block encoding for $p_d(\Ham)$ under the \foxlcu{} framework, we embed the block encodings of $\Ham^k$ within a second LCU layer.
This \emph{outer} LCU procedure requires the state-preparation oracles \polyR{} and \polyL{} to encode the coefficients $a_k$, as well as a \select{} oracle to activate the corresponding powers of $\Ham$. Crucially, instead of using the standard approach to implement the outer LCU, which would require $\lceil\log_2(d)\rceil$ ancilla qubits, we load the coefficients $a_k$ through a \emph{unary encoding} \cite{Babbush2018,Sze:2025vjh} on $d$ ancilla qubits.
As a consequence, we again avoid multi-controlled unitaries in the \select{} operation for $\Ham^k$.

We first introduce the weighted coefficients associated with the matrix polynomial coefficients $a_k$:
\begin{equation}
 \label{eq:POLY-wk}
 w_k := \sqrt{\abs{a_k}\,\normfact^{k}},
\end{equation}
where $\normfact$ denotes the \foxlcu{} normalization factor defined in \cref{eq:def_foxlcu_norm}.
Next, for $k = 0,\ldots,d-1$, we define the relative phase differences
\begin{equation}
 \label{eq:POLY-phik}
 \phi_k := \arg(a_{k+1}) - \arg(a_k),
\end{equation}
and introduce the angles
\begin{equation}
 \label{eq:POLY-thetak}
 \theta_k := 2\,\arccos\!\left(
 \frac{\abs{w_k}}{\sqrt{1 - \sum_{p=0}^{k-1} \abs{w_p}^2}}
 \right).
\end{equation}
We then define the outer state-preparation oracles, \polyR{} and \polyL{}, respectively, as follows.

\begin{definition}
\label{def:POLY}
Let $a_k \inC$, for $k = 0,1,\ldots,d$, be the coefficients of the matrix polynomial in \cref{eq:poly_transform_H}, and $\phi_k$ and $w_k$ the associated set of real parameters defined in \cref{eq:POLY-phik,eq:POLY-wk}.
Then, we can define the $d$-qubit outer state-preparation oracles\footnote{When $k=0$, we take $\sum_{j=0}^{-1}\phi_j= 0$.}:
{\upshape\begin{align}
\polyR \ket{0}_d &= \frac{1}{\sqrt{\polynormfact}} \sum_{k=0}^{d} e^{i\sum_{j= 0}^{k-1}\phi_j}
w_k \ket{k_u}, \label{eq:def_polyr}
\\ 
\polyL \ket{0}_d &= \frac{1}{\sqrt{\polynormfact}} \sum_{k=0}^{d} w_k \ket{k_u}, \label{eq:def_polyl}
\end{align}}%
where $\polynormfact$ is a normalization factor:
\begin{equation}\label{eq:def_W}
 \polynormfact = \sum_{k=0}^{d} w_k^2 = \sum_{k=0}^{d} \vert a_k \vert \normfact^k, 
\end{equation}
and $\ket{k_u}$ the unary encoding of the index $k$, i.e.:
\begin{equation}\label{eq:unary-ku}
\begin{aligned}
 \ket{0_u} &= \ket{000\cdots0}, \\
 \ket{1_u} &= \ket{100\cdots0}, \\
 \ket{2_u} &= \ket{110\cdots0}, \\[-0.75ex]
 &~\vdots \\
 \ket{d_u} &= \ket{111\cdots1}.
\end{aligned}
\end{equation}
\end{definition}

We also define the following $d$-qubit circuits that will serve as essential subroutines for realizing the two state-preparation oracles \polyR{} and \polyL{}:
\begin{align}
\label{eq:circuit_CRY_ladder}
\hspace*{-1ex}{\begin{myqcircuit}
 & {/\strut^{d}}\qw & \gate{\cry(\bftheta_1^{d-1})} & \qw
\end{myqcircuit}}
&\equiv
\scalebox{0.9}{$\begin{myqcircuitr}{0}
 & \ctrl{1} & \qw & \qw & \qw & \qw \\
 & \gate{R_y(\theta_1)} & \ctrl{1} & \qw & \qw & \qw \\
 & \qw & \gate{R_y(\theta_2)} & \qw & \qw & \qw \\
 & & & \raisebox{1ex}{$\ddots$} & \\
 & \qw & \qw & \qw & \ctrl{1} & \qw \\
 & \qw & \qw & \qw & \gate{R_y(\theta_{d-1})} & \qw \\
\end{myqcircuitr}$}\hspace*{-1.5ex} \\[1em]
\label{eq:circuit_parallel_P}
{\begin{myqcircuit}
 & {/\strut^{d}}\qw & \gate{\text{P}(\bfphi_0^{d-1})} & \qw
\end{myqcircuit}}
&\equiv
\scalebox{0.9}{$\begin{myqcircuit}
 & \gate{P(\phi_0)} & \qw \\
 & \gate{P(\phi_1)} & \qw \\
 & \raisebox{0.5em}{$\vdots$} & \\
 & \gate{P(\phi_{d-1})} & \qw \\
\end{myqcircuit}$}
\end{align}
where rotation and phase gates follow the conventions:
\begin{align}\label{eq:rydef}
    &R_y(\theta) = \begin{pmatrix}
        \cos(\frac{\theta}{2})& -\sin(\frac{\theta}{2}) \\[3pt]
        \sin(\frac{\theta}{2})& \cos(\frac{\theta}{2})
    \end{pmatrix}, 
    &P(\phi) = \begin{pmatrix}
        1 & 0\\
        0 & e^{i\phi}
    \end{pmatrix}.
\end{align}
Using \cref{eq:circuit_CRY_ladder,eq:circuit_parallel_P}, we can efficiently implement the \polyR{} oracle to generate a weighted superposition over unary-encoded indices, as originally proposed in \cite{Berry_2015} and summarized in the following lemma.

\begin{lemma}
\label{lem:POLYR}
The state-preparation oracle {\upshape\polyR{}}, defined in \cref{eq:def_polyr}, can be realized by the $d$-qubit circuit:
{\upshape\begin{equation}\label{eq:circuit_poly_r}
\begin{myqcircuit}
 & {/\strut^{d}}\qw & \gate{\polyR} & \qw
\end{myqcircuit}
=
\begin{myqcircuitr}{-0.2}
 & \gate{R_y(\theta_0)} & \multigate{3}{\cry(\bftheta_1^{d-1})} & \multigate{3}{\text{P}(\bfphi_0^{d-1})} & \qw \\
 & \qw & \ghost{\cry(\bftheta_1^{d-1})} & \ghost{\text{P}(\bfphi_1^{d-1})} & \qw \\
 & \cdots & & & \\
 & \qw & \ghost{\cry(\bftheta_1^{d-1})} & \ghost{\text{P}(\bfphi_0^{d-1})} & \qw \\
\end{myqcircuitr}
\end{equation}}%
where the rotation angles $\theta_k$
and the phase angles $\phi_k$ are given in \cref{eq:POLY-thetak} and \cref{eq:POLY-phik}, respectively.
\end{lemma}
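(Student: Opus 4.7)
The plan is to establish \cref{eq:circuit_poly_r} in two independent stages: first, verify that the $R_y$ subcircuit ($R_y(\theta_0)$ on qubit $0$ followed by the controlled ladder $\cry(\bftheta_1^{d-1})$) prepares the real, non-negative amplitude profile $\frac{1}{\sqrt{\polynormfact}}\sum_{k=0}^d w_k \ket{k_u}$; second, verify that the parallel phase layer $\text{P}(\bfphi_0^{d-1})$ dresses each unary branch with the correct accumulated phase $e^{i\sum_{j=0}^{k-1}\phi_j}$. Since the two layers act consecutively and the second is diagonal in the computational basis, composing them recovers \cref{eq:def_polyr}.

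For the amplitude stage, I would argue by induction on the rung index $k$. Starting from $\ket{0}_d$, the initial rotation $R_y(\theta_0)$ produces $\cos(\theta_0/2)\ket{0_u}+\sin(\theta_0/2)\ket{1\,0\cdots 0}$; by \cref{eq:POLY-thetak} the $\ket{0_u}$ amplitude matches $w_0/\sqrt{\polynormfact}$. At the $k$-th rung of the ladder the controlled rotation $R_y(\theta_k)$ acts only on the sub-branch in which qubits $0,\ldots,k-1$ are all $\ket{1}$, splitting it into a ``terminate'' component (qubit $k$ in $\ket{0}$, equal to $\ket{k_u}$) with weight $\cos(\theta_k/2)$ and a ``continue'' component that propagates further down the ladder with weight $\sin(\theta_k/2)$. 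Unrolling the recursion yields the amplitude
\begin{equation*}
 \cos(\theta_k/2)\prod_{p=0}^{k-1}\sin(\theta_p/2)
\end{equation*}
on $\ket{k_u}$ for $0\le k<d$, and the definition of $\theta_k$ in \cref{eq:POLY-thetak}, combined with the identity $\sin^2(\theta_p/2)=1-\cos^2(\theta_p/2)$, telescopes this product down to $w_k/\sqrt{\polynormfact}$. The terminal case $k=d$ then follows from \cref{eq:def_W}: the saturation identity $\sum_{k=0}^{d}w_k^2=\polynormfact$ forces the residual amplitude $\prod_{p=0}^{d-1}\sin(\theta_p/2)$ to equal exactly $w_d/\sqrt{\polynormfact}$, so no extra factor $\cos(\theta_d/2)$ is needed.

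For the phase stage, I would exploit the fact that $\text{P}(\bfphi_0^{d-1})$ is diagonal in the computational basis, with $P(\phi_j)$ contributing a factor $e^{i\phi_j}$ precisely when qubit $j$ is in $\ket{1}$. Because the unary basis state $\ket{k_u}$ has qubits $0,\ldots,k-1$ set to $\ket{1}$ and the rest to $\ket{0}$, the total phase picked up on the branch $\ket{k_u}$ is $\prod_{j=0}^{k-1}e^{i\phi_j}=e^{i\sum_{j=0}^{k-1}\phi_j}$, matching \cref{eq:def_polyr} term by term (with the convention $\sum_{j=0}^{-1}\phi_j=0$ from the footnote of \cref{def:POLY} covering the $k=0$ branch).

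The main obstacle will be the telescoping trigonometric bookkeeping in the amplitude stage: one must verify that the recursive prescription \cref{eq:POLY-thetak} for $\theta_k$, which encodes a conditional Born amplitude relative to the residual weight in the denominator, is consistent with the cumulative sine product accrued along the ladder, and in particular that the boundary case $k=d$, where the ladder terminates without a final controlled $R_y$ to split the branch, closes out to $w_d/\sqrt{\polynormfact}$. Once this telescoping identity is made explicit, the remaining steps amount to a straightforward rewriting of the state and gluing with the diagonal phase layer.
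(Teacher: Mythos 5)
Your proof is correct, and there is no paper proof to diverge from: the paper states \cref{lem:POLYR} without proof, deferring to the unary-encoding construction of Berry et al.\ \cite{Berry_2015}, and your argument---induction along the $\cry$ ladder giving amplitude $\cos(\theta_k/2)\prod_{p<k}\sin(\theta_p/2)$ on $\ket{k_u}$ (with the single control on qubit $k-1$ sufficing because the ladder preserves support on unary states, an invariant your induction implicitly maintains), telescoping via \cref{eq:POLY-thetak}, closing the $k=d$ branch with $\sum_k w_k^2 = \polynormfact$ from \cref{eq:def_W}, and then gluing on the diagonal phase layer---is precisely the intended verification. The one point worth making explicit is a convention you adopt tacitly: \cref{eq:POLY-thetak} only yields a valid $\arccos$ argument, and only gives your base case $\cos(\theta_0/2) = w_0/\sqrt{\polynormfact}$, if the $w_k$ appearing in that formula are read as normalized, i.e.\ as $w_k/\sqrt{\polynormfact}$; as literally written the paper's formula omits this normalization, and your proof correctly uses the intended reading, together with $w_k \ge 0$ (so that $\prod_{p<d}\sin(\theta_p/2) = +\,w_d/\sqrt{\polynormfact}$ carries no sign ambiguity).
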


Similarly, we can efficiently implement the \polyL{} oracle, as stated in the following lemma.
\begin{lemma}
\label{lem:POLYL}
The state-preparation oracle {\upshape\polyL{}}, defined in \cref{eq:def_polyl}, can be realized by the $d$-qubit circuit:
{\upshape\begin{equation}\label{eq:circuit_poly_l}
\begin{myqcircuit}
 & {/\strut^{d}}\qw & \gate{\polyL} & \qw
\end{myqcircuit}
\ = \
\begin{myqcircuitr}{-0.2}
 & \gate{R_y(\theta_0)} & \multigate{3}{\cry(\bftheta_1^{d-1})} & \qw \\
 & \qw & \ghost{\cry(\bftheta_1^{d-1})} & \qw \\
 & \cdots & & \\
 & \qw & \ghost{\cry(\bftheta_1^{d-1})} & \qw \\
\end{myqcircuitr}
\end{equation}}%
where the rotation angles $\theta_k$ are given in \cref{eq:POLY-thetak}.
\end{lemma}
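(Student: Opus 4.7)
The plan is to deduce \cref{lem:POLYL} as an immediate corollary of \cref{lem:POLYR} by observing that the two circuits in \cref{eq:circuit_poly_r,eq:circuit_poly_l} differ only in the absence of the final parallel phase layer $\text{P}(\bfphi_0^{d-1})$. Since that layer is diagonal in the computational basis, its removal can only strip phases from the prepared superposition, and one checks that the phases it strips are exactly the ones distinguishing the target states defined in \cref{eq:def_polyr,eq:def_polyl}.

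Concretely, I would first denote by $V$ the sub-circuit consisting of $R_y(\theta_0)$ followed by the controlled ladder $\cry(\bftheta_1^{d-1})$, so that the $\polyR$ circuit of \cref{eq:circuit_poly_r} factorizes as $\polyR = \text{P}(\bfphi_0^{d-1})\cdot V$ while the $\polyL$ circuit of \cref{eq:circuit_poly_l} is exactly $V$. The goal therefore reduces to showing that $V\ket{0}_d = \polyL\ket{0}_d$. Second, I would compute the diagonal action of $\text{P}(\bfphi_0^{d-1}) = \bigotimes_{j=0}^{d-1} P(\phi_j)$ on the unary basis states $\ket{k_u} = \ket{1}^{\otimes k}\otimes\ket{0}^{\otimes (d-k)}$ from \cref{eq:unary-ku}; using $P(\phi_j)\ket{1} = e^{i\phi_j}\ket{1}$ and $P(\phi_j)\ket{0} = \ket{0}$ yields $\text{P}(\bfphi_0^{d-1})\ket{k_u} = e^{i\sum_{j=0}^{k-1}\phi_j}\ket{k_u}$. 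Third, I would use the fact that $V\ket{0}_d$ is supported entirely on the unary subspace $\{\ket{k_u}\}_{k=0}^d$, so that we may expand $V\ket{0}_d = \sum_{k=0}^d c_k\ket{k_u}$. Applying the diagonal phase action to both sides and comparing, term-by-term in $\ket{k_u}$, with the expression for $\polyR\ket{0}_d$ given by \cref{lem:POLYR} then forces $c_k = w_k/\sqrt{\polynormfact}$, which is exactly $\polyL\ket{0}_d$ by \cref{def:POLY}.

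The only subtle point is confirming that $V$ preserves the unary subspace. This invariance is a structural feature of the Berry-style unary-preparation ladder: each controlled $R_y$ in $\cry(\bftheta_1^{d-1})$ fires only when its control qubit is in $\ket{1}$, so starting from $\ket{0}^{\otimes d}$ successive rotations can only produce strings of the form $\ket{1^k 0^{d-k}}$. Since this same property underlies the proof of \cref{lem:POLYR}, the present lemma follows without any additional machinery, and no independent induction on $d$ is required.
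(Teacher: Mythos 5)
Your proof is correct, but it takes the reverse route from the paper. The paper treats \cref{lem:POLYL} as the primitive construction: it is the standard unary-ladder state preparation of Berry et al.\ (cited just before \cref{lem:POLYR}), established by directly propagating amplitudes through $R_y(\theta_0)$ and the $\cry(\bftheta_1^{d-1})$ ladder and checking that the angle choice in \cref{eq:POLY-thetak} produces amplitude $w_k/\sqrt{\polynormfact}$ on each $\ket{k_u}$; \cref{lem:POLYR} is then the phase-dressed version obtained by appending the diagonal layer $\text{P}(\bfphi_0^{d-1})$. You invert this dependency: taking \cref{lem:POLYR} as given, you strip the phase layer, using the correct diagonal action $\text{P}(\bfphi_0^{d-1})\ket{k_u} = e^{i\sum_{j=0}^{k-1}\phi_j}\ket{k_u}$ on the unary states of \cref{eq:unary-ku}. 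This is logically sound within the paper's stated lemma order, since \cref{lem:POLYR} precedes \cref{lem:POLYL}, and it buys brevity---no amplitude bookkeeping along the ladder is needed. Two remarks. First, your third step (invariance of the unary subspace under $V$) is superfluous: since $\text{P}(\bfphi_0^{d-1})$ is unitary, you have $V\ket{0}_d = \text{P}(\bfphi_0^{d-1})^\dagger\,\polyR\ket{0}_d$ directly, and the right-hand side can be evaluated term by term on the explicit unary expansion \cref{eq:def_polyr}, with the unit-modulus phases cancelling exactly; no structural claim about $V$ is required. Second, be aware of a fragility in the dependency direction: if \cref{lem:POLYR} is itself proven in the Berry style (phase-free ladder first, phase layer appended afterward), then your derivation, while not formally circular against the paper's lemma ordering, would be circular against that underlying proof. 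The self-contained direct computation that both lemmas share is therefore the more robust foundation, but as a deduction from the stated \cref{lem:POLYR} your argument is complete and correct.
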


We remark that the \polyR{} and \polyL{} circuits given in \cref{lem:POLYR,lem:POLYL}, respectively, each require only $2(d-1)$ CNOT gates following \cref{eq:circuit_CRY_ladder}.

Finally, by combining the outer state-preparation oracles \polyR{} and \polyL{}, defined in \cref{def:POLY} and implemented as described in \cref{lem:POLYR,lem:POLYL}, together with the constructions from \cref{sec:product of fox matrices}, we obtain an efficient block encoding of the matrix polynomial $p_d(\Ham)$, as formalized in the following theorem.

\begin{theorem}
\label{th:matrix-poly}
Let $p_d(\Ham)$ be a degree-$d$ matrix polynomial in an $n$-qubit matrix $\Ham$, as defined within \cref{eq:poly_transform_H}.
Then, the $(d + (2d+1)n)$-qubit circuit shown in \cref{fig:poly_general_pr} implements a block encoding of $p_d(\Ham)$.
\end{theorem}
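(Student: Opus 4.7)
The plan is to track the joint state through the circuit in three stages and confirm that the top-left block of the full unitary equals $p_d(\Ham)/\polynormfact$. Partition the register into $d$ outer ancillas, $d$ inner blocks of $2n$ ancillas each (one block per power of $\Ham$), and the $n$-qubit system, so the initial state is $\ket{0}_d\ket{0}_{2dn}\ket{\varphi}$. The three stages are: apply $\polyR$ on the outer register; apply $d$ controlled \foxlcu{} subcircuits, the $j$-th targeting the $j$-th inner block and the system, controlled by the $j$-th outer qubit; and apply $\polyL^\dagger$ on the outer register, followed by a projection of all ancillas onto $\ket{0}$.

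Stage one, by \cref{lem:POLYR}, produces the weighted unary superposition $\tfrac{1}{\sqrt{\polynormfact}}\sum_{k=0}^d e^{i\sum_{j=0}^{k-1}\phi_j}w_k\ket{k_u}$ on the outer register. In stage two, each subcircuit is a controlled \foxlcu{} block encoding of $\Ham$, which by \cref{theo:select_no_control} can be implemented by controlling only the $\PR$ and $\PLdag$ oracles while leaving \select{} uncontrolled, using that \select{} acts trivially on $\ket{0}_{2n}\ket{\varphi}$ (\cref{eq:control_select_trivial}). Hence on the unary branch $\ket{k_u}$ the first $k$ subcircuits reduce to the standard \foxlcu{} of \cref{eq:circuit_foqcs_lcu} acting on fresh inner ancillas, while the remaining $d-k$ subcircuits act as identity and leave their inner ancillas in $\ket{0}_{2n}$. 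Composing the $k$ active blocks via \cref{corol:powers} yields $\Ham^k/\normfact^k$ on the system once all $2dn$ inner ancillas are projected onto $\ket{0}_{2dn}$.

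Stage three applies $\polyL^\dagger$ and projects the outer register onto $\ket{0}_d$, which by \cref{lem:POLYL} amounts to taking the inner product with the $\polyL$ state of \cref{eq:def_polyl} and contributes a factor $w_k/\sqrt{\polynormfact}$ per unary branch. Collecting contributions across $k$, the system state conditioned on the all-zero ancilla measurement is $\tfrac{1}{\polynormfact}\sum_{k=0}^d e^{i\sum_{j=0}^{k-1}\phi_j}\,w_k^2\,\Ham^k\ket{\varphi}/\normfact^k$. Substituting $w_k^2=|a_k|\normfact^k$ from \cref{eq:POLY-wk} to cancel $\normfact^k$ and telescoping $\sum_{j=0}^{k-1}\phi_j=\arg(a_k)-\arg(a_0)$ from \cref{eq:POLY-phik} to reinstate the complex arguments then gives $\tfrac{e^{-i\arg(a_0)}}{\polynormfact}\sum_{k=0}^d a_k\Ham^k\ket{\varphi}=\tfrac{e^{-i\arg(a_0)}}{\polynormfact}p_d(\Ham)\ket{\varphi}$. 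The overall phase $e^{-i\arg(a_0)}$ is irrelevant for the block encoding, so the top-left block of the circuit is $p_d(\Ham)/\polynormfact$.

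The principal technical care, beyond this amplitude bookkeeping, is reserved for stage two: one must verify that the $d-k$ inactive \foxlcu{} blocks neither disturb their inner ancilla registers nor become entangled with the system, so that the combined projection onto $\ket{0}_{2dn}$ factorizes over all $d$ inner blocks and preserves the action of the $k$ active blocks. This follows directly from \cref{eq:control_select_trivial} together with the control structure of \cref{theo:select_no_control}, which together ensure that an inactive block implements exact identity on its inner register and on the system. With that invariant in hand, the remaining proof is a routine composition of \cref{lem:POLYR,lem:POLYL,corol:powers,theo:select_no_control}, with the phase convention of \cref{eq:POLY-phik} chosen precisely so that the telescoping sum reassembles each $a_k$ with the correct complex argument.
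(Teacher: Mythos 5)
Your proof is correct and follows essentially the same route as the paper's: \polyR{} creates the weighted unary superposition, the unary encoding activates the first $k$ block encodings so that \cref{corol:powers} applied to $\Ham/\normfact$ yields $\Ham^k/\normfact^k$ on each branch, and the $\polyL^\dagger$ projection with $w_k^2 = \abs{a_k}\normfact^k$ and the telescoping phases $\sum_{j=0}^{k-1}\phi_j = \arg(a_k)-\arg(a_0)$ reassemble $p_d(\Ham)$ up to a global phase. Your stage-two verification that the $d-k$ inactive blocks act as exact identity via \cref{eq:control_select_trivial} makes explicit a point the paper leaves implicit, and your tracking of the $1/\polynormfact$ normalization is if anything slightly more careful than the paper's, but the argument is the same.
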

\begin{proof}
See \cref{app:proof_matrix_polynomial}.
\end{proof}

\begin{figure*}
\subfloat[General implementation\label{fig:poly_general_pr}]{%
\scalebox{0.93}{$
\begin{myqcircuit}
%
\lstick{\ket{0}} &\multigate{3}{\polyR} & \ctrl{4} & \qw & \ghost{\cdots} & \qw & \qw & \qw & \qw & \qw & \ghost{\cdots\hspace*{0.5em}\cdots} & \qw & \qw & \qw & \ghost{\cdots} & \qw & \ctrl{4} &\multigate{3}{\polyL^{\dagger}} & \meter{} & \cw & 0\\
\lstick{\ket{0}} & \ghost{\polyR} & \qw & \ctrl{5} & \ghost{\cdots} & \qw & \qw & \qw & \qw & \qw & \ghost{\cdots\hspace*{0.5em}\cdots} & \qw & \qw & \qw & \ghost{\cdots} & \ctrl{5} & \qw &\ghost{\polyL^{\dagger}} & \meter{} & \cw & 0\\
\lstick{\raisebox{1.5ex}{$\vdots$}\hspace*{1.25ex}} & & & & \cdots & & & & & & & & & & \cdots & & & & & \raisebox{1.5ex}{$\vdots$}\hspace*{1.25ex} & \\
\lstick{\ket{0}} & \ghost{\polyR} & \qw & \qw & \ghost{\cdots} & \ctrl{7} & \qw & \qw & \qw & \qw & \ghost{\cdots\hspace*{0.5em}\cdots} & \qw & \qw & \ctrl{7} & \ghost{\cdots} & \qw & \qw &\ghost{\polyL^{\dagger}} & \meter{} & \cw & 0\\
%
\lstick{\ket{0}} & {/\strut^{n}}\qw &\multicgate{1}{\PR{}}{PRcolor} & \qw & \ghost{\cdots} & \qw & \ctrl{8}& \qw & \qw & \qw & \ghost{\cdots\hspace*{0.5em}\cdots} & \qw & \qw & \qw & \ghost{\cdots} & \qw & \multicgate{1}{\PLdag{}}{PLcolor} & \qw & \meter{} & \cw & 0\\
\lstick{\ket{0}} & {/\strut^{n}}\qw &\ghost{\PR{}} & \qw & \ghost{\cdots} & \qw & \qw &\ctrl{7}& \qw & \qw & \ghost{\cdots\hspace*{0.5em}\cdots} & \qw & \qw & \qw & \ghost{\cdots} & \qw & \ghost{\PLdag{}} & \qw & \meter{} & \cw & 0\\
\lstick{\ket{0}} & {/\strut^{n}}\qw & \qw &\multicgate{1}{\PR{}}{PRcolor} & \ghost{\cdots} & \qw & \qw & \qw &\ctrl{6}& \qw & \ghost{\cdots\hspace*{0.5em}\cdots} & \qw & \qw & \qw & \ghost{\cdots} & \multicgate{1}{\PLdag{}}{PLcolor} & \qw & \qw & \meter{} & \cw & 0\\
\lstick{\ket{0}} & {/\strut^{n}}\qw & \qw & \ghost{\PR{}} & \ghost{\cdots} & \qw & \qw & \qw & \qw &\ctrl{5}& \ghost{\cdots\hspace*{0.5em}\cdots} & \qw & \qw & \qw & \ghost{\cdots} & \ghost{\PLdag{}} & \qw & \qw & \meter{} & \cw & 0\\
\lstick{\raisebox{1.5ex}{$\vdots$}\hspace*{1.25ex}} & & & & \cdots & & & & & & \cdots\hspace*{0.5em}\phantom{\cdots} & & & & \cdots & & & & & \raisebox{1.5ex}{$\vdots$}\hspace*{1.25ex} & \\
\lstick{\raisebox{1.5ex}{$\vdots$}\hspace*{1.25ex}} & & & & \cdots & & & & & & \phantom{\cdots}\hspace*{0.5em}\cdots & & & & \cdots & & & & & \raisebox{1.5ex}{$\vdots$}\hspace*{1.25ex} & \\
\lstick{\ket{0}} & {/\strut^{n}}\qw & \qw & \qw & \ghost{\cdots} & \multicgate{1}{\PR{}}{PRcolor}& \qw & \qw & \qw & \qw & \ghost{\cdots\hspace*{0.5em}\cdots} &\ctrl{2}& \qw &\multicgate{1}{\PLdag{}}{PLcolor} & \ghost{\cdots} & \qw & \qw & \qw & \meter{} & \cw & 0\\
\lstick{\ket{0}} & {/\strut^{n}}\qw & \qw & \qw & \ghost{\cdots} & \ghost{\PR{}}& \qw & \qw & \qw & \qw & \ghost{\cdots\hspace*{0.5em}\cdots} & \qw &\ctrl{1} & \ghost{\PLdag{}} & \ghost{\cdots} & \qw & \qw & \qw & \meter{} & \cw & 0\\
%
\lstick{\ket{\varphi}} & {/\strut^{n}}\qw & \qw & \qw & \ghost{\cdots} & \qw &\targ{}&\ctrl{0}&\targ{}&\ctrl{0}& \push{\hspace*{0.5em}\cdots\hspace*{0.5em}\cdots\hspace*{0.5em}}\qw &\targ{}&\ctrl{0} & \qw & \ghost{\cdots} & \qw & \qw & \qw & \qw & & & \frac{p_d(\Ham)\ket{\varphi}}{\norm{p_d(\Ham)\ket{\varphi}}}
\end{myqcircuit}
$}
} \\
\subfloat[Efficient controlled implementation of \PR{} and \PLdag{} oracles under \cref{ass:PR-PL}\label{fig:poly_pr_ab}]{%
\scalebox{0.93}{$
\begin{myqcircuit}
%
\lstick{\ket{0}} &\multigate{3}{\polyR} & \ctrl{4} & \qw & \ghost{\cdots} & \qw &\qw& \qw & \qw &\qw & \qw & \ghost{\cdots\hspace*{0.5em}\cdots} & \qw& \qw & \qw & \qw & \ghost{\cdots} & \qw & \ctrl{4} &\multigate{3}{\polyL^{\dagger}} & \meter{} & \cw & 0\\
\lstick{\ket{0}} & \ghost{\polyR} & \qw & \ctrl{5} & \ghost{\cdots} & \qw &\qw& \qw & \qw & \qw &\qw & \ghost{\cdots\hspace*{0.5em}\cdots} & \qw& \qw & \qw & \qw & \ghost{\cdots} & \ctrl{5} & \qw &\ghost{\polyL^{\dagger}} & \meter{} & \cw & 0\\
\lstick{\raisebox{1.5ex}{$\vdots$}\hspace*{1.25ex}} & & & & \cdots & & & & & & & & & & & & \cdots & & & & & \raisebox{1.5ex}{$\vdots$}\hspace*{1.25ex} & \\
\lstick{\ket{0}} & \ghost{\polyR} & \qw & \qw & \ghost{\cdots} & \ctrl{7} &\qw& \qw & \qw & \qw &\qw & \ghost{\cdots\hspace*{0.5em}\cdots} & \qw & \qw& \qw & \ctrl{7}& \ghost{\cdots} & \qw & \qw &\ghost{\polyL^{\dagger}} & \meter{} & \cw & 0\\
%
\lstick{\ket{0}} & {/\strut^{n}}\qw &\multicgate{1}{\PRc{}}{PRccolor} & \qw & \ghost{\cdots} & \qw &\multicgate{1}{\PRu{}}{PRucolor} &\ctrl{8}& \qw & \qw & \qw & \ghost{\cdots\hspace*{0.5em}\cdots} & \qw & \qw &\multicgate{1}{\PLudag}{PLucolor}& \qw & \ghost{\cdots} & \qw & \multicgate{1}{\PLcdag{}}{PLccolor} & \qw & \meter{} & \cw & 0\\
\lstick{\ket{0}} & {/\strut^{n}}\qw &\ghost{\PRc{}} & \qw & \ghost{\cdots} & \qw &\ghost{\PRu{}}& \qw &\ctrl{7}& \qw & \qw & \ghost{\cdots\hspace*{0.5em}\cdots} & \qw & \qw &\ghost{\PLudag} & \qw & \ghost{\cdots} & \qw & \ghost{\PLcdag{}} & \qw & \meter{} & \cw & 0\\
\lstick{\ket{0}} & {/\strut^{n}}\qw & \qw &\multicgate{1}{\PRc{}}{PRccolor} & \ghost{\cdots} & \qw &\multicgate{1}{\PRu{}}{PRucolor}& \qw & \qw &\ctrl{6}& \qw & \ghost{\cdots\hspace*{0.5em}\cdots} & \qw & \qw &\multicgate{1}{\PLudag}{PLucolor}& \qw & \ghost{\cdots} & \multicgate{1}{\PLcdag{}}{PLccolor} & \qw & \qw & \meter{} & \cw & 0\\
\lstick{\ket{0}} & {/\strut^{n}}\qw & \qw & \ghost{\PRc{}} & \ghost{\cdots} & \qw &\ghost{\PRu{}}& \qw & \qw & \qw &\ctrl{5}& \ghost{\cdots\hspace*{0.5em}\cdots} & \qw & \qw &\ghost{\PLudag{}}& \qw & \ghost{\cdots} & \ghost{\PLcdag{}} & \qw & \qw & \meter{} & \cw & 0\\
\lstick{\raisebox{1.5ex}{$\vdots$}\hspace*{1.25ex}} & & & & \cdots & & & & & & & \cdots\hspace*{0.5em}\phantom{\cdots} & & & & & \cdots & & & & & \raisebox{1.5ex}{$\vdots$}\hspace*{1.25ex} & \\
\lstick{\raisebox{1.5ex}{$\vdots$}\hspace*{1.25ex}} & & & & \cdots & & & & & & & \phantom{\cdots}\hspace*{0.5em}\cdots & & & & & \cdots & & & & & \raisebox{1.5ex}{$\vdots$}\hspace*{1.25ex} & \\
\lstick{\ket{0}} & {/\strut^{n}}\qw & \qw & \qw & \ghost{\cdots} & \multicgate{1}{\PRc{}}{PRccolor}&\multicgate{1}{\PRu{}}{PRucolor}& \qw & \qw & \qw & \qw & \ghost{\cdots\hspace*{0.5em}\cdots} &\ctrl{2}& \qw&\multicgate{1}{\PLudag{}}{PLucolor} &\multicgate{1}{\PLcdag{}}{PLccolor} & \ghost{\cdots} & \qw & \qw & \qw & \meter{} & \cw & 0\\
\lstick{\ket{0}} & {/\strut^{n}}\qw & \qw & \qw & \ghost{\cdots} & \ghost{\PRc{}}& \ghost{\PRu{}}&\qw & \qw & \qw & \qw & \ghost{\cdots\hspace*{0.5em}\cdots} & \qw &\ctrl{1}&\ghost{\PLudag{}} & \ghost{\PLcdag{}} & \ghost{\cdots} & \qw & \qw & \qw & \meter{} & \cw & 0\\
%
\lstick{\ket{\varphi}} & {/\strut^{n}}\qw & \qw & \qw & \ghost{\cdots} & \qw &\qw&\targ{}&\ctrl{0}&\targ{}&\ctrl{0}& \push{\hspace*{0.5em}\cdots\hspace*{0.5em}\cdots\hspace*{0.5em}}\qw &\targ{}&\ctrl{0} &\qw & \qw & \ghost{\cdots} & \qw & \qw & \qw & \qw & & & \frac{p_d(\Ham)\ket{\varphi}}{\norm{p_d(\Ham)\ket{\varphi}}}
\end{myqcircuit}
$}
}\\
\caption{\foxlcu{} block encoding circuits for a matrix polynomial $p_d(\Ham) = a_0 I + a_1 \Ham + a_2 \Ham^2 + \cdots + a_d \Ham^d$.}
\label{fig:poly}
\end{figure*}

The circuit in \cref{fig:poly_general_pr} is fully general and applies to any matrix $\Ham$ implemented through the \foxlcu{} block encoding.
In principle, controlling the associated state-preparation oracles can incur a significant overhead.
However, the circuit identities in \cref{eq:pr_control_uncontrol,eq:pldag_control_uncontrol} show that the controlled \PR{} and \PLdag{} oracles can be simplified if \cref{ass:PR-PL} is valid, leading to the simplified construction of the following corollary.

\begin{corollary}
\label{corol:matrix-poly-simp}
If \PR{} and \PL{} satisfy \cref{ass:PR-PL}, the block-encoding circuit in \cref{fig:poly_general_pr} admits the simplification shown in \cref{fig:poly_pr_ab}.
\end{corollary}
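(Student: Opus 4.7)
The plan is to reduce \cref{corol:matrix-poly-simp} to $2d$ independent, local rewrites using the identities \cref{eq:pr_control_uncontrol,eq:pldag_control_uncontrol} already established earlier in the section. Under \cref{ass:PR-PL}, I factor $\PR = \PRmod\cdot\PRc$ and $\PL = \PLmod\cdot\PLc$, where $\PRmod$ and $\PLmod$ both have $\ket{0}_{2n}$ as a $+1$ eigenstate. These factorizations are precisely the inputs expected by \cref{lemma:unitary_decomposition_initial_state} and \cref{lemma:unitary_decomposition_measurement} in the special case $s = 1$, $\ket{\xi} = \ket{0}_{2n}$, for which the preparation unitary $\text{P}_\xi$ is the identity and drops out entirely.

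First I would rewrite each of the $d$ controlled \PR{} oracles. At the moment any one such oracle fires, its $2n$-qubit ancilla register has not yet been touched by any earlier operation and is therefore still in $\ket{0}_{2n}$, which places us exactly in the setting of \cref{lemma:unitary_decomposition_initial_state}. Applying it with $\Agate_1 = \PRc$ and $\Bgate_1 = \PRmod$ replaces the controlled \PR{} by a controlled $\PRc$ followed by an uncontrolled $\PRmod$. Next I would rewrite each of the $d$ controlled \PLdag{} oracles. Each is followed immediately by a measurement of its ancilla register post-selected on $\ket{0}_{2n}$, matching the setting of \cref{lemma:unitary_decomposition_measurement}. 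Applying that corollary with $\Agate_1 = \PLc$ and $\Bgate_1 = \PLmod$ replaces the controlled \PLdag{} by an uncontrolled $\PLudag$ followed by a controlled $\PLcdag$.

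The one point that deserves genuine care, and which I view as the main obstacle to a fully rigorous argument, is justifying that these local rewrites remain valid inside the large entangled circuit of \cref{fig:poly_general_pr}, despite the correlations created by $\polyR$, $\polyL^\dagger$, and the intervening \select{} blocks. The key structural observation is that the $d$ ancilla pairs live in disjoint $2n$-qubit registers, each of which is initialized to $\ket{0}_{2n}$, post-selected onto $\ket{0}_{2n}$ at the end, and in between touched only by its own controlled \PR{} and \PLdag{} and by \select{} blocks that use it purely as a control. Consequently \cref{lemma:unitary_decomposition_initial_state,lemma:unitary_decomposition_measurement} can be invoked independently on each of the $d$ branches, and assembling the $2d$ rewrites in parallel gives exactly the circuit in \cref{fig:poly_pr_ab}.
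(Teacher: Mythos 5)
Your proposal is correct and takes essentially the same route as the paper, whose proof is a one-line appeal to the identities in \cref{eq:pr_control_uncontrol,eq:pldag_control_uncontrol} (i.e., \cref{lemma:unitary_decomposition_initial_state,lemma:unitary_decomposition_measurement} with $s=1$, $\ket{\xi}=\ket{0}_{2n}$) applied to each of the $d$ controlled \PR{} and \PLdag{} oracles. Your closing paragraph justifying the local rewrites inside the entangled circuit---disjoint ancilla registers, each initialized to and post-selected on $\ket{0}_{2n}$ and otherwise touched only as controls by \select{}---makes explicit a linearity argument the paper leaves implicit, but it is the same underlying argument rather than a different proof.
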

\begin{proof}
 The circuit in \cref{fig:poly_pr_ab} follows directly from \cref{fig:poly_general_pr}, assuming \PR{} and \PLdag{} satisfy \cref{ass:PR-PL}.
\end{proof}

\subsection{Controlling the matrix polynomial of $\Ham$ }\label{sec:control_simp_poly}
To control the entire circuits implementing $p_d(\Ham)$ shown in \cref{fig:poly}, we start with the following two lemmas, which detail how to efficiently control the \polyR{} and \polyL{} circuits.
\begin{lemma}\label{lemma:control_polyr}
 Let the {\upshape\polyR{}} circuit be defined as in \cref{lem:POLYR}. Then, controlling the entire {\upshape\polyR{}} requires controlling only the first $R_y$ gate when we apply the gate to the $\ket{0}_d$ state:
 {\upshape\begin{equation}
 \begin{split}
 &{\begin{myqcircuit}
 & \qw & \ctrl{1} & \qw \\
 \lstick{\ket{0}} & \qw{/\strut^{d}} & \gate{\polyR} & \qw
 \end{myqcircuit}}
 \ \ \raisebox{0.5ex}{=} \\[1ex]
 &\qquad{\begin{myqcircuitr}{-0.2}
 & \ctrl{1} & \qw & \qw & \qw \\
 \lstick{\ket{0}} & \gate{R_y(\theta_0)} & \multigate{3}{\cry(\bftheta_1^{d-1})} & \multigate{3}{\text{P}(\bfphi_0^{d-1})} & \qw \\
 \lstick{\ket{0}} & \qw & \ghost{\cry(\bftheta_1^{d-1})} & \ghost{\text{P}(\bfphi_1^{d-1})} & \qw \\
 & \cdots & & & \\
 \lstick{\ket{0}} & \qw & \ghost{\cry(\bftheta_1^{d-1})} & \ghost{\text{P}(\bfphi_0^{d-1})} & \qw \\
 \end{myqcircuitr}}
 \end{split}
 \end{equation}}%
 \end{lemma}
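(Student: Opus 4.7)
The plan is to apply \cref{lemma:unitary_decomposition_initial_state} with $s=1$ to the decomposition
\begin{equation}
\polyR = \Bgate_1 \cdot \Agate_1,
\end{equation}
where $\Agate_1 := R_y(\theta_0) \otimes \Ip_{d-1}$ acts only on the first qubit and
\begin{equation}
\Bgate_1 := \text{P}(\bfphi_0^{d-1}) \cdot \cry(\bftheta_1^{d-1})
\end{equation}
collects the remaining gates. With the reference state $\ket{\xi} = \ket{0}_d$, the conclusion of \cref{lemma:unitary_decomposition_initial_state} is precisely the claimed circuit identity: controlling $\polyR$ reduces to controlling $\Agate_1$, i.e., the initial $R_y(\theta_0)$ alone.

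The only nontrivial content is therefore verifying that $\Bgate_1 \ket{0}_d = \ket{0}_d$, i.e., that $\ket{0}_d$ is a $+1$ eigenstate of $\Bgate_1$. First I would observe that in the ladder $\cry(\bftheta_1^{d-1})$ defined in \cref{eq:circuit_CRY_ladder}, each gate $R_y(\theta_k)$ on qubit $k$ is controlled by qubit $k-1$. Applying this ladder to $\ket{0}_d$, the first $\cry$ gate sees control qubit $0$ in state $\ket{0}$ and so acts as identity, leaving qubit $1$ in $\ket{0}$; propagating the argument inductively along the ladder shows that each subsequent controlled rotation is also inactive, so $\cry(\bftheta_1^{d-1}) \ket{0}_d = \ket{0}_d$. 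Next, the parallel phase layer $\text{P}(\bfphi_0^{d-1})$ consists of single-qubit $P(\phi_k)$ gates which, by the convention in \cref{eq:rydef}, fix $\ket{0}$. Hence $\text{P}(\bfphi_0^{d-1})\ket{0}_d = \ket{0}_d$, and combining the two gives $\Bgate_1 \ket{0}_d = \ket{0}_d$.

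With the eigenstate condition \cref{eq:common-eigvec} verified for $\Bgate_1$ and $\ket{\xi} = \ket{0}_d$, \cref{lemma:unitary_decomposition_initial_state} applies directly. Substituting the explicit forms of $\Agate_1$ and $\Bgate_1$ into the right-hand side of \cref{eq:gengate_controlled_decomposition} reproduces the circuit on the right-hand side of the stated identity, completing the argument.

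I do not anticipate a real obstacle here: the statement is essentially a specialization of \cref{lemma:unitary_decomposition_initial_state}, and the only check required is the (elementary) propagation-of-zeros argument along the CRY ladder. The one place to be careful is matching the gate ordering: since $\polyR$ is read left-to-right in \cref{eq:circuit_poly_r} but composed right-to-left as a matrix product, one must confirm that $R_y(\theta_0)$ really is the rightmost factor (i.e., the first to act on the input), which is consistent with the convention used elsewhere in \cref{sec:genral_control_theory}.
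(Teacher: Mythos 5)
Your proposal is correct and follows exactly the paper's argument: the paper also invokes \cref{lemma:unitary_decomposition_initial_state} with $s=1$, isolating the initial $R_y(\theta_0)$ as $\Agate_1$ and noting that the remainder of the circuit (the CRY ladder plus the phase layer) fixes $\ket{0}_d$. Your explicit propagation-of-zeros induction along the ladder and the check on gate ordering simply spell out details the paper leaves implicit.
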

 \begin{proof}
 The \polyR{} unitary can be decomposed using \cref{lemma:unitary_decomposition_initial_state} with $s=1$. In particular, from the circuit implementation in \cref{eq:circuit_poly_r}, we can isolate the first $R_y$ rotation, since the remaining part of the circuit has $\ket{0}_d$ as an eigenstate with eigenvalue $1$.
 \end{proof}
 
 \begin{lemma}\label{lemma:control_polyl}
 Let the {\upshape\polyL{}} circuit be defined as in \cref{lem:POLYL}.
 Then, controlling the entire {\upshape$\polyL^\dag$} requires controlling only the last $R_y$ gate when we measure all the bottom $d$ qubits in the $\ket{0}_d$ state:
 {\upshape\begin{equation}
 \begin{split}
 &{\begin{myqcircuitr}{0}
 & \qw & \ctrl{1} & \qw \\
 & \qw{/\strut^{d}} & \gate{\polyL^\dagger} & \meter{} & \cw & 0
 \end{myqcircuitr}}
 \ \ \raisebox{1ex}{=} \\[1ex]
 &\qquad{\begin{myqcircuitr}{-0.2}
 & \qw & \ctrl{1}& \qw &  \qw \\
 & \multigate{3}{{\cry(\bftheta_1^{d-1})}^\dagger} & \gate{R_y(-\theta_0)} & \meter{} & \cw & 0 \\
 & \ghost{{\cry(\bftheta_1^{d-1})}^\dagger} & \qw & \meter{} & \cw & 0 \\
 & & & \cdots & \\
 & \ghost{{\cry(\bftheta_1^{d-1})}^\dagger} & \qw & \meter{} & \cw & 0 \\
 \end{myqcircuitr}}
 \end{split}
 \end{equation}}
\end{lemma}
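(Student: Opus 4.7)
The plan is to derive this directly from \cref{lemma:unitary_decomposition_measurement} applied to the explicit factorization of $\polyL$ given in \cref{lem:POLYL}. Writing $\polyL = B_1 \cdot A_1$ with
\begin{equation*}
A_1 \equiv R_y(\theta_0) \otimes I_{d-1}, \qquad B_1 \equiv \cry(\bftheta_1^{d-1}),
\end{equation*}
puts $\polyL$ into the form of \cref{eq:Udecomp} with $s=1$, and we take $\ket{\xi} = \ket{0}_d$ so that $\text{P}_\xi = \text{P}_\xi^\dagger = I_d$ and the measurement in $\ket{\xi}$ is exactly the measurement in the all-zero state depicted in the statement.

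The only nontrivial hypothesis to verify is the eigenstate condition $B_1 \ket{0}_d = \ket{0}_d$ required by \cref{eq:common-eigvec}. This follows immediately from the structure of the CRY ladder in \cref{eq:circuit_CRY_ladder}: each gate is a controlled $R_y$ whose control lies on a qubit strictly above its target, so when the entire register is in $\ket{0}_d$ every control is in $\ket{0}$, every controlled rotation acts as the identity, and the ladder fixes $\ket{0}_d$.

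With this hypothesis in hand, \cref{lemma:unitary_decomposition_measurement} applied to $\polyL^\dagger = A_1^\dagger B_1^\dagger$ (and with the trivial $\text{P}_\xi^\dagger$ absorbed into the identity) yields exactly the claimed simplification: first apply the uncontrolled $\cry(\bftheta_1^{d-1})^\dagger = B_1^\dagger$, then apply a single controlled $R_y(-\theta_0) = A_1^\dagger$ on the top ancilla qubit, and finally measure all $d$ ancillas in $\ket{0}_d$. I expect no real obstacle beyond bookkeeping; the only point that warrants a sentence of care is the ordering of factors in $\polyL^\dagger$, which must be matched against the order produced by \cref{lemma:unitary_decomposition_measurement} (the $B_i^\dagger$ factor appears before the controlled $A_i^\dagger$ factor), and this ordering agrees with the diagram in the statement.
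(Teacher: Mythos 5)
Your proposal is correct and follows essentially the same route as the paper's own proof: both apply \cref{lemma:unitary_decomposition_measurement} with $s=1$ to the factorization $\polyL = \cry(\bftheta_1^{d-1})\cdot\bigl(R_y(\theta_0)\otimes \Ip_{d-1}\bigr)$ from \cref{lem:POLYL}, isolating the single $R_y(\theta_0)$ as the only gate needing control. The paper leaves implicit what you spell out---the eigenstate check $\cry(\bftheta_1^{d-1})\ket{0}_d=\ket{0}_d$, the trivial $\text{P}_\xi=\Ip_d$, and the factor ordering in $\polyL^\dagger$---so your write-up is simply a more explicit version of the same argument.
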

\begin{proof}
 The $\polyL{}^\dag$ unitary can be decomposed using \cref{lemma:unitary_decomposition_measurement} with $s=1$. By considering the adjoint circuit of \cref{eq:circuit_poly_l}, we can isolate the last $R_y$ rotation in $\polyL^\dagger$, since the remaining part of the circuit has $\ket{0}_d$ as an eigenstate with eigenvalue $1$.
\end{proof}

Consequently, controlling the entire matrix polynomial circuit requires controlling only these boundary $R_y$ rotations. This is summarized in the following theorem.

\begin{theorem}
 Let $p_d(\Ham)$ be the matrix polynomial defined in \cref{eq:poly_transform_H}, and consider both the general and simplified block encoding circuits shown in \cref{fig:poly}. For either implementation, controlling the entire circuit is equivalent to controlling only the first and last $R_y$ rotations of {\upshape\polyR{}} and {\upshape$\polyL^\dagger$}, respectively.
\end{theorem}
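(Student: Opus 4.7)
I would begin by writing the full polynomial block-encoding unitary as
\begin{equation*}
U_{\text{poly}} = (A_2 B_2) \cdot U_{\text{mid}} \cdot (B_1 A_1),
\end{equation*}
where $A_1 \equiv R_y(\theta_0)$ is the first single-qubit rotation inside \polyR{} (see \cref{eq:circuit_poly_r}), $B_1$ collects the remaining CRY ladder of \cref{eq:circuit_CRY_ladder} and diagonal phase gates of \cref{eq:circuit_parallel_P} of \polyR{}, $A_2 \equiv R_y(-\theta_0)$ is the last rotation inside $\polyL^\dagger$, $B_2$ is the adjoint CRY ladder of $\polyL^\dagger$, and $U_{\text{mid}}$ is the central block of uncontrolled \PR{}'s, unary-controlled \select{} operators, and uncontrolled \PLdag{}'s depicted in \cref{fig:poly}. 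The goal is then to verify that $B_1$, $U_{\text{mid}}$, and $B_2$ all act as the identity on the initial ancilla state $\ket{0}_d \otimes \ket{0}_{2dn}$ tensored with any target $\ket{\varphi}$; once this is done, \cref{lemma:unitary_decomposition_initial_state} with $s=2$ and composite factor $B' \equiv B_2 \cdot U_{\text{mid}} \cdot B_1$ immediately delivers the claimed equivalence.

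Invariance of $B_1$ and $B_2$ on $\ket{0}_d$ is immediate from \cref{eq:circuit_CRY_ladder,eq:circuit_parallel_P}: each rung of the CRY ladder has its control qubit in $\ket{0}$ and is therefore inactive, while each phase gate $P(\phi)$ fixes $\ket{0}$. This is exactly the reasoning already deployed in \cref{lemma:control_polyr,lemma:control_polyl}, and the adjoint-circuit version of the argument applies verbatim to $B_2$. For $U_{\text{mid}}$, the crucial structural input is that every \select{} operator is triggered by a dedicated unary ancilla qubit, so when the unary register is in $\ket{0}_d$ not a single SELECT fires and $U_{\text{mid}}$ reduces to a tensor product of $\PR$ immediately followed by $\PLdag$ on each LCU ancilla pair, with no entanglement to the target. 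Under \cref{ass:PR-PL}, which is precisely the regime of \cref{fig:poly_pr_ab}, each such pair can be expanded as $\PR = \PRmod\PRc$ and $\PL = \PLmod\PLc$; the $\PRmod$ and $\PLdagmod$ layers fix $\ket{0}_{2n}$ by \cref{eq:bp_eigenstate}, and the $\PRc$/$\PLcdag$ layers sandwiching the trivially acting SELECT telescope by the same mechanism that established \cref{corol:control_foqcs}.

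With $B_1$, $B_2$, and $U_{\text{mid}}$ qualifying as valid $B_i$-type factors in the sense of \cref{eq:common-eigvec}, a single application of \cref{lemma:unitary_decomposition_initial_state} collapses every internal control wire, leaving only $A_1$ and $A_2$ to be controlled. I expect the main obstacle to be the detailed verification that the uncontrolled $\PR$--$\PLdag$ layer inside $U_{\text{mid}}$ genuinely preserves $\ket{0}_{2n}$: under \cref{ass:PR-PL} and the telescoping argument above this is clean, but in the fully general circuit of \cref{fig:poly_general_pr} with generic $\PR \neq \PL$ one needs a separate, slightly ad hoc check that the residual $\PLdag\PR$ acting on $\ket{0}_{2n}$ (with no intervening SELECT) is absorbed cleanly into the block-encoding normalization and does not contaminate the post-selected branch. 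Once this subtlety is dispatched, the controlled circuit differs from the uncontrolled one by just the two additional \cnot{}'s needed to control $A_1$ and the two needed to control $A_2$, yielding an overhead independent of both the system size $n$ and the polynomial degree $d$.
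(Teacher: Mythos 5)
Your overall strategy---packaging the interior of \polyR{} and $\polyL^\dagger$ together with the central block into a composite $B'$-factor that fixes $\ket{0}_d\ket{0}_{2dn}\ket{\varphi}$, then invoking \cref{lemma:unitary_decomposition_initial_state} so that only the boundary rotations $R_y(\theta_0)$ and $R_y(-\theta_0)$ retain controls---is exactly the paper's route (the paper packages it in two steps, first reducing to controlled \polyR{}/$\polyL^\dagger$ and then applying \cref{lemma:control_polyr,lemma:control_polyl}, but the mechanism is identical). However, there is a genuine gap, rooted in a misreading of \cref{fig:poly}: you have placed the unary controls on the wrong gates. In the paper's construction the unary qubits control the \PR{} and \PLdag{} oracles---this is precisely how $\ket{k_u}$ activates exactly the first $k$ block encodings (see \cref{app:proof_matrix_polynomial})---while the \select{} layers carry \emph{no} unary controls at all; they act trivially whenever their $2n$-qubit ancilla registers remain in $\ket{0}_{2n}$, by \cref{eq:control_select_trivial}. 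Consequently, when the unary register is $\ket{0}_d$, no \PR{}, \select{}, or \PLdag{} fires, the middle block is \emph{exactly} the identity on $\ket{0}_d\ket{0}_{2dn}\ket{\varphi}$, and this holds for the fully general circuit of \cref{fig:poly_general_pr} without any appeal to \cref{ass:PR-PL}; that assumption is needed only to implement the unary-controlled \PR{}/\PLdag{} oracles cheaply, as in \cref{fig:poly_pr_ab}, not for the validity of the control-reduction itself.

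In your inverted architecture---uncontrolled \PR{}/\PLdag{} with unary-controlled \select{}---the ``residual $\PLdag\PR$'' you flag is a real obstruction, and the resolution you sketch does not work: the amplitude
\begin{equation*}
\bra{0}_{2n}\,\PLdag\,\PR\,\ket{0}_{2n} \;=\; \frac{1}{\normfact}\sum_{i,j} e^{i\arg(\newpscoeff_{ij})}\abs{\newpscoeff_{ij}}
\end{equation*}
is a generically complex number of modulus less than one, and in your circuit it would multiply the post-selected branches \emph{unevenly} in $k$ (the $\Ham^k$ term would acquire this factor to the power $d-k$), thereby altering the encoded polynomial rather than being ``absorbed into the block-encoding normalization''; it would also contaminate the $\ket{0}$-control branch that a Hadamard test relies on. So the step you defer as a ``slightly ad hoc check'' cannot be discharged as stated. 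The repair is simply to read the control placement correctly, after which your one-shot application of \cref{lemma:unitary_decomposition_initial_state} with the composite $B' = B_2 \cdot U_{\text{mid}} \cdot B_1$ closes cleanly and coincides with the paper's proof.
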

\begin{proof}
 First, we note that in order to control the entire circuit, it suffices to control only the \polyR{} and \polyL{} oracles defined in \cref{def:POLY}. 
 Indeed, if these oracles are not activated, the controlled \PR{} and \PLdag{} gates are not applied, since they have $\ket{0}_d$ as the control state, and the entire circuit thus acts trivially on the state $\ket{0}_d\ket{0}_{2n}\ket{\varphi}$.
 Then we apply \cref{lemma:control_polyr,lemma:control_polyl}, such that both circuits in \cref{fig:poly} can be controlled by simply controlling the first and last $R_y$ rotation gates within \polyR{} and $\polyL^\dagger$, respectively.
 This completes the proof.
\end{proof}

We conclude by remarking that a $R_y$ gate can be implemented with up to $2$ \cnot{} gates.
As a result, all matrix polynomial block encodings defined in \cref{sec:poly_general} can be controlled with a trivial overhead of 4 \cnot{}s.

\begin{table*}
 \begin{tabular}{|c|cc|cc|cc|}
 \hline
 \multirow{2}{*}{\bf Circuit} & \multicolumn{2}{c|}{\bf \cnot{} count} & \multicolumn{2}{c|}{\bf \cnot{} depth} & \multicolumn{2}{c|}{\bf Number of qubits} \\
 \cline{2-7}
 & all-to-all 
 & square grid 
 & all-to-all 
 & square grid 
 & all-to-all 
 & square grid  \\
 \hline
 \PR{}/\PLdag{} & $11 \n -11$ & $11\n-7$& $4\n+4$ & $4\n + 8$ & $2n\phantom{{}+1}$ & $2n\phantom{{}+1}$ \\
 Controlled-\PR{}/\PLdag{} & $11 \n -10$ & $11\n-6$ & $4\n+5$ &$4\n + 9$ & $2n+1$ & $2n+1$ \\
 \hline
 \foxlcu{} & $24 \n -22 $ & $26 \n -14 $& $8\n+10$ &$8\n+20$ & $3n\phantom{{}+1}$ & $3n\phantom{{}+1}$ \\
 Controlled-\foxlcu{} & $24 \n -20$ & $26 \n -12 $& $8\n+12$ & $8\n+22$ & $3n+1$&$3n+1$ \\
 \hline
 $p_d(\Ham)$ & \,$24 d\n-18d-4$\, & \,$26 d\n-2d-6$\, & $8\n + 6d +6\phantom{1}$ & $8\n+10d+12$& $2dn+d+n\phantom{{}+1}$ & $2dn+2d+n\phantom{{}+1}$ \\
 Controlled-$p_d(\Ham)$ & \,$24 d\n-18d\phantom{{}-4}$\, & $26 d\n-2d-2$ & \,$8\n + 6d + 10$\, & \,$8\n+10d+16$\, & \,$2dn+d+n+1$\, & \,$2dn+2d+n+1$\, \\
 \hline
 \end{tabular}
 \caption{Resource analysis for the \PR{} and \PLdag{} oracles, the \foxlcu{} block encoding, a generic matrix polynomial of $\Ham$ and their controlled circuits for the one-dimensional XYZ Heisenberg Hamiltonian defined in \cref{eq:def_heisenberg xyz}.}
 \label{tab:resource_estimation_xyz}
\end{table*}

\section{Applications}\label{sec:applications}
As discussed in the previous sections, the circuit depth of controlled \foxlcu{} block encodings can be significantly reduced when \cref{ass:PR-PL} is satisfied. In particular, decomposing the \PR{} and \PLdag{} oracles into two components---one consisting solely of $\bigO{1}$ single-qubit gates and another for which the all-zero state $\ket{0}_{2n}$ is an eigenstate---enables substantial simplifications of controlled \foxlcu{} circuits. While these conditions are not met by arbitrary state-preparation routines, they do hold for all block encodings introduced in \cite{foqcs-lcu-arxiv}, making these simplifications practically achievable.

Motivated by this observation, we focus in this section on representative one-dimensional spin Hamiltonians.
In \cref{sec:spin_hamiltonians} we present explicit decompositions of \PR{} and \PLdag{}. These decompositions require negligible overhead for controlling the \foxlcu{} circuits and allow for substantial simplifications in the implementation of matrix polynomials. 
We then present the Hadamard test as a representative application of controlled \foxlcu{} circuits in \cref{sec:hadamard_tests}. Finally, we demonstrate applications of matrix polynomial circuits to time evolution in \cref{sec:time_evolution}.

\subsection{Explicit circuits for spin Hamiltonians}\label{sec:spin_hamiltonians}

Let us consider the one-dimensional XYZ Heisenberg Hamiltonian with open boundary conditions:
\begin{equation}\label{eq:def_heisenberg xyz}
\Ham = g \sum_{i=0}^{n-1} Z_i + \sum_{i=0}^{n-2} \Jx X_i X_{i+1} +\Jy Y_i Y_{i+1} +\Jz Z_i Z_{i+1}.
\end{equation}
Note that more general field terms $\sum_{i} \gx X_i + \gy Y_i + \gz Z_i$, as considered in \cite{foqcs-lcu-arxiv}, can be rotated to the $Z$-axis through a single global change of spin coordinates. Without loss of generality, we may therefore choose $\gx = \gy =0$ and $\gz = g$.
We exploit this simplification in the Hamiltonian definition to further compress the \PR{} and \PL{} oracles from \cite{foqcs-lcu-arxiv}. In particular, \cref{fig:simp_pr_1} illustrates the optimized \PR{} oracle for $n=4$ spins, while \cref{app:xyz_model} provides a detailed construction for general $n$. 

\begin{figure*}
\begin{equation*}
\qquad\begin{myqcircuitr}{0}
\lstick{\ket{0}} & \qw & \qw & \qw & \qw & \qw & \qw & \qw & \qw & \qw & \qw & \qw & \qw & \qw & \qw & \qw & \qw & \qw & \qw & \qw & \qw & \qw& \qw & \qw & \qw & \targ & \qw & \qw & \qw & \qw & \qw & \\
\lstick{\ket{0}} & \gate{X} & \qw & \ctrl{3} & \targ & \ctrl{1} & \targ & \qw & \qw & \qw & \qw & \qw & \qw & \qw & \qw & \qw & \qw & \qw & \qw & \ctrl{4} & \qw & \qw & \qw & \qw & \qw & \ctrl{-1} & \targ & \qw & \qw & \qw & \qw & \\
\lstick{\ket{0}} & \qw & \qw & \qw & \qw & \gate{R_y} & \ctrl{-1} & \ctrl{1} & \targ & \qw & \qw & \qw & \qw & \qw & \qw & \qw & \qw & \qw & \qw & \qw & \qw & \ctrl{4} & \qw & \qw& \qw & \qw & \ctrl{-1} & \targ & \qw & \qw & \qw & \\
\lstick{\ket{0}} & \qw & \qw & \qw & \qw & \qw & \qw & \gate{R_y} & \ctrl{-1} & \qw & \qw & \qw & \qw & \qw & \qw & \qw & \qw & \qw & \qw & \qw & \qw & \qw & \qw & \ctrl{4} & \qw & \qw & \qw & \ctrl{-1} & \qw & \qw & \qw & \\
\lstick{\ket{0}} & \qw & \qw & \gate{R_y} & \ctrl{-3} & \ctrl{1} & \targ & \qw & \qw & \qw & \qw & \gate{R_y} & \qw & \qw & \qw & \qw & \qw & \qw & \targ & \qw & \qw & \qw & \qw & \qw & \qw & \targ & \qw & \qw & \qw & \qw & \qw & \\
\lstick{\ket{0}} & \qw & \qw & \qw & \qw & \gate{R_y} & \ctrl{-1} & \ctrl{1} & \targ & \qw & \qw & \ctrl{-1} & \gate{R_y} & \qw & \qw & \qw & \qw & \targ & \ctrl{-1} & \gate{R_y} & \qw & \qw & \qw & \qw & \qw & \ctrl{-1} & \targ & \qw & \qw & \qw & \qw & \\
\lstick{\ket{0}} & \qw & \qw & \qw & \qw & \qw & \qw & \gate{R_y} & \ctrl{-1} & \ctrl{1} & \targ & \qw & \ctrl{-1} & \gate{R_y} & \qw & \qw & \targ & \ctrl{-1} & \qw & \qw & \qw & \gate{R_y} & \qw & \qw & \qw & \qw & \ctrl{-1} & \targ & \qw & \qw & \qw & \\
\lstick{\ket{0}} & \qw & \qw & \qw & \qw & \qw & \qw & \qw & \qw & \gate{R_y} & \ctrl{-1} & \qw & \qw & \ctrl{-1} & \qw & \qw & \ctrl{-1} & \qw & \qw & \qw & \qw & \qw & \qw & \gate{R_y} & \qw & \qw & \qw & \ctrl{-1} & \qw & \qw & \qw \gategroup{1}{4}{8}{28}{2.5em}{--} \\
\end{myqcircuitr}
\end{equation*}
\caption{Simplified \PR{} circuit for the one-dimensional XYZ Heisenberg Hamiltonian with open boundary conditions, defined in \cref{eq:def_heisenberg xyz}, for the case $n=4$. The dashed region corresponds to the gate $\PRmod$ in \cref{ass:PR-PL}.}
\label{fig:simp_pr_1}
\end{figure*}

Crucially, the \PR{} oracle from \cref{fig:simp_pr_1} satisfies \cref{ass:PR-PL} and can be decomposed into two parts by isolating the first $\Xp$ gate from the remainder of the circuit:
\begin{equation}\label{eq:Prmod_def}
 \begin{myqcircuit}
 & \multicgate{3}{\PR{}}{PRcolor} & \qw \\
 & \ghost{\PR{}} & \qw \\
 \lstick{\raisebox{1ex}{$\vdots$}\hspace{-0.75em}} & & \rstick{\hspace{-0.75em}\raisebox{1ex}{$\vdots$}} \\
 & \ghost{\PR{}} & \qw
 \end{myqcircuit}
 \ \ \equiv \ \
 \begin{myqcircuitr}{0}
 & \qw & \multicgate{3}{\PRmod{}}{PRcolor} & \qw \\
 & \gate{\Xp} & \ghost{\PRmod{}} & \qw \\
 & \cdots & && \\
 &\qw & \ghost{\PRmod{}} & \qw
 \end{myqcircuitr}
\end{equation}
Because $\PRmod$ contains only controlled gates, it has eigenstate $\ket{0}_{2n}$ with eigenvalue $1$. 
As a consequence, \cref{eq:pr_control_uncontrol} yields: 
\begin{equation}\label{eq:trivial_control_pr}
 \begin{myqcircuit}
 & \ctrl{1} & \qw \\
 \lstick{\ket{0}} & \multicgate{3}{\PR{}}{PRcolor} & \qw \\
 \lstick{\ket{0}} & \ghost{\PR{}} & \qw \\
 \lstick{\raisebox{1.5ex}{$\vdots$}\hspace*{1.25ex}} \\
 \lstick{\ket{0}} & \ghost{\PR{}} & \qw
 \end{myqcircuit}
 \ \ = \qquad \
 \begin{myqcircuitr}{0}
 & \ctrl{2} & \qw & \qw \\
 \lstick{\ket{0}} & \qw & \multicgate{3}{\PRmod{}}{PRcolor} & \qw \\
 \lstick{\ket{0}} & \gate{\Xp} & \ghost{\PRmod{}} & \qw \\
 \lstick{\raisebox{1.5ex}{$\vdots$}\hspace*{1.25ex}} \\
 \lstick{\ket{0}} &\qw & \ghost{\PRmod{}} & \qw
 \end{myqcircuitr}
\end{equation}
Controlling the XYZ Heisenberg \PR{} oracle thus incurs a trivial overhead, namely \emph{a single additional} \cnot{} gate.
Analogously, \cref{ass:PR-PL,eq:pldag_control_uncontrol} also apply to the simplification of the controlled \PLdag{} gate:
\begin{equation}\label{eq:trivial_control_pldag}
 \begin{myqcircuit}
 & \ctrl{1} & \qw \\
 & \multicgate{3}{\PLdag{}}{PRcolor} & \meter{} & \cw & 0\\
 & \ghost{\PLdag{}} & \meter{} & \cw & 0 \\
 & & & \raisebox{1.5ex}{$\vdots$}\hspace*{1.25ex} & \\
 &\ghost{\PLdag{}} & \meter{} & \cw & 0
 \end{myqcircuit}
 \ \ \, = \ \
 \begin{myqcircuitr}{0}
 & \qw & \ctrl{2} & \qw \\
 & \multicgate{3} {\PLdagmod{}}{PRcolor} & \qw & \meter{} & \cw & 0 \\
 & \ghost{\PLdagmod{}} & \gate{\Xp} & \meter{} & \cw & 0 \\
 & & & & \raisebox{1.5ex}{$\vdots$}\hspace*{1.25ex} & \\
 & \ghost{\PLdagmod{}} & \qw & \meter{} & \cw & 0
 \end{myqcircuitr}
\end{equation}

In \cref{tab:resource_estimation_xyz}, we evaluate the non-asymptotic resource cost for this model, for which \cref{eq:trivial_control_pr} and \cref{eq:trivial_control_pldag} hold. 
To derive the circuit depth and \cnot{} count, we consider both all-to-all and square-grid qubit connectivity. We observe that the circuit implemented on a two-dimensional grid has nearly the same complexity as the all-to-all connectivity case. This is made possible by an appropriate mapping of qubits onto the two-dimensional grid, which allows us to effectively exploit horizontal and vertical nearest-neighbor qubit interactions. In particular, \cref{app:2D circuit} details the implementation of two-dimensional \foxlcu{} circuits on a square grid.

In addition to the XYZ Heisenberg model, we explicitly provide the \PR{} circuits for the XXZ and Ising models in \cref{app:PR circuits}. In both cases, the decomposition in \cref{eq:Prmod_def} applies, as do the corresponding simplifications for the controlled circuits in \cref{eq:trivial_control_pr,eq:trivial_control_pldag}.

The consequences of this simplification are significant: for a broad class of spin models considered here and in \cite{foqcs-lcu-arxiv}, controlling the full \foxlcu{} block encoding incurs the cost of \emph{only two additional} \cnot{} gates, as illustrated in \cref{fig:foqcs_lcu_controlled}.
Moreover, the circuit implementing a matrix polynomial of $\Ham$ admits a substantial simplification as well:
control of each $\PRc$ or $\PLcdag$ gate in \cref{fig:poly_pr_ab} introduces only a single \cnot{} gate.
For concreteness, \cref{fig:poly_xyz_depth_2D} shows the total \cnot{} depth required to implement matrix polynomials of increasing degree for the XYZ model.

\begin{figure*}
\[
\begin{myqcircuitr}{0.9}
 & \qw & \ctrl{1} & \qw \\
 \lstick{\ket{0}} & \qw{/\strut^{n}} & \multigate{2}{\Ufox} & \qw \\
 \lstick{\ket{0}} & \qw{/\strut^{n}} & \ghost{\Ufox} & \qw \\
 \lstick{\ket{\varphi}}& \qw {/\strut^{n}} & \ghost{\Ufox} & \qw & \\
\end{myqcircuitr}
\quad = \quad\qquad
\begin{myqcircuitr}{0}
& \ctrl{2} & \qw &\qw &\qw &\qw &\qw &\qw &\qw &\qw & \qw & \qw & \ctrl{2} & \qw &\\
\lstick{\ket{0}} &\qw &\multicgate{7}{\PRmod{}}{PRcolor} & \ctrl{8} & \qw &\ghost{\cdots}& \qw & \qw & \qw &\ghost{\cdots}& \qw & \multicgate{7}{\PLdagmod}{PLcolor} & \qw & \qw \\
\lstick{\ket{0}} & \gate{\Xp} &\ghost{\PRmod{}} & \qw & \ctrl{8} &\ghost{\cdots}& \qw & \qw & \qw &\ghost{\cdots}& \qw & \ghost{\PLdagmod} & \gate{\Xp} & \qw \\
\lstick{\raisebox{1.5ex}{$\vdots$}\hspace*{1.25ex}} &&&&& \cdots &&&&  && \\
\lstick{\ket{0}} & \qw &\ghost{\PRmod{}} & \qw & \qw &\ghost{\cdots}& \ctrl{8} & \qw & \qw &\ghost{\cdots}& \qw & \ghost{\PLdagmod} & \qw & \qw \\
\lstick{\ket{0}} & \qw &\ghost{\PRmod{}} & \qw & \qw &\ghost{\cdots}& \qw & \ctrl{4} & \qw & \ghost{\cdots} & \qw & \ghost{\PLdagmod} & \qw & \qw \\
\lstick{\ket{0}} & \qw &\ghost{\PRmod{}} & \qw & \qw &\ghost{\cdots}& \qw & \qw & \ctrl{4} & \ghost{\cdots} & \qw & \ghost{\PLdagmod} & \qw & \qw \\
\lstick{\raisebox{1.5ex}{$\vdots$}\hspace*{1.25ex}} &&&&&  &&&&\cdots&&&\\
\lstick{\ket{0}} & \qw &\ghost{\PRmod{}} & \qw & \qw &\ghost{\cdots}& \qw & \qw & \qw & \ghost{\cdots} & \ctrl{4} & \ghost{\PLdagmod} & \qw & \qw \\
 & \qw& \qw & \targ{} & \qw &\ghost{\cdots}& \qw & \ctrl{0} & \qw & \ghost{\cdots} & \qw & \qw & \qw & \qw &\\
& \qw & \qw & \qw & \targ{} &\ghost{\cdots}& \qw & \qw & \ctrl{0} & \ghost{\cdots} & \qw & \qw & \qw & \qw & \\
&&&&& \cdots &&&&\cdots&& \\
& \qw & \qw & \qw & \qw &\ghost{\cdots}& \targ{} & \qw & \qw & \ghost{\cdots} & \ctrl{0} & \qw & \qw & \qw
\inputgroupv{10}{13}{1em}{2em}{\ket{\varphi}}
\end{myqcircuitr}
\]
\caption{Circuit implementing the controlled FOQCS-LCU block encoding of $\Ham$.}
\label{fig:foqcs_lcu_controlled}
\end{figure*}

\begin{figure}[h]
\centering
\begin{tikzpicture}
\begin{axis}[
    width=0.43\textwidth,
    view={0}{-90},
    xtick={2,4,...,20},
    ytick={1,2,...,10},
    xtick style={draw=none},
    ytick style={draw=none},
    yticklabel pos=right,
    xlabel={Number of sites $n$},
    ylabel={Polynomial degree $d$},
    ylabel style={rotate=90},
    colorbar,
    colorbar style={
        ytick={50, 100,...,300}
    },
    title={\cnot{} depth for $p_d(\Ham)$ in the XYZ model}
]
\addplot3[
    matrix plot,
    mesh/cols=10,
    nodes near coords,
    nodes near coords style={font=\tiny},
    coordinate style/.condition={meta < 150}{white},
    nodes near coords style={anchor=center}
] table[header=true] {dat/poly_xyz_depth_2D.dat};
\end{axis}
\end{tikzpicture}
\caption{Total \cnot{} depth for implementing the matrix polynomial $p_d(\Ham)$ of the one-dimensional XYZ Heisenberg Hamiltonian as a function of system size $n$ and polynomial degree $d$, assuming a square-grid connectivity (See \cref{tab:resource_estimation_xyz}).}
\label{fig:poly_xyz_depth_2D}
\end{figure}

Finally, for completeness, a full and explicit circuit implementing a matrix polynomial of the Ising Hamiltonian is included in \cref{sec:full_circuit_ising} for $n = 3$ and $d = 3$.

\subsection{Hadamard tests}\label{sec:hadamard_tests}
The Hadamard test is a canonical method to estimate the expectation value $\braket{\varphi|\genGate|\varphi}$ of a unitary $\genGate$ in a state $\ket{\varphi}$. This is achieved by controlling $\genGate$ and then measuring an ancilla qubit: $X$-basis measurements yield $\real \braket{\varphi|\genGate|\varphi}$ and $Y$-basis measurements $\imag \braket{\varphi|\genGate|\varphi}$. The corresponding Hadamard-test circuit for extracting the real part is
\begin{equation}\label{eq:def_hadamard_test}
 \begin{myqcircuit}
 \lstick{\ket{0}} & \gate{H} & \ctrl{1} & \gate{H} & \meter{}  & \qquad\qquad\qquad\leadsto\real{\bra{\varphi}\genGate\ket{\varphi}}\\
 \lstick{\ket{\varphi}} & \qw {/\strut^{n}} & \gate{\genGate} & \qw & \qw
 \end{myqcircuit}
\end{equation}
where $H$ is the Hadamard gate. 
In the case of a non-unitary operator $\Ham$, we can still perform the Hadamard test by controlling the entire block encoding circuit while initializing and post-selecting the \foxlcu{} ancilla qubits in the state $\ket{0}_{2n}$:
\begin{equation}\label{eq:hadamard_block_encoding}
 \begin{myqcircuit}
 \lstick{\ket{0}} & \gate{H} & \ctrl{1} & \gate{H} & \meter{} & &\qquad\qquad\rightsquigarrow\real{\bra{\varphi}\Ham\ket{\varphi}}\\
 \lstick{\ket{0}} & \qw{/\strut^{n}} & \multigate{2}{\Ufox} & \qw & \meter{} & \cw & 0 \\
 \lstick{\ket{0}} & \qw{/\strut^{n}} & \ghost{\Ufox} & \qw & \meter{} & \cw & 0 \\ 
 \lstick{\ket{\varphi}} & \qw{/\strut^{n}} & \ghost{\Ufox} & \qw & \qw
 \end{myqcircuit}
\end{equation}
This yields the desired $\real{\braket{\varphi|\Ham|\varphi}}$.

In general, controlling a block encoding of $\Ham$ incurs significant overhead.
In \cref{sec:control_foqcs_lcu}, however, we show that this overhead can be substantially reduced for the \foxlcu{} block encoding when the state-preparation oracles satisfy \cref{ass:PR-PL}.
Specifically, in \cref{sec:spin_hamiltonians} we demonstrate that this assumption holds for the one-dimensional XYZ Heisenberg model. It likewise holds for the XXZ and Ising models, as verified in \cref{app:XXZ_model,app:ising_model}.
In all these cases, controlling the block encoding of $\Ham$ costs only two additional \cnot{} gates.

As a result, \foxlcu{} is particularly well suited for Hadamard-test-based applications.
\Cref{fig:hadamard_test} summarizes the resources required to implement a Hadamard test for each considered spin Hamiltonian, for both all-to-all and square-grid connectivity architectures.
We observe that, in all cases, both the \cnot{} count and \cnot{} depth for the block encoding circuits scale linearly with the system size $n$.
Moreover, moving from all-to-all to square-grid connectivity incurs only a modest overhead; in fact, the circuit-depth overhead is constant.

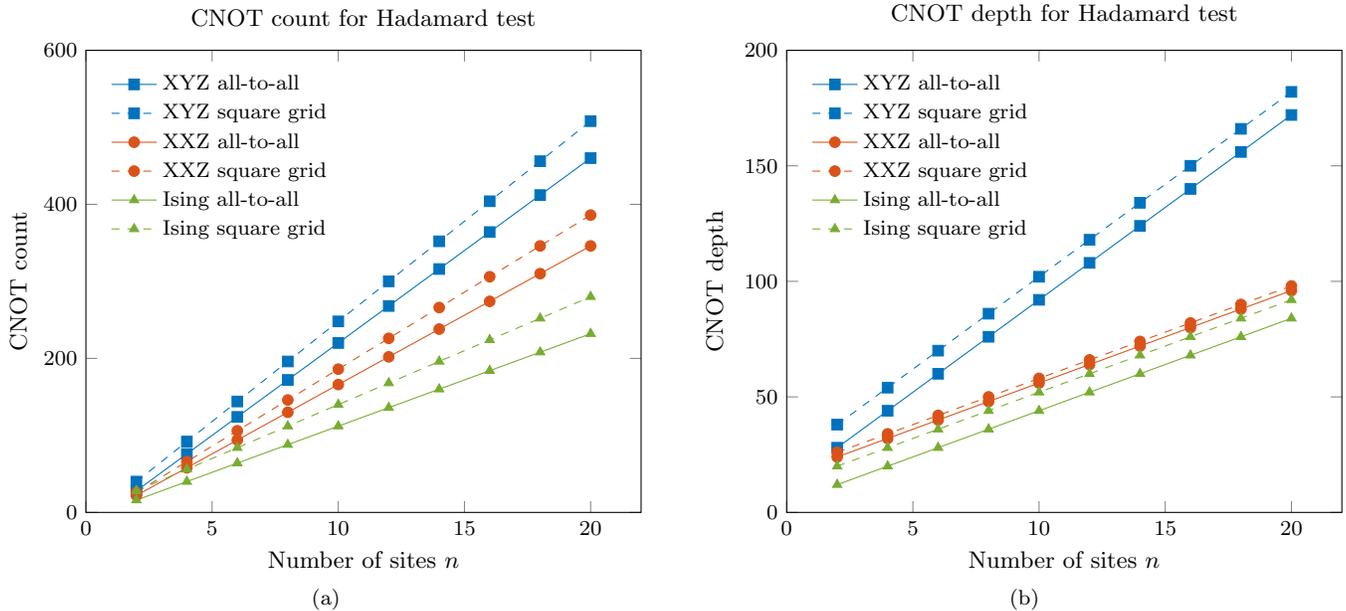
\begin{figure*}
 \centering
 \subfloat[]{%
 \begin{tikzpicture}
\begin{axis}[
  width=0.5\textwidth,
  xmin=0,
  ymin=0, ymax=600,
  xlabel={Number of sites $\n$},
  ylabel={\cnot{} count},
  legend pos=north west,
  legend style={draw=none,fill=none},
  title={\cnot{} count for Hadamard test}
]
\addplot[myblue, mark=square*] table[x index=0, y index=1] {dat/xyz_count.dat};
\addplot[myblue, mark=square*, dashed, mark options={solid}] table[x index=0, y index=2] {dat/xyz_count.dat};
\addplot[myred, mark=*] table[x index=0, y index=1] {dat/xxz_count.dat};
\addplot[myred, mark=*, dashed, mark options={solid}] table[x index=0, y index=2] {dat/xxz_count.dat};
\addplot[mygreen, mark=triangle*] table[x index=0, y index=1] {dat/ising_count.dat};
\addplot[mygreen, mark=triangle*, dashed, mark options={solid}] table[x index=0, y index=2] {dat/ising_count.dat};
\legend{XYZ all-to-all, XYZ square grid, XXZ all-to-all, XXZ square grid, Ising all-to-all, Ising square grid} 
\end{axis}
\end{tikzpicture}
 \label{fig:count_hadamard_test}}%
 \hfill
 \subfloat[]{%
 \begin{tikzpicture}
\begin{axis}[
  width=0.5\textwidth,
  xmin=0,
  ymin=0,ymax=200,
  xlabel={Number of sites $\n$},
  ylabel={\cnot{} depth},
  legend pos=north west,
  legend style={draw=none,fill=none},
  title={\cnot{} depth for Hadamard test}
]
\addplot[myblue, mark=square*] table[x index=0, y index=1] {dat/xyz_depth.dat};
\addplot[myblue, mark=square*, dashed, mark options={solid}] table[x index=0, y index=2] {dat/xyz_depth.dat};
\addplot[myred, mark=*] table[x index=0, y index=1] {dat/xxz_depth.dat};
\addplot[myred, mark=*, dashed, mark options={solid}] table[x index=0, y index=2] {dat/xxz_depth.dat};
\addplot[mygreen, mark=triangle*] table[x index=0, y index=1] {dat/ising_depth.dat};
\addplot[mygreen, mark=triangle*, dashed, mark options={solid}] table[x index=0, y index=2] {dat/ising_depth.dat};
\legend{XYZ all-to-all, XYZ square grid, XXZ all-to-all, XXZ square grid, Ising all-to-all, Ising square grid}
\end{axis}
\end{tikzpicture}
 \label{fig:depth_hadamard_test}}%
 \caption{\protect\cnot{} count \subref{fig:count_hadamard_test} and depth \subref{fig:depth_hadamard_test} for a circuit implementing the Hadamard test, defined in \cref{eq:def_hadamard_test}. The computational cost for preparing the state $\ket{\varphi}$ is excluded. We consider the one-dimensional XYZ and XXZ Heisenberg models as well as the Ising model, with computational costs detailed in \cref{tab:resource_estimation_xyz,tab:resource_estimation_xxz,tab:resource_estimation_ising}.}
 \label{fig:hadamard_test}
\end{figure*}

\subsection{Time evolution and Loschmidt echo}\label{sec:time_evolution}
A direct application of matrix polynomial transformations is the approximation of the time-evolution operator by a degree-$d$ polynomial,
\begin{equation}\label{eq:approx_time_evolution}
e^{-i\Ham t} \approx p_d(\Ham) = \sum_{k=0}^d c_k \Ham^k ,
\end{equation}
which can be obtained, for instance, through a Taylor or Chebyshev expansion.

Additionally, the Hadamard tests can be applied to the time evolution operator to measure the Loschmidt echo $g(t) = \bra{\varphi} e^{-i\Ham t} \ket{\varphi}$:
\begin{equation}\label{eq:def_loschmidt_circuit}
\begin{myqcircuit}
\lstick{\ket{0}} & \gate{H} & \ctrl{1} & \gate{H} & \meter{} && \qquad\quad\rightsquigarrow \real(g(t)) \\
\lstick{\ket{\varphi}} & \qw{/\strut^{n}} & \gate{e^{-i\Ham t}} & \qw & \qw
\end{myqcircuit}
\end{equation}
The circuit above constitutes a core component of many hardware-efficient quantum phase estimation (QPE) methods~\cite{Somma_2019,Lu2021,Lin2022,blunt2023,Dutkiewicz2022,schiffer2025}.

With the approximation of \cref{eq:approx_time_evolution}, we can estimate $g(t) \approx \bra{\varphi} p_d(\Ham) \ket{\varphi}$ by controlling the block encoding of the matrix polynomial $p_d(\Ham)$ while initializing and post-selecting all ancilla qubits in the state $\ket{0}_{2dn}$, similarly to \cref{eq:hadamard_block_encoding}:
\begin{equation}
\hspace*{-2cm}
 \begin{myqcircuit}
 \lstick{\ket{0}} & \gate{H} & \ctrl{1} & \gate{H} & \meter{} & & \qquad\qquad\qquad\rightsquigarrow\real{\bra{\varphi} p_d(\Ham) \ket{\varphi}}\\
 \lstick{\ket{0}} & \qw{/\strut^{2dn}} & \multigate{1}{p_d(\Ham)} & \qw & \meter{} & \cw & 0 \\
 \lstick{\ket{\varphi}} & \qw {/\strut^{n}} & \ghost{p_d(\Ham)} & \qw & \qw
 \end{myqcircuit}
\end{equation}

In \cref{sec:control_simp_poly}, we establish that controlling the circuit implementing $p_d(\Ham)$ incurs only a trivial overhead of four additional \cnot{} gates. Consequently, the \foxlcu{} block encoding can substantially reduce the control cost associated with QPE routines, therefore improving their feasibility on near-term quantum hardware.

\section{Conclusions}
\label{sec:conclusions}
By leveraging the \foxlcu{} formalism, we have shown that matrix polynomial transformations can be implemented with a \cnot{}-depth overhead that is only linear in the polynomial degree $d$, and independent of both the system size $n$ and the cost of block encoding the original matrix.
For the spin models considered here, this results in an overall \cnot{} depth scaling as the sum $c_1 n + c_2 d$, with small integer coefficients $c_1$ and $c_2$.
By contrast, a QSVT-based implementation for the same systems would entail a circuit depth proportional to the product $d\cdot n$.

This substantial reduction in circuit depth comes at an expense of additional ancilla qubits.
While this trade-off must be taken into account, it aligns well with the current trends in quantum hardware development: qubit counts continue to increase steadily, whereas achievable circuit depth remains a primary bottleneck due to noise and coherence constraints.
From this perspective, prioritizing shallow and highly parallelizable circuits is often more practical than minimizing the total qubit count. 

Moreover, we have shown that both the \foxlcu{} block encoding and the associated matrix polynomial circuits can be controlled with negligible overhead.
In particular, a Hadamard test applied to a \foxlcu{} block encoding requires only $2$ extra \cnot{} gates, while a Hadamard test applied to a matrix polynomial implemented using \foxlcu{} incurs the overhead of just $2$ controlled single-qubit rotations, corresponding to a total of $4$ extra \cnot{} gates.

For all applications considered, we provide explicit circuit constructions, including qubit mappings for two-dimensional square-grid architectures, and we show that the additional circuit depth incurred by this connectivity constraint remains constant with respect to the system size.

The circuits and techniques introduced in this paper are broadly applicable to problems across quantum simulation and quantum linear algebra, and are not limited to the specific physical models considered here.
Furthermore, the resulting circuit depths and qubit requirements bring several of these constructions within reach of current and near-term quantum hardware. In doing so, our framework opens the door to practical implementations of block-encoding-based algorithms that have, until now, remained largely theoretical due to prohibitive resource demands.

\section*{Acknowledgments}
MN acknowledges funding by the Munich Quantum Valley, section K5 Q-DESSI. The research is part of the Munich Quantum Valley, which is supported by the Bavarian state government with funds from the Hightech Agenda Bayern Plus.
This research is supported by the U.S. Department of Energy, Office of Science, Accelerated Research in Quantum Computing, Fundamental Algorithmic Research toward Quantum Utility (FAR-Qu).

\bibliography{bibliography}

\newpage
\onecolumngrid
\appendix
\labelformat{subsection}{\thesection#1}

\section{Proof of \cref{lemma:product}} \label{app: proof lemma}
Applying the $\PRindex{\Mat{1}}$ and $\PRindex{\Mat{2}}$ oracles to the initial state $\ket{0}_{2n} \ket{0}_{2n} \ket{\varphi}$ yields: 
\begin{align}
 \PRindex{\Mat{1}}\ket{0^{\otimes\n}}\ket{0^{\otimes\n}}&\otimes \PRindex{\Mat{2}}\ket{0^{\otimes\n}}\ket{0^{\otimes\n}}\ket{\varphi} \nonumber \\
&=\left(\sum_{\ix,\jz=0}^{2^{\n}-1}e^{i(\arg(\alpha_{\ix\jz}))}\sqrt{\abs{\alpha_{\ix\jz}}}\ket{\ix}\ket{\jz}\right )\otimes\left(\sum_{p,q=0}^{2^{\n}-1}e^{i(\arg(\beta_{pq}))}\sqrt{\abs{\beta_{pq}}}\ket{p}\ket{q}\right ) \ket{\varphi} ,
\end{align}
where the tensor product can be expanded as:
\begin{equation}
\sum_{p,q,i,j= 0}^{2^{\n}-1}e^{i(\arg(\alpha_{\ix\jz})+\arg(\beta_{pq}))}\sqrt{\abs{\alpha_{\ix\jz}\beta_{pq}}}\ket{\ix}\ket{\jz}\ket{p}\ket{q}\ket{\varphi}.
\end{equation}
After the first $n$ \cnot{}s and $n$ \cz{}s of the \select{} section, the state becomes:
\begin{equation}
\sum_{p,q,i,j= 0}^{2^{\n}-1}e^{i(\arg(\alpha_{\ix\jz})+\arg(\beta_{pq}))}\sqrt{\abs{\alpha_{\ix\jz}\beta_{pq}}}\ket{\ix}\ket{\jz}\ket{p}\ket{q} \otimes\left ( \bigotimes_{\ell=0}^{n-1} Z^{\texttt{i}_\ell}X^{\texttt{j}_\ell} \right)\ket{\varphi},
\end{equation}
and subsequently, after the other $n$ \cnot{}s and $n$ \cz{}s, we get:
\begin{equation}
\sum_{p,q,i,j= 0}^{2^{\n}-1}e^{i(\arg(\alpha_{\ix\jz})+\arg(\beta_{pq}))}\sqrt{\abs{\alpha_{\ix\jz}\beta_{pq}}}\ket{\ix}\ket{\jz}\ket{p}\ket{q} \otimes \left ( \bigotimes_{\ell=0}^{n-1} Z^{\texttt{q}_\ell}X^{\texttt{p}_\ell}Z^{\texttt{\jz}_\ell}X^{\texttt{\ix}_\ell} \right)\ket{\varphi}.
\end{equation}
Finally, we apply the left state preparation oracles $\PLindexdag{\Mat{1}}$ and $\PLindexdag{\Mat{2}}$ and we project onto the ancilla qubits:
\begin{equation}
 _{4n}\bra{0}\PLindexdag{\Mat{2}}\otimes \PLindexdag{\Mat{1}} = \sum_{p,q,i,j= 0}^{2^{\n}-1}\sqrt{\abs{\alpha_{\ix\jz}\beta_{pq}}}\bra{\ix}\bra{\jz}\bra{p}\bra{q}\, ,
\end{equation}
so that we get the final (non-normalized) state:
\begin{equation}
 \sum_{i,j,p,q = 0}^{2^n-1}\alpha_{ij}\beta_{pq}\left(\bigotimes_{\ell= 0}^{n-1}Z^{\texttt{q}_\ell}X^{\texttt{p}_\ell}Z^{\texttt{j}_\ell}X^{\texttt{i}_\ell}\right )\ket{\varphi} = C\ket{\varphi} , 
\end{equation}
for $C$ defined in \cref{product of 2 be}, which completes the proof.

\section{Proof of \cref{th:matrix-poly}}
\label{app:proof_matrix_polynomial}
We prove that the circuit in \cref{fig:poly_general_pr} correctly implements the matrix polynomial defined in \cref{eq:poly_transform_H}.
We begin by analyzing the action of \polyR{}, defined in \cref{eq:circuit_poly_r}, on the initial state $\ket{0}_d \ket{0}_{(2d+1)n}\ket{\varphi}$.
Applying \polyR{} yields
\begin{equation}
 \ket{0}_{d}\ket{0}_{(2d+1)n}\ket{\varphi}
 \xrightarrow{\polyR{}}
 \frac{1}{\sqrt{\polynormfact}}
 \sum_{k=0}^{d}
 e^{i\sum_{j=0}^{k-1}\phi_j}
 \sqrt{\abs{a_k}\,\normfact^{k}}\,
 \ket{k_u}\ket{0}_{2dn}\ket{\varphi}.
\end{equation}
By definition of the unary-encoded states $\ket{k_u}$ in \cref{eq:unary-ku}, each such state contains exactly $k$ leading ones.
Consequently, $\ket{k_u}$ activates precisely the first $k$ \PR{} gates in the circuit.
Applying \cref{corol:powers} to the normalized matrix $\frac{\Ham}{\normfact}$, we conclude that, after applying all \PR{}, \select{}, and \PLdag{} oracles, followed by post-selection on the all-zero outcome of the $2dn$ ancilla qubits on which \PR{} and \PLdag{} act, the resulting state is: 
\begin{equation}
 \frac{1}{\polynormfact}\sum_{k=0}^{d}e^{i\sum_{j=0}^{k-1} \phi_j}
 \sqrt{\abs{a_k}\,\normfact^{k}}\;
 \ket{k_u}
 \frac{\Ham^{k}}{\normfact^{k}}
 \ket{\varphi}.
\end{equation}
Next, by applying $\polyL^\dagger$, defined by \cref{eq:circuit_poly_l}, as well as measuring and post-selecting $\ket{0}$ on the remaining $d$ ancilla qubits, we obtain the (non-normalized) state:
\begin{equation}\label{eq:proof_theorem3_1}
 \sum_{k=0}^{d} e^{i\sum_{j=0}^{k-1}\phi_j}\abs{a_k}\,\Ham^k\ket{\varphi}\, .
\end{equation}
Now, we just have to observe that from \cref{eq:POLY-phik} we have:
\begin{equation}
 \sum_{j=0}^{k-1} \phi_j = \arg(a_k) - \arg(a_0),
\end{equation}
so that the final state in \cref{eq:proof_theorem3_1} becomes:
\begin{equation}
e^{-i\arg(a_0)}
\sum_{k=0}^{d} e^{i\arg(a_k)}\abs{a_k}\,\Ham^k\ket{\varphi}\, =e^{-i\arg(a_0)}
p_{d}(\Ham)\ket{\varphi}.
\end{equation}
The resulting transformation is exactly the desired one, up to an irrelevant global phase $e^{-i\arg(a_0)}$, which completes the proof.

\section{\PR{} circuits and resource estimation}
\label{app:PR circuits}
In this section, we present an explicit construction of the \PR{} circuit for three one-dimensional spin models: the Heisenberg XYZ, Heisenberg XXZ, and Ising models. By explicitly leveraging the structure of the underlying Hamiltonians, we obtain implementations that are significantly more efficient than generic constructions. 

For each case, we provide the non-asymptotic resource analysis for both the \foxlcu{} block encoding and the associated matrix-polynomial circuits, considering implementations with and without the additional overhead induced by controlling the entire circuit. Furthermore, we analyze two distinct assumptions on the connectivity of the physical device, namely all-to-all connectivity and a square-grid architecture.

\subsection{\PR{} circuit and resource estimation for the one-dimensional Heisenberg XYZ model}\label{app:xyz_model}
We consider the one-dimensional XYZ Heisenberg model:
\begin{equation}\label{eq:def_xyz_2}
\Ham = g \sum_{i=0}^{n-1} Z_i + \sum_{i=0}^{n-2} \Jx X_i X_{i+1} +\Jy Y_i Y_{i+1} +\Jz Z_i Z_{i+1}.
\end{equation}
\Cref{fig:simp_pr_1} shows the circuit implementing the \PR{} oracle for such a model and for $n=4$. For completeness, we repeat the circuit, and describe how to generalize it for other values of $n$.
\begin{equation}\label{eq:circuit_pr_xyz}
\begin{myqcircuit*}{0}{0.4}
&&&&&\lstick{\ket{0}} & \qw \barrier[-1.1em]{7} & \qw & \qw \barrier[-1.1em]{7} & \qw & \qw & \qw & \qw & \qw & \qw \barrier[-1.1em]{7} & \qw & \qw & \qw & \qw & \qw & \qw & \qw & \qw \barrier[-1.1em]{7} & \qw & \qw & \qw& \qw & \qw & \qw \barrier{7} & \targ & \qw & \qw & \qw & \rstick{q_1} \\
&&&&&\lstick{\ket{0}} & \gate{X} & \ctrl{3} & \targ & \qw &\qw &\ctrl{1} & \targ & \qw & \qw & \qw & \qw & \qw & \qw & \qw & \qw & \qw & \qw & \ctrl{4} & \qw & \qw & \qw & \qw & \qw & \ctrl{-1} & \targ & \qw & \qw & \rstick{q_2} \\
&&&&&\lstick{\ket{0}} & \qw & \qw & \qw & \qw& \qw& \gate{R_y} & \ctrl{-1} & \ctrl{1} & \targ & \qw & \qw & \qw & \qw & \qw & \qw & \qw & \qw & \qw & \qw & \ctrl{4} & \qw & \qw& \qw & \qw & \ctrl{-1} & \targ & \qw & \rstick{q_3} \\
&&&&&\lstick{\ket{0}} & \qw & \qw & \qw & \qw & \qw &\qw &\qw  & \gate{R_y} & \ctrl{-1} & \qw & \qw & \qw & \qw & \qw & \qw & \qw & \qw & \qw & \qw & \qw & \qw & \ctrl{4} & \qw & \qw & \qw & \ctrl{-1} & \qw & \rstick{q_4} \\
&&&&&\lstick{\ket{0}} & \qw & \gate{R_y} & \ctrl{-3} & \ctrl{1} & \targ & \qw & \qw & \qw & \qw & \gate{R_y} & \qw & \qw & \qw & \qw & \qw & \qw & \targ & \qw & \qw & \qw & \qw & \qw & \qw & \targ & \qw & \qw & \qw & \rstick{q_5} \\
&&&&&\lstick{\ket{0}} & \qw & \qw & \qw & \gate{R_y} & \ctrl{-1} & \ctrl{1} & \targ & \qw & \qw & \ctrl{-1} & \gate{R_y} & \qw & \qw & \qw & \qw & \targ & \ctrl{-1} & \gate{R_y} & \qw & \qw & \qw & \qw & \qw & \ctrl{-1} & \targ & \qw & \qw & \rstick{q_6} \\
&&&&&\lstick{\ket{0}} & \qw & \qw & \qw & \qw & \qw & \gate{R_y} & \ctrl{-1} & \ctrl{1} & \targ & \qw & \ctrl{-1} & \gate{R_y} & \qw & \qw & \targ & \ctrl{-1} & \qw & \qw & \qw & \gate{R_y} & \qw & \qw & \qw & \qw & \ctrl{-1} & \targ & \qw & \rstick{q_7} \\
&&&&&\lstick{\ket{0}} & \qw & \qw & \qw & \qw & \qw & \qw & \qw & \gate{R_y} & \ctrl{-1} & \qw & \qw & \ctrl{-1} & \qw & \qw & \ctrl{-1} & \qw & \qw & \qw & \qw & \qw & \qw & \gate{R_y} & \qw & \qw & \qw & \ctrl{-1} & \qw & \rstick{q_8} \inputgroupv{1}{4}{0.8em}{2.4em}{\ket{i}} \inputgroupv{5}{8}{0.8em}{2.4em}{\ket{j}} \\
&&&&&& \circled[0.35]{{\scriptsize\PRc{}}} & \;\circled{1} &&& \circled{2} &&&&&& \;\circled{3} &&&&&&&& \;\circled{4} &&&&&& \;\circled{5}
\end{myqcircuit*}
\end{equation}
\bigskip

For clarity, we denote the $\Xp$ and $\Zp$ registers with the states $\ket{i}$ and $\ket{j}$ and we partition the \PR{} circuit into $6$ sections, in which we specify the angles $\theta$ required for the $R_y$ gates.

\begin{enumerate}
 \item[{\circled[0.35]{{\scriptsize\PRc{}}}}.] The first gate acts as an \emph{activation} for the whole circuit: the rest of the circuit has the state $\ket{0}_{2n}$ as eigenstate, therefore, and following the notation from \cref{ass:PR-PL}, the gate \PRc{} corresponds to an $\Xp$ gate.
 \item[\circled{1}.] This gate corresponds to the gate $\subr{\theta}$ defined in \cite{foqcs-lcu-arxiv}. There, it was proven that it can be implemented with \cnot{} count and depth of $2$. We define the standard angle as:
 \begin{equation}
\tilde{\theta}
=
2 \arccos \left(
\sqrt{\dfrac{(|\Jx| + |\Jy|)(n-1)}{\normfact}}
\right),
\end{equation}
Then, the rotation angle depends on the sign of the coefficients $g$ and $\Jx$: 
\begin{equation}
\theta =
\begin{cases}
\tilde{\theta}, & g \ge 0,\; \Jx \ge 0, \\[2mm]
 2\pi-\tilde{\theta} , & g \ge 0,\; \Jx < 0, \\[2mm]
\tilde{\theta} + 2\pi, & g < 0,\; \Jx < 0, \\[2mm]
-\tilde{\theta} , & g < 0,\; \Jx \ge 0.
\end{cases}
\end{equation}
 We remark that, even when adding angles of $2\pi$, the cosine and sine change because in the $R_y$ gate, the cosine and sine are evaluated at $\frac\theta2$, as defined in \cref{eq:rydef}.
 \item[\circled{2}.] Implementation of two balanced Dicke states \cite{foqcs-lcu-arxiv,Bartschi_2019}: $n-2$ $\Gamma$ gates on the $\Xp$ register and $n-1$ $\Gamma$ gates on the $\Zp$ register. The first rotation angle on the $\Zp$ register is:
 \begin{equation}\label{eq: first angle}
 \theta_0 = 2\arccos\left(\sqrt{\frac{\abs{g}}{\abs{g}n+\abs{\Jz}(n-1)}}\right),
 \end{equation}
 while the other angles for both registers are defined as:
 \begin{equation}\label{eq:angles Dicke}
 \theta_i = 2\arccos\left(\sqrt{\frac{1}{n-i}}\right),
 \end{equation}
 for $i = 1,\dots, n-2$. The total \cnot{} count is $2(2n-3)$ since we have $2n-3$ $\Gamma$ gates. 
 \item[\circled{3}.] Acting on the $\Zp$ register, we have a ladder of $n-1$ controlled $R_y$ gates with constant angles: 
 \begin{align}
  \theta &=  
\begin{cases}
\tilde{\theta} & g\cdot\Jz \ge 0 \\[1ex]
-\tilde{\theta} & g\cdot \Jz < 0
\end{cases}, &
 \tilde\theta &= 2\arccos\left(\sqrt{\frac{|g|}{|g| + |\Jz|}}\right).
 \end{align}
 Then we apply a ladder of $n-1$ \cnot{} gates. The total \cnot{} count of this part is $3(\n-1)$ since we need 2 \cnots{} to decompose each controlled $R_y$ gate. 
 \item[\circled{4}.] Ladder of $n-1$ $R_y$ gates connecting the two registers, with constant rotation angles:
 \begin{align}\label{eq:angles_xyz_4}
 \theta &=  
\begin{cases}
\tilde{\theta} & \Jx\cdot\Jy \le 0 \\[1ex]
-\tilde{\theta} & \Jx\cdot\Jy>0
\end{cases}, &
 \tilde\theta &= 2\arccos\left(\sqrt{\frac{\abs{\Jx}}{\abs{\Jx}+\abs{\Jy}}}\right).
 \end{align}
 This part needs $2(\n-1)$ \cnots{}.
 \item[\circled{5}.] A ladder of $n-1$ \cnots{} is finally added on both registers.
\end{enumerate}

For \PL{}, the structure of the circuit is the same as \cref{eq:circuit_pr_xyz}.
Since the phase information of the Hamiltonian coefficients is entirely stored in \PR{}, we choose the standard rotation angles $\theta \equiv\tilde{\theta}$ throughout the \PL{} circuit.

The resource analysis for the state preparation oracles, the entire \foxlcu{} block encoding and the matrix polynomial algorithm presented in \cref{sec:poly_general} is reported in \cref{tab:resource_estimation_xyz}.
We consider both the all-to-all connectivity and the square grid connectivity.

\subsection{\PR{} circuit and resource estimation for the one-dimensional Heisenberg XXZ model}\label{app:XXZ_model}
Next, we show how the circuit in \cref{fig:simp_pr_1} (and equivalently \cref{eq:circuit_pr_xyz}), further simplifies when we have $g= 0$ and $\Jx = \Jy = J$, i.e, the XXZ model:
\begin{equation}\label{eq:def_heisenberg xxz}
\Ham = \sum_{i=0}^{n-2} J X_i X_{i+1} +J Y_i Y_{i+1} +\Jz Z_i Z_{i+1}.
\end{equation}
More specifically, the circuit for \PR{} is:
\begin{equation}\label{eq:circuit_pr_xxz}
\begin{myqcircuitr}{0}
&&&\lstick{\ket{0}} & \qw \barrier[-1.1em]{7} & \qw & \qw & \qw \barrier[-1.6em]{7} & \qw & \qw & \qw & \qw \barrier[-2.5em]{7} & \qw & \qw & \qw & \qw & \qw & \qw \barrier{7} & \targ & \qw & \qw & \qw & \rstick{q_1} \\
&&&\lstick{\ket{0}} & \gate{X} & \ctrl{4} & \targ & \qw & \ctrl{1} & \targ & \qw & \qw & \ctrl{4} & \qw & \qw & \qw & \qw & \qw & \ctrl{-1} & \targ & \qw & \qw & \rstick{q_2}\\
&&&\lstick{\ket{0}} & \qw & \qw & \qw & \qw & \gate{R_y} & \ctrl{-1} & \ctrl{1} & \targ & \qw & \qw & \ctrl{4} & \qw & \qw& \qw & \qw & \ctrl{-1} & \targ & \qw & \rstick{q_3} \\
&&&\lstick{\ket{0}} & \qw & \qw & \qw & \qw & \qw & \qw & \gate{R_y} & \ctrl{-1} & \qw & \qw & \qw & \qw & \ctrl{4} & \qw & \qw & \qw & \ctrl{-1} & \qw & \rstick{q_4} \\
&&&\lstick{\ket{0}} & \qw & \qw & \qw & \qw & \qw & \qw & \qw & \qw & \qw & \qw & \qw & \qw & \qw & \qw & \targ & \qw & \qw & \qw & \rstick{q_5} \\
&&&\lstick{\ket{0}} & \qw & \gate{R_y} & \ctrl{-4} & \qw & \ctrl{1} & \targ & \qw & \qw & \gate{R_y(\nicefrac{-\pi}{2})} & \qw & \qw & \qw & \qw & \qw & \ctrl{-1} & \targ & \qw & \qw & \rstick{q_6} \\
&&&\lstick{\ket{0}} & \qw & \qw & \qw & \qw & \gate{R_y} & \ctrl{-1} & \ctrl{1} & \targ & \qw & \qw & \gate{R_y(\nicefrac{-\pi}{2})} & \qw & \qw & \qw & \qw & \ctrl{-1} & \targ & \qw & \rstick{q_7} \\
&&&\lstick{\ket{0}} & \qw & \qw & \qw & \qw & \qw & \qw & \gate{R_y} & \ctrl{-1} & \qw & \qw & \qw & \qw & \gate{R_y(\nicefrac{-\pi}{2})} & \qw & \qw & \qw & \ctrl{-1} & \qw & \rstick{q_8} \inputgroupv{1}{4}{0.8em}{2.4em}{\ket{i}} \inputgroupv{5}{8}{0.8em}{2.4em}{\ket{j}}\\
&&&& \circled[0.35]{{\scriptsize\PRc{}}} & \circled{1} && & \circled{2} &&&& \circled{4} &&&&&& \circled{5}
\end{myqcircuitr}
\end{equation}
\bigskip

Following the same structure as in \cref{app:xyz_model}, we only list and describe the sections in \cref{eq:circuit_pr_xxz} that differ from \cref{eq:circuit_pr_xyz}:
\begin{enumerate}
 \item[\circled{1}.] Same as \cref{eq:circuit_pr_xyz}, except the target of $R_y$ and the control of the following \cnot{} is on the second qubit of the $\Zp$ register instead of the first.
 \item[\circled{2}.] Implementation of two balanced Dicke states \cite{foqcs-lcu-arxiv,Bartschi_2019}: $n-2$  $\Gamma$ gates on both the $\Xp$ and $\Zp$ register, with rotation angles as defined in \cref{eq:angles Dicke}.
 \item[\circled{3}.] There is no equivalent section here compared to \cref{eq:circuit_pr_xyz}.
 \item[\circled{4}.] Ladder of $n-1$ $R_y$ gates connecting the two registers. Since $\Jx=\Jy$, \cref{eq:angles_xyz_4} becomes:
 \begin{equation} \label{eq: angle Ry fixed}
 \theta = -\tilde{\theta}  = -\frac{\pi}{2} . 
 \end{equation}
\end{enumerate}

For \PL{}, the structure of the circuit is the same as \cref{eq:circuit_pr_xxz}.
Since the phase information of the Hamiltonian coefficients is entirely stored in \PR{}, we choose the standard rotation angles $\theta \equiv\tilde{\theta}$ throughout the \PL{} circuit. For instance, \cref{eq: angle Ry fixed} becomes $\theta=\nicefrac{\pi}{2}$.
A resource analysis of the presented circuits for the one-dimensional Heisenberg XXZ model is summarized in \cref{tab:resource_estimation_xxz}.

\begin{table*}
 \begin{tabular}{|c|cc|cc|cc|}
 \hline
 \multirow{2}{*}{\bf Circuit} & \multicolumn{2}{c|}{\bf \cnot{} count} & \multicolumn{2}{c|}{\bf \cnot{} depth} & \multicolumn{2}{c|}{\bf Number of qubits} \\
 \cline{2-7}
 & all-to-all 
 & square grid 
 & all-to-all 
 & square grid 
 & all-to-all 
 & square grid  \\
 \hline
 \PR{}/\PLdag{} & $8 \n -8$ & $8\n-8$& $2\n + 6$ &$2\n +6$ & $2n\phantom{{}+1}$ &$2n\phantom{{}+1}$\\
 Controlled-\PR{}/\PLdag{} & $8 \n -7$ & $8\n-7$& $2\n + 7$ &$2\n +7$ & $2n+1$ &$2n+1$\\
 \hline
 \foxlcu{} & $18\n-16$& $20\n-16$ & $4\n+14$&$4\n+16$ & $3n\phantom{{}+1}$&$3n\phantom{{}+1}$\\
 Controlled-\foxlcu{} & $18\n-14$ & $20\n-14$ & $4\n+16$ & $4\n+18$ & $3n+1$ & $ 3n+1 $\\
 \hline
 $p_d(\Ham)$ & \,$18d\n-12d-4$\, & \,$20d\n-4d-6$\, & \,$4\n+6d+10$\, & \,$4\n+10d+12$\, & $2dn+d+n\phantom{{}+1}$& $2dn+2d+n\phantom{{}+1}$\\
 Controlled-$p_d(\Ham)$ & $18d\n-12d$ &$20d\n-4d-2$ &$4\n+6d+14$ & $4\n+10d+16$& \,$2dn+d+n+1$\, & \,$2dn+2d+n+1$\, \\
 \hline
 \end{tabular}
 \caption{Resource analysis for the \PR{} and \PLdag{} oracles, the \foxlcu{} block encoding, a generic matrix polynomial of $\Ham$ and their controlled circuits for the one-dimensional XXZ Heisenberg Hamiltonian defined in \cref{eq:def_heisenberg xxz}.}
 \label{tab:resource_estimation_xxz}
\end{table*}

\subsection{\PR{} circuit and resource estimation for the one-dimensional Ising model}
\label{app:ising_model}
Finally, we focus on how to simplify the circuit implementing the \PR{} oracles presented in the previous sections, for the Ising model, where the Hamiltonian is defined as:
\begin{equation}\label{eq:def_Ising}
\Ham = g\sum_{i=0}^{n-1}Z_i+J\sum_{i=0}^{n-2} X_i X_{i+1}.
\end{equation}
The circuit for \PR{} is the following:
\begin{equation}\label{eq:circuit_ising_model}
\begin{myqcircuitr}{0}
&&&\lstick{\ket{0}} & \qw \barrier[-1.3em]{7} & \qw & \qw & \qw \barrier[-1.4em]{7}& \qw & \qw	& \qw	& \qw & \qw & \qw \barrier[-0.7em]{7} & \targ &	\qw	& \qw & \qw & \\
&&&\lstick{\ket{0}} & \gate{X} & \ctrl{3}	&	\targ & \qw & \qw& \qw & \ctrl{1} & \targ  & \qw & \qw & \ctrl{-1} & \targ & \qw & \qw \\
&&&\lstick{\ket{0}} & \qw & \qw & \qw & \qw & \qw & \qw& \gate{R_y} & \ctrl{-1} & \ctrl{1} & \targ & \qw & \ctrl{-1} &	\targ & \qw \\
&&&\lstick{\ket{0}} & \qw & \qw & \qw & \qw & \qw & \qw & \qw & \qw& \gate{R_y} & \ctrl{-1} & \qw & \qw & \ctrl{-1} & \qw \\
&&&\lstick{\ket{0}} & \qw & \gate{R_y} & \ctrl{-3} & \qw & \ctrl{1} & \targ & \qw & \qw & \qw & \qw & \qw & \qw & \qw & \qw \\
&&&\lstick{\ket{0}} & \qw & \qw	& \qw & \qw & \gate{R_y} & \ctrl{-1} & \ctrl{1} & \targ & \qw & \qw	& \qw & \qw	& \qw & \qw \\
&&&\lstick{\ket{0}} & \qw & \qw	& \qw &	\qw	& \qw &	\qw	& \gate{R_y} & \ctrl{-1} & \ctrl{1} & \targ & \qw	& \qw & \qw & \qw \\
&&&\lstick{\ket{0}} & \qw & \qw	& \qw & \qw & \qw & \qw & \qw & \qw & \gate{R_y} & \ctrl{-1} & \qw & \qw & \qw & \qw \inputgroupv{1}{4}{0.8em}{2.4em}{\ket{i}} \inputgroupv{5}{8}{0.8em}{2.4em}{\ket{j}}\\
&&&& \circled[0.35]{{\scriptsize\PRc{}}} & \circled{1} && & \circled{2} &&&&&& \circled{5} 
\end{myqcircuitr}
\end{equation} 
\bigskip

Looking at the \PR{} circuit defined in \cref{eq:circuit_pr_xyz} for the XYZ Heisenberg model, we notice that the sections \raisebox{0.30ex}{\circled[0.35]{{\scriptsize\PRc{}}}}, \circled{1}, and \circled{2} are identical, while the sections \circled{3} and \circled{4} are missing.
The only section that differs is \circled{5}, where instead of having one \cnot{} ladder on both the first and second ancilla registers, we only have one ladder in the first register. 

For \PL{}, the structure of the circuit is the same as \cref{eq:circuit_ising_model}.
Since the phase information of the Hamiltonian coefficients is entirely stored in \PR{}, we choose the standard rotation angles $\theta \equiv\tilde{\theta}$ throughout the \PL{} circuit.
A resource analysis of the circuits presented in this paper for the specific case of the Ising model is presented in \cref{tab:resource_estimation_ising}.

\begin{table*}
 \begin{tabular}{|c|cc|cc|cc|}
 \hline
 \multirow{2}{*}{\bf Circuit} & \multicolumn{2}{c|}{\bf \cnot{} count} & \multicolumn{2}{c|}{\bf \cnot{} depth} & \multicolumn{2}{c|}{\bf Number of qubits} \\
 \cline{2-7}
 & all-to-all 
 & square grid 
 & all-to-all 
 & square grid 
 & all-to-all 
 & square grid  \\
 \hline
 \PR{}/\PLdag{} & $5 \n -5$& $5\n-1$& $2\n\phantom{{}+1}$ &$2\n +4$ & $2n\phantom{{}+1}$ &$2n\phantom{{}+1}$\\
 Controlled-\PR{}/\PLdag{} &$5\n-4$ & $5\n\phantom{{}-1}$ &$2\n+1$ &$2\n+5$ & $2n+1$ &$2n+1$\\
 \hline
 \foxlcu{} &$12\n-10$ & $14\n-2$ &$4\n+2$ & $4\n+10$& $3n\phantom{{}+1}$ & $3n\phantom{{}+1}$\\
 Controlled-\foxlcu{} &$12\n-8\phantom{1}$ & $14\n\phantom{{}-2}$ & $4\n+4$& $4\n+12$& $3n+1$ &$3n+1$\\
 \hline
 $p_d(\Ham)$ & \,$12d\n-6d-4$\, & \,$14d\n+11d-6$\, & \,$4\n+6d-2$\, & $4n+10d+8\phantom{1}$ & $2dn+d+n\phantom{{}+1}$ & $2dn+2d+n\phantom{{}+1}$ \\
 Controlled-$p_d(\Ham)$ &$12d\n-6d\phantom{{}-4}$ & $14d\n+11d-2$ &$4\n+6d+2$ & \,$4n+10d+12$\, & \,$2dn+d+n+1$\, & \,$2dn+2d+n+1$\, \\
 \hline
 \end{tabular}
 \caption{Resource analysis for the \PR{} and \PLdag{} oracles, the \foxlcu{} block encoding, a generic matrix polynomial of $\Ham$ and their controlled circuits for the one-dimensional Ising Hamiltonian defined in \cref{eq:def_Ising}.}
 \label{tab:resource_estimation_ising}
\end{table*}

\section{Explicit 2D circuits with square grid connectivity}\label{app:2D circuit}
Standard quantum circuits are typically described using a one-dimensional (1D) qubit ordering and assume arbitrary two-qubit interactions. In contrast, real quantum hardware often restricts two-qubit gates to nearest-neighbor interactions on a two-dimensional (2D) square grid.
In this section, we describe how to map 1D circuits onto a 2D qubit layout that respects these qubit connectivity constraints.

We present explicit 2D circuit implementations for the \select{} oracle of the \foxlcu{} block encoding and for the \PR{} oracle in the specific case of the one-dimensional XXZ Heisenberg model.
Assuming square-grid connectivity, we begin by mapping the qubits of the original 1D circuit onto the 2D grid.

The \foxlcu{} block encoding consists of three qubit registers: $\n$ qubits for the \Xp{} register, denoted by $\ket{i}$, $\n$ qubits for the \Zp{} register, denoted by $\ket{j}$, and $\n$ qubits for the input quantum state $\ket{\varphi}$.
We therefore arrange these qubits in a rectangular $3 \times \n$ layout, as shown in \cref{fig:2dmapping} for the case $n=6$.

\newcommand{\myIIIxVIempty}[1]{%
\begin{tikzpicture}[scale=0.5]
\draw[gray] (0,0) grid (6,-3);
#1
\end{tikzpicture}
}

\begin{figure}[hbtp]
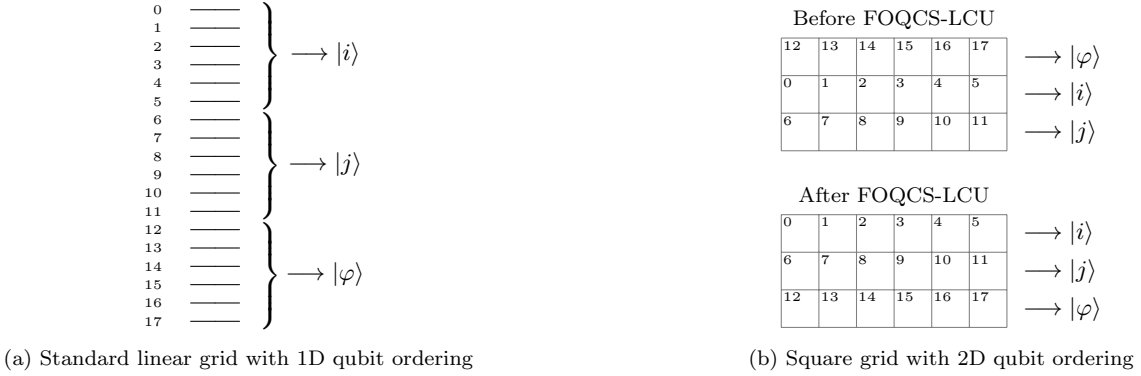

\centering
\subfloat[Standard linear grid with 1D qubit ordering]{%
\begin{minipage}[][14em]{0.5\textwidth}
\centering
$\begin{myqcircuit*}{0.75}{1}
    \lstick{\text{\tiny 0}~~} & \qw & \qw & \\
    \lstick{\text{\tiny 1}~~} & \qw & \qw & \\
    \lstick{\text{\tiny 2}~~} & \qw & \qw & \\
    \lstick{\text{\tiny 3}~~} & \qw & \qw & \\
    \lstick{\text{\tiny 4}~~} & \qw & \qw & \\
    \lstick{\text{\tiny 5}~~} & \qw & \qw & \\
    \lstick{\text{\tiny 6}~~} & \qw & \qw & \\
    \lstick{\text{\tiny 7}~~} & \qw & \qw & \\
    \lstick{\text{\tiny 8}~~} & \qw & \qw & \\
    \lstick{\text{\tiny 9}~~} & \qw & \qw & \\
    \lstick{\text{\tiny10}~~} & \qw & \qw & \\
    \lstick{\text{\tiny11}~~} & \qw & \qw & \\
    \lstick{\text{\tiny12}~~} & \qw & \qw & \\
    \lstick{\text{\tiny13}~~} & \qw & \qw & \\
    \lstick{\text{\tiny14}~~} & \qw & \qw & \\
    \lstick{\text{\tiny15}~~} & \qw & \qw & \\
    \lstick{\text{\tiny16}~~} & \qw & \qw & \\
    \lstick{\text{\tiny17}~~} & \qw & \qw & {}
    \gategroup{1}{4}{6}{4}{0.5em}{\}}
    \gategroup{7}{4}{12}{4}{0.5em}{\}}
    \gategroup{13}{4}{18}{4}{0.5em}{\}}
    \inputgrouph{3}{4}{0.33em}{\longrightarrow\ket{i}}{-5.5em}
    \inputgrouph{9}{4}{0.33em}{\longrightarrow\ket{j}}{-5.5em}
    \inputgrouph{15}{4}{0.33em}{\longrightarrow\ket{\varphi}}{-5.5em}
\end{myqcircuit*}$
\end{minipage}
\label{fig:1d_qubit_foqcs}
}%
\subfloat[Square grid with 2D qubit ordering]{%
\quad\begin{minipage}[][14em]{0.5\textwidth}
\centering
\myIIIxVIempty{%
\mynumbers{0}{0/12,1/13,2/14,3/15,4/16,5/17}
\mynumbers{1}{0/0,1/1,2/2,3/3,4/4,5/5}
\mynumbers{2}{0/6,1/7,2/8,3/9,4/10,5/11}
\draw (6.2,-0.5) node[right] {$\longrightarrow\ket{\varphi}$};
\draw (6.2,-1.5) node[right] {$\longrightarrow\ket{i}$};
\draw (6.2,-2.5) node[right] {$\longrightarrow\ket{j}$};
\draw (3,0) node[above] {\footnotesize Before \foxlcu{}};
} \\ \vspace*{1em}
\myIIIxVIempty{%
\mynumbers{0}{0/0,1/1,2/2,3/3,4/4,5/5}
\mynumbers{1}{0/6,1/7,2/8,3/9,4/10,5/11}
\mynumbers{2}{0/12,1/13,2/14,3/15,4/16,5/17}
\draw (6.2,-0.5) node[right] {$\longrightarrow\ket{i}$};
\draw (6.2,-1.5) node[right] {$\longrightarrow\ket{j}$};
\draw (6.2,-2.5) node[right] {$\longrightarrow\ket{\varphi}$};
\draw (3,0) node[above] {\footnotesize After \foxlcu{}};
} \\
\end{minipage}
\label{fig:2d_qubit_foqcs}
}
\caption{Qubit ordering for \subref{fig:1d_qubit_foqcs} 1D layout and \subref{fig:2d_qubit_foqcs} 2D layout of the quantum circuit implementing the \foxlcu{} block encoding, assuming square grid connectivity. The grid consists of $3\times n$ qubits, here visualized for $n=6$. $\ket{\varphi}$, $\ket{i}$ and $\ket{j}$ correspond to the system, $\Xp$, and $\Zp$ registers, respectively.}
 \label{fig:2dmapping}
\end{figure}

\subsection{2D circuits of the \select{} oracle}
\label{sec:2d_select}
Starting from the qubit mapping in \cref{fig:2dmapping}, we now discuss how to implement the \select{} oracle of the \foxlcu{} block encoding.

First, we notice that while the $\Xp$ register and the system qubits are directly connected, we need to perform SWAP gates to bring the $\Zp$ register and the system qubits close to each other. The 2D circuit is shown in \cref{fig:2d select}.
We remark that each SWAP gate can be decomposed into three \cnots{} \cite{Nielsen_2011}.
Since $n$ \cnot{}s are applied immediately after the first $n$ SWAP gates on the same qubits, these operations can be merged, yielding in a total of $2\n$ \cnots{} for the first two diagrams of \cref{fig:2d select}.
Similarly, the \cz{} gates and the final layer of SWAPs can also be merged:
\begin{equation}
    \begin{myqcircuitr}{1.35}
    &\ctrl{1} & \qw & \qswap{1} & \qw \\
    &\ctrl{0} & \qw & \qswap{0} \qwx & \qw
    \end{myqcircuitr} \ = \
    \raisebox{0.6ex}{$\begin{myqcircuit}
    &\gate{H}&\targ &\gate{H} &\qswap{1} & \qw \\
    &\qw&\ctrl{-1} &\qw &\qswap{0} \qwx & \qw
    \end{myqcircuit}$}
    \ = \
    \begin{myqcircuit}
    &\gate{H}&\targ  &\qswap{1} & \qw &\qw\\
    &\qw&\ctrl{-1} &\qswap{0} \qwx & \gate{H} &\qw
    \end{myqcircuit}
    \ = \
    \begin{myqcircuit}
    &\gate{H}&\ctrl{1} &\targ & \qw &\qw\\
    &\qw&\targ &\ctrl{-1}  & \gate{H} &\qw
    \end{myqcircuit}
\end{equation}
Note that this simplification is made possible by the choice of inverting the system register $\ket{\varphi}$ with the ancilla registers $\ket{i}$ and $\ket{j}$ after the \select{} operation.
Overall, the construction requires $4\n$ \cnots{} and has a \cnot{} depth equal to $4$. 

\begin{figure}[hbtp]
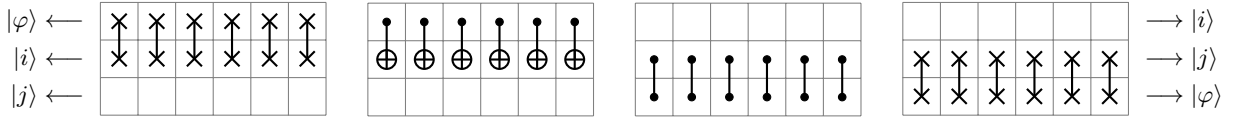

\centering
\myIIIxVIempty{%
\draw (-0.2,-0.5) node[left] {$\ket{\varphi}\longleftarrow$};
\draw (-0.2,-1.5) node[left] {$\ket{i}\longleftarrow$};
\draw (-0.2,-2.5) node[left] {$\ket{j}\longleftarrow$};
\mySWAP0010
\mySWAP0111
\mySWAP0212
\mySWAP0313
\mySWAP0414
\mySWAP0515
} \quad
\myIIIxVIempty{%
\myCNOT0010
\myCNOT0111
\myCNOT0212
\myCNOT0313
\myCNOT0414
\myCNOT0515
} \quad
\myIIIxVIempty{%
\myCZ1020
\myCZ1121
\myCZ1222
\myCZ1323
\myCZ1424
\myCZ1525
} \quad
\myIIIxVIempty{%
\draw (6.2,-0.5) node[right] {$\longrightarrow\ket{i}$};
\draw (6.2,-1.5) node[right] {$\longrightarrow\ket{j}$};
\draw (6.2,-2.5) node[right] {$\longrightarrow\ket{\varphi}$};
\mySWAP1020
\mySWAP1121
\mySWAP1222
\mySWAP1323
\mySWAP1424
\mySWAP1525
} \\
\caption{2D quantum circuit implementing the \select{} oracle for $n=6$ and assuming grid connectivity.}
 \label{fig:2d select}
\end{figure}

\subsection{2D circuits of the \PR{} oracle}
In this section, we utilize the 2D qubit mapping from \cref{fig:2dmapping} to implement the \PR{} oracle of the \foxlcu{} block encoding on a quantum device with $2$D grid connectivity.
Since the state preparation oracles differ between models, here we consider the specific case of the one-dimensional XXZ Heisenberg model defined in \cref{app:XXZ_model}. The results can be immediately generalized to the other spin Hamiltonians presented in this paper.

Starting from the \PR{} circuit from \cref{eq:circuit_pr_xxz}, the two-dimensional circuit is shown in \cref{fig:2dpr} for the case $n=6$.
\PL{} follows the same construction and just differs in the rotation angles, as explained in \cref{app:XXZ_model}.

\newcommand{\myPRempty}[4][]{%
\begin{tikzpicture}[scale=0.75]
\draw[#4] (0,0) grid (#3,-#2);
#1
\end{tikzpicture}}
\newcommand{\myPR}[3]{%
\myPRempty[%
\mynumbers{0}{0/0,1/1,2/2,3/3,4/4,5/5}
\mynumbers{1}{0/6,1/7,2/8,3/9,4/10,5/11}
#3
]{#1}{#2}{gray}}

\begin{figure}[hbtp]
\centering
\myPR{2}{6}{%
\myX01
} \quad
\myPR{2}{6}{%
\myCRY0111
} \quad
\myPR{2}{6}{%
\myCNOT1101
} \\[1em]
\myPR{2}{6}{%
\myCRY0102
\myCRY1112
} \quad
\myPR{2}{6}{%
\myCNOT0201
\myCNOT1211
} \quad
\myPR{2}{6}{%
\myCRY0111
\myCRY1213
\myCRY0203
} \\[1em]
\myPR{2}{6}{%
\myCNOT0100
\myCNOT1110
\myCNOT0302
\myCNOT1312
} \quad
\myPR{2}{6}{%
\myCRY0212
\myCRY0304
\myCRY1314
} \quad
\myPR{2}{6}{%
\myCNOT0201
\myCNOT1211
\myCNOT0403
\myCNOT1413
} \\[1em]
\myPR{2}{6}{%
\myCRY0313
\myCRY0405
\myCRY1415
} \quad
\myPR{2}{6}{%
\myCNOT0302
\myCNOT1312
\myCNOT0504
\myCNOT1514
} \quad
\myPR{2}{6}{%
\myCRY0414
\myCRY0515
} \\[1em]
\myPR{2}{6}{%
\myCNOT0403
\myCNOT1413
} \quad 
\myPR{2}{6}{%
\myCNOT0504
\myCNOT1514
} \quad
\myPRempty{2}{6}{white} \\
\caption{2D quantum circuit implementing the \PR{} oracle of the one-dimensional XXZ Heisenberg model. We assume $n=6$ and square grid connectivity.}
 \label{fig:2dpr}
\end{figure}

\subsection{2D circuit description for the matrix polynomial block encoding}
Finally, we show the 2D qubit mapping utilized for the circuits in \cref{fig:poly}.
Starting from the layout in \cref{fig:2dmapping}, the registers $\ket{i_k}$ and $\ket{j_k}$ for $k = 0,\ldots,d-1$ are stacked vertically, as shown in \cref{fig: 2D circuit poly}. In addition, a column of $2d$ ancillary qubits is placed on the left-hand side of the layout.

\newcommand{\mypolyIVxVIempty}[2][]{%
\begin{tikzpicture}[scale=0.5]
#1
\draw[gray] (0,0) grid (7,-9);
\draw[very thick] (1,0) -- (1,-9);
#2
\end{tikzpicture}}
\newcommand{\mypolyIVxVIbefore}[2][]{%
\mypolyIVxVIempty[{%
#1
\mynumbers{0}{0/0a,1/52,2/53,3/54,4/55,5/56,6/57}
\mynumbers{1}{0/0,1/4,2/5,3/6,4/7,5/8,6/9}
\mynumbers{2}{0/1a,1/10,2/11,3/12,4/13,5/14,6/15}
\mynumbers{3}{0/1,1/16,2/17,3/18,4/19,5/20,6/21}
\mynumbers{4}{0/2a,1/22,2/23,3/24,4/25,5/26,6/27}
\mynumbers{5}{0/2,1/28,2/29,3/30,4/31,5/32,6/33}
\mynumbers{6}{0/3a,1/34,2/35,3/36,4/37,5/38,6/39}
\mynumbers{7}{0/3,1/40,2/41,3/42,4/43,5/44,6/45}
\mynumbers{8}{0/,1/46,2/47,3/48,4/49,5/50,6/51}
}]{#2}}
\newcommand{\mypolyIVxVIafter}[2][]{%
\mypolyIVxVIempty[{%
#1
\mynumbers{0}{0/0,1/4,2/5,3/6,4/7,5/8,6/9}
\mynumbers{1}{0/1a,1/10,2/11,3/12,4/13,5/14,6/15}
\mynumbers{2}{0/1,1/16,2/17,3/18,4/19,5/20,6/21}
\mynumbers{3}{0/2a,1/22,2/23,3/24,4/25,5/26,6/27}
\mynumbers{4}{0/2,1/28,2/29,3/30,4/31,5/32,6/33}
\mynumbers{5}{0/3a,1/34,2/35,3/36,4/37,5/38,6/39}
\mynumbers{6}{0/3,1/40,2/41,3/42,4/43,5/44,6/45}
\mynumbers{7}{0/0a,1/46,2/47,3/48,4/49,5/50,6/51}
\mynumbers{8}{0/,1/52,2/53,3/54,4/55,5/56,6/57}
}]{#2}}

\begin{figure}[hbtp]
\centering
\subfloat[Linear grid with 1D qubit ordering]{%
\begin{minipage}[][19em]{0.3\textwidth}
\centering
$\begin{myqcircuitr}{1.5}
    \lstick{\text{\tiny0,1,2,3}~~} & \qw & {/\strut^{4}}\qw & \qw & \qw & \rstick{\longrightarrow\polyR{}} \\
    \lstick{\text{\tiny\phantom{1}4,\phantom{1}5,\ldots,\phantom{1}9}~~} & \qw & {/\strut^{6}}\qw & \qw & \qw & \rstick{\longrightarrow\ket{i_0}} \\
    \lstick{\text{\tiny10,11,\ldots,15}~~} & \qw & {/\strut^{6}}\qw & \qw & \qw & \rstick{\longrightarrow\ket{j_0}} \\
    \lstick{\text{\tiny16,17,\ldots,21}~~} & \qw & {/\strut^{6}}\qw & \qw & \qw & \rstick{\longrightarrow\ket{i_1}} \\
    \lstick{\text{\tiny22,23,\ldots,27}~~} & \qw & {/\strut^{6}}\qw & \qw & \qw & \rstick{\longrightarrow\ket{j_1}} \\
    \lstick{\text{\tiny28,29,\ldots,33}~~} & \qw & {/\strut^{6}}\qw & \qw & \qw & \rstick{\longrightarrow\ket{i_2}} \\
    \lstick{\text{\tiny34,35,\ldots,39}~~} & \qw & {/\strut^{6}}\qw & \qw & \qw & \rstick{\longrightarrow\ket{j_2}} \\
    \lstick{\text{\tiny40,41,\ldots,45}~~} & \qw & {/\strut^{6}}\qw & \qw & \qw & \rstick{\longrightarrow\ket{i_3}} \\
    \lstick{\text{\tiny46,47,\ldots,51}~~} & \qw & {/\strut^{6}}\qw & \qw & \qw & \rstick{\longrightarrow\ket{j_3}} \\
    \lstick{\text{\tiny52,53,\ldots,57}~~} & \qw & {/\strut^{6}}\qw & \qw & \qw & \rstick{\longrightarrow\ket{\varphi}} \\
\end{myqcircuitr}$
\end{minipage}
\label{fig:1d_qubit_pd}
}%
\subfloat[Square grid with 2D qubit ordering]{%
\begin{minipage}[][19em]{0.7\textwidth}
\centering
\mypolyIVxVIbefore{%
\draw (7.2,-0.5) node[right] {$\longrightarrow\ket{\varphi}$};
\draw (7.2,-1.5) node[right] {$\longrightarrow\ket{i_0}$};
\draw (7.2,-2.5) node[right] {$\longrightarrow\ket{j_0}$};
\draw (7.2,-3.5) node[right] {$\longrightarrow\ket{i_1}$};
\draw (7.2,-4.5) node[right] {$\longrightarrow\ket{j_1}$};
\draw (7.2,-5.5) node[right] {$\longrightarrow\ket{i_2}$};
\draw (7.2,-6.5) node[right] {$\longrightarrow\ket{j_2}$};
\draw (7.2,-7.5) node[right] {$\longrightarrow\ket{i_3}$};
\draw (7.2,-8.5) node[right] {$\longrightarrow\ket{j_3}$};
\draw (0.5,-9.0) node[below] {\rotatebox{-90}{$\longrightarrow$}};
\draw (0.5,-10.3) node[below] {\polyR{}};
\draw (3.5,0) node[above] {\footnotesize Before $p_d(\Ham)$-\foxlcu{}};
} \quad
\mypolyIVxVIafter{%
\draw (7.2,-0.5) node[right] {$\longrightarrow\ket{i_0}$};
\draw (7.2,-1.5) node[right] {$\longrightarrow\ket{j_0}$};
\draw (7.2,-2.5) node[right] {$\longrightarrow\ket{i_1}$};
\draw (7.2,-3.5) node[right] {$\longrightarrow\ket{j_1}$};
\draw (7.2,-4.5) node[right] {$\longrightarrow\ket{i_2}$};
\draw (7.2,-5.5) node[right] {$\longrightarrow\ket{j_2}$};
\draw (7.2,-6.5) node[right] {$\longrightarrow\ket{i_3}$};
\draw (7.2,-7.5) node[right] {$\longrightarrow\ket{j_3}$};
\draw (7.2,-8.5) node[right] {$\longrightarrow\ket{\varphi}$};
\draw (0.5,-9.0) node[below] {\rotatebox{-90}{$\longrightarrow$}};
\draw (0.5,-10.3) node[below] {\polyR{}};
\draw (3.5,0) node[above] {\footnotesize After $p_d(\Ham)$-\foxlcu{}};
} \\
\end{minipage}
\label{fig:2d_qubit_pd}} \\
\caption{Qubit ordering for \subref{fig:1d_qubit_pd} 1D layout and \subref{fig:2d_qubit_pd} 2D layout of the quantum circuit implementing the matrix polynomial encoding shown in \cref{fig:poly}. The grid consists of $2d+(2d+1) n$ qubits, here visualized for $n=6$ and $d=4$. The first $2d$ qubits are positioned in the first column to implement the \polyR{} gate as described in \cref{circuit:poly_2d_general}.}
 \label{fig: 2D circuit poly}
\end{figure}

On these additional $2d$ ancillary qubits, we implement modified versions of the \polyR{} and \polyL{} oracles, differing from their standard realizations presented in \cref{lem:POLYR,lem:POLYL}. In particular, following the qubit mapping of \cref{fig: 2D circuit poly}, we redefine the circuit implementing \polyR{} as follows:
\begin{equation}\label{circuit:poly_2d_general}
\begin{myqcircuitr}{0}
&\lstick{q_{0a}} & \qw & \qw & \qw & \qw & \qw & \qw & \qw & \qw & \qw & \qw & \qw & \qw \\
\lstick{q_0} & \gate{R_y(\theta_0)} & \ctrl{1} & \qw & \qw & \qw & \qw & \qw & \qw & \qw & \qw & \qw & \gate{P(\phi_0)} & \qw \\
&\lstick{q_{1a}} & \targ{} & \ctrl{1} & \qw & \qw & \qw & \qw & \qw & \qw & \qw & \qw & \qw & \qw \\
\lstick{q_1} & \qw & \qw & \gate{R_y(\theta_1)} & \ctrl{1} & \qw & \qw & \qw & \qw & \qw & \qw & \qw & \gate{P(\phi_1)} & \qw \\
&\lstick{q_{2a}} & \qw & \qw & \targ{} & \qw & \qw & \qw & \qw & \qw & \qw & \qw & \qw & \qw  \\
& \raisebox{1.5ex}{$\vdots$} & & & & & & & & \raisebox{1.5ex}{$\ddots$} & & & \raisebox{1.5ex}{$\vdots$} \\
&\lstick{q_{(d-1)a}} & \qw & \qw & \qw & \qw & \qw & \qw & \qw & \qw & \ctrl{1} & \qw & \qw & \qw \\
\lstick{q_{d-1}} & \qw & \qw & \qw & \qw & \qw & \qw & \qw & \qw & \qw & \gate{R_y(\theta_{d-1})} & \qw&\gate{P(\phi_{d-1})} & \qw 
\end{myqcircuitr}
\end{equation}
The rationale behind doubling the number of qubits used to implement \polyR{} can be understood by inspecting \cref{fig:poly_pr_ab} in the specific setting where $\PR{}$ and $\PL{}$ admit the decompositions given in \cref{eq:trivial_control_pr,eq:trivial_control_pldag}. After applying \polyR{}, one encounters $d$ \cnot{} gates with target on the second qubit of each $\ket{i_k}$ ancilla register. 
Consequently, it is advantageous for the $d$ controlling qubits, on which \polyR{} acts, to be placed as close as possible to the corresponding target qubits in the $\ket{i_k}$ registers.
This proximity is achieved by interleaving these $d$ qubits with an additional set of $d-1$ qubits, using the modified implementation shown in \cref{circuit:poly_2d_general}.  We remark that the wire $q_{0a}$ is intentionally left unused in \polyR{}; however, it is needed for \polyL{}, as described below.

Even with this layout, for $k = 0,\ldots,d-1$, the control qubit $q_k$ and its corresponding target qubit in the $\ket{i_k}$ register, namely $q_{d+2nk+1}$, are still not nearest neighbors. A \cnot{} gate between these two qubits can be implemented using three nearest-neighbor \cnot{} gates:
\begin{equation}\label{eq:d2_cnot}
    \begin{myqcircuitr}{1}
        \lstick{q_k} & \ctrl{2} & \qw \\
        \lstick{q_{d+2nk}} & \qw & \qw \\
        \lstick{q_{d+2nk+1}} & \targ{} & \qw
    \end{myqcircuitr}
    \ \ = \qquad\qquad\quad
    \begin{myqcircuitr}{1}
        \lstick{q_k} & \ctrl{1} & \qw & \qw & \qw & \\
        \lstick{q_{d+2nk}} & \targ{} & \ctrl{1} & \targ{} & \qw \\
        \lstick{q_{d+2nk+1}} & \qw & \targ{} & \ctrl{-1} & \qw 
    \end{myqcircuitr}
\end{equation}

With these explicit circuits for \polyR{} and for the \cnot{} gates activating each block-encoding of $\Ham{}$, the full 2D circuit for encoding a polynomial proceeds as follows:
\begin{enumerate}
    \item Apply \polyR{} on qubits $q_{0a}$ through $q_d$ (leftmost column in \cref{fig: 2D circuit poly}) following \cref{circuit:poly_2d_general}.
    \item Apply the ``distance-2 \cnot{}'' gate, \cref{eq:d2_cnot}, on each pair $q_k$, $q_{d+2nk+1}$, for controlled activation of each $\PR{}$. 
    \item Apply $\PRmod{}$ on each pair of ancilla registers: $(i_0,j_0),\cdots,(i_{d-1},j_{d-1})$.
    \item Successively apply $d$ \select{} oracles, \cref{fig:2d select}, from top to bottom.  First we apply it on registers $(\varphi, i_0, j_0)$, and in the process the $\varphi$ register is swapped with the ancillae, so that the order from top to bottom reads $i_0$, $j_0$, $\varphi$, $i_1$, $\cdots$.  Then repeat on $(\varphi,i_1,j_1)$, after which the data register $\varphi$ ends up below $j_1$.  Continuing in this fashion, after $d$ applications we now have the data register in the lowest row, and all ancilla registers shifted up by 1.
    \item In parallel with the previous step, we also apply SWAPs to move all the ancillae in the leftmost column, used for \polyR{} and \polyL{} up by 1.
    \item Apply $\PLdagmod{}$ on each pair of ancilla registers (now shifted up by 1 row relative to where we applied $\PRmod{}$).
    \item Apply the $d$ ``distance-2 \cnot{}'' gates, \cref{eq:d2_cnot}, but shifted up by one row relative to where they were previously applied.
    \item Apply {\upshape$\polyL^\dag$} in analogous form to \cref{circuit:poly_2d_general}, but with all gates shifted up by one qubit in the 2D layout. $q_{0a}$ is still unused, but is in a different physical location.
    \item Measure all ancilla registers $i_0,\cdots, j_{d-1}$ and the $2d$ qubits of the leftmost column and post-select on outcome $\ket{0}$ on each, then the output state with the polynomial applied is stored in the $\varphi$ register, now at the bottom row of the grid.
\end{enumerate}

\begin{figure}[hbtp]
    \centering
    \captionsetup[subfigure]{labelformat=simple}
    \renewcommand{\thesubfigure}{\arabic{subfigure}.}
    Grey = sequential \qquad \& \qquad
    Multiple colors = parallel
    
    \subfloat[\polyR{}]{%
        \mypolyIVxVIbefore[{%
            \fill[mygrey!30!white] (0,-8) rectangle +(1,8);
        }]{}
    }
    \subfloat[``distance-2 \cnots{}'']{%
        ~\qquad\mypolyIVxVIbefore[{%
            \fill[myred!30!white]    (0,-2) rectangle +(3,1);
            \fill[myblue!30!white]   (0,-4) rectangle +(3,1);
            \fill[myorange!30!white] (0,-6) rectangle +(3,1);
            \fill[mypurple!30!white] (0,-8) rectangle +(3,1);
        }]{%
            \myCNOT1012
            \myCNOT3032
            \myCNOT5052
            \myCNOT7072
        }\qquad~
    }
    \subfloat[\PRmod{}]{%
        \mypolyIVxVIbefore[{%
            \fill[myred!30!white]    (1,-3) rectangle +(6,2);
            \fill[myblue!30!white]   (1,-5) rectangle +(6,2);
            \fill[myorange!30!white] (1,-7) rectangle +(6,2);
            \fill[mypurple!30!white] (1,-9) rectangle +(6,2);
        }]{}
    } \\
    \subfloat[$d$ \select{} oracles + \swaps{}]{%
        \begin{minipage}{\textwidth}
         \mypolyIVxVIempty[{%
            \fill[mygrey!30!white] (0,-2) rectangle +(7,2);
        }]{%
            \mySWAP0010
            \mySWAP0111
            \mySWAP0212
            \mySWAP0313
            \mySWAP0414
            \mySWAP0515
            \mySWAP0616
        } \quad
        \mypolyIVxVIempty[{%
            \fill[mygrey!30!white] (1,-2) rectangle +(6,2);
        }]{%
            \myCNOT0111
            \myCNOT0212
            \myCNOT0313
            \myCNOT0414
            \myCNOT0515
            \myCNOT0616
        } \quad
        \mypolyIVxVIempty[{%
            \fill[mygrey!30!white] (1,-3) rectangle +(6,2);
        }]{%
            \myCZ1121
            \myCZ1222
            \myCZ1323
            \myCZ1424
            \myCZ1525
            \myCZ1626
        } \quad
        \mypolyIVxVIempty[{%
            \fill[mygrey!30!white] (0,-3) rectangle +(7,2);
        }]{%
            \mySWAP1020
            \mySWAP1121
            \mySWAP1222
            \mySWAP1323
            \mySWAP1424
            \mySWAP1525
            \mySWAP1626
        } \\[1em]
        $\cdots$ \\[1.5em]
        \mypolyIVxVIempty[{%
            \fill[mygrey!30!white] (0,-8) rectangle +(7,2);
        }]{%
            \mySWAP6070
            \mySWAP6171
            \mySWAP6272
            \mySWAP6373
            \mySWAP6474
            \mySWAP6575
            \mySWAP6676
        } \quad
        \mypolyIVxVIempty[{%
            \fill[mygrey!30!white] (1,-8) rectangle +(6,2);
        }]{%
            \myCNOT6171
            \myCNOT6272
            \myCNOT6373
            \myCNOT6474
            \myCNOT6575
            \myCNOT6676
        } \quad
        \mypolyIVxVIempty[{%
            \fill[mygrey!30!white] (1,-9) rectangle +(6,2);
        }]{%
            \myCZ7181
            \myCZ7282
            \myCZ7383
            \myCZ7484
            \myCZ7585
            \myCZ7686
        } \quad
        \mypolyIVxVIempty[{%
            \fill[mygrey!30!white] (1,-9) rectangle +(6,2);
        }]{%
            \mySWAP7181
            \mySWAP7282
            \mySWAP7383
            \mySWAP7484
            \mySWAP7585
            \mySWAP7686
        } \\
        \end{minipage}
    } \\
    \stepcounter{subfigure}
    \subfloat[\PLdagmod{}]{%
        ~\quad\mypolyIVxVIafter[{%
            \fill[myred!30!white]    (1,-2) rectangle +(6,2);
            \fill[myblue!30!white]   (1,-4) rectangle +(6,2);
            \fill[myorange!30!white] (1,-6) rectangle +(6,2);
            \fill[mypurple!30!white] (1,-8) rectangle +(6,2);
        }]{}\quad~
    }
    \subfloat[``distance-2 \cnots{}'']{%
        ~\quad\mypolyIVxVIafter[{%
            \fill[myred!30!white]    (0,-1) rectangle +(3,1);
            \fill[myblue!30!white]   (0,-3) rectangle +(3,1);
            \fill[myorange!30!white] (0,-5) rectangle +(3,1);
            \fill[mypurple!30!white] (0,-7) rectangle +(3,1);
        }]{%
            \myCNOT0002
            \myCNOT2022
            \myCNOT4042
            \myCNOT6062
        }\quad~
    }
    \subfloat[$\polyL^\dag$]{%
        ~\quad\mypolyIVxVIafter[{%
            \fill[mygrey!30!white] (0,-8) rectangle +(1,8);
        }]{}\quad~
    }
    \subfloat[Measure and post-select]{%
        ~\quad\mypolyIVxVIafter[{%
            \fill[mygrey!30!white] (0,-8) rectangle +(7,8);
        }]{}\quad~
    }
    \caption{$2$D circuit mapping for the matrix polynomial transformation algorithm based on \foxlcu{} for $n=6$ and $d=4$. The one-dimensional version is shown in \cref{fig:poly_pr_ab}, where $\PRc$ and $\PLc$ reduce to a single $\Xp$ gate on the second qubit of each $\ket{i_k}$ register.}
    \label{fig:placeholder}
\end{figure}

\section{Matrix polynomial full circuit for the Ising Hamiltonian for $n = 3$ and $d = 3$}
\label{sec:full_circuit_ising}

\begin{center}
\rotatebox{90}{\scalebox{0.9}{
$
\begin{myqcircuitc}{0.33}
&&&&&&\lstick{\ket{0}}	&	\qw &  \gate{R_y}	&	\ctrl{1}	&	\qw	&	\gate{P}	&	\ctrl{3}	&	\qw	&	\qw	&	\qw	&	\qw	&	\qw	&	\qw	&	\qw	&	\qw	&	\qw	&	\qw	&	\qw	&	\qw	&	\qw	&	\qw	&	\qw	&	\qw	&	\qw	&	\qw	&	\qw	&	\qw	&	\qw	&	\qw	&	\qw	&	\qw	&	\qw	&	\qw	&	\qw	&	\qw	&	\qw	&	\qw	&	\qw	&	\qw	&	\qw	&	\qw	&	\qw	&	\qw	&	\qw	&	\qw	&	\ctrl{3}	&	\qw	&	\ctrl{1}	&	\gate{R_y}	&	\qw	&	\meter{} & \cw & 0 \\
&&&&&&\lstick{\ket{0}}	&	\qw &  \qw	&	\gate{R_y}	&	\ctrl{1}	&	\gate{P}	&	\qw	&	\ctrl{8}	&	\qw	&	\qw	&	\qw	&	\qw	&	\qw	&	\qw	&	\qw	&	\qw	&	\qw	&	\qw	&	\qw	&	\qw	&	\qw	&	\qw	&	\qw	&	\qw	&	\qw	&	\qw	&	\qw	&	\qw	&	\qw	&	\qw	&	\qw	&	\qw	&	\qw	&	\qw	&	\qw	&	\qw	&	\qw	&	\qw	&	\qw	&	\qw	&	\qw	&	\qw	&	\qw	&	\qw	&	\ctrl{8}	&	\qw	&	\ctrl{1}	&	\gate{R_y}	&	\qw	&	\qw	&	\meter{} & \cw & 0 \\
&&&&&&\lstick{\ket{0}}	&	\qw &  \qw	&	\qw	&	\gate{R_y}	&	\gate{P}	&	\qw	&	\qw	&	\ctrl{13}	&	\qw	&	\qw	&	\qw	&	\qw	&	\qw	&	\qw	&	\qw	&	\qw	&	\qw	&	\qw	&	\qw	&	\qw	&	\qw	&	\qw	&	\qw	&	\qw	&	\qw	&	\qw	&	\qw	&	\qw	&	\qw	&	\qw	&	\qw	&	\qw	&	\qw	&	\qw	&	\qw	&	\qw	&	\qw	&	\qw	&	\qw	&	\qw	&	\qw	&	\qw	&	\ctrl{13}	&	\qw	&	\qw	&	\gate{R_y}	&	\qw	&	\qw	&	\qw	&	\meter{} & \cw & 0	\\
&&&&&&\lstick{\ket{0}}	&	\qw &  \qw	&	\qw	&	\qw	&	\qw	&	\targ	&	\qw	&	\qw	&	\ctrl{4}	&	\targ	&	\ctrl{1}	&	\targ	&	\qw	&	\qw	&	\qw	&	\qw	&	\ctrl{18}	&	\qw	&	\qw	&	\qw	&	\qw	&	\qw	&	\qw	&	\qw	&	\qw	&	\qw	&	\qw	&	\qw	&	\qw	&	\qw	&	\qw	&	\qw	&	\qw	&	\qw	&	\qw	&	\qw	&	\qw	&	\qw	&	\targ	&	\ctrl{1}	&	\targ	&	\ctrl{4}	&	\qw	&	\qw	&	\targ	&	\qw	&	\qw	&	\qw	&	\qw	& \meter{} & \cw & 0\\
&&&&&&\lstick{\ket{0}}	&	\qw &  \qw	&	\qw	&	\qw	&	\qw	&	\qw	&	\qw	&	\qw	&	\qw	&	\qw	&	\gate{R_y}	&	\ctrl{-1}	&	\ctrl{1}	&	\targ	&	\qw	&	\qw	&	\qw	&	\ctrl{18}	&	\qw	&	\qw	&	\qw	&	\qw	&	\qw	&	\qw	&	\qw	&	\qw	&	\qw	&	\qw	&	\qw	&	\qw	&	\qw	&	\qw	&	\qw	&	\qw	&	\qw	&	\qw	&	\targ	&	\ctrl{1}	&	\ctrl{-1}	&	\gate{R_y}	&	\qw	&	\qw	&	\qw	&	\qw	&	\qw	&	\qw	&	\qw	&	\qw	&	\qw	&	\meter{} & \cw & 0	\\
&&&&&&\lstick{\ket{0}}	&	\qw &  \qw	&	\qw	&	\qw	&	\qw	&	\qw	&	\qw	&	\qw	&	\qw	&	\qw	&	\qw	&	\qw	&	\gate{R_y}	&	\ctrl{-1}	&	\qw	&	\qw	&	\qw	&	\qw	&	\ctrl{18}	&	\push{\rule{0.8em}{0.4pt}}\qw	&	\qw	&	\qw	&	\qw	&	\qw	&	\qw	&	\qw	&	\qw	&	\qw	&	\qw	&	\qw	&	\qw	&	\qw	&	\qw	&	\qw	&	\qw	&	\qw	&	\ctrl{-1}	&	\gate{R_y}	&	\qw	&	\qw	&	\qw	&	\qw	&	\qw	&	\qw	&	\qw	&	\qw	&	\qw	&	\qw	&	\qw	&	\meter{} & \cw & 0	\\
&&&&&&\lstick{\ket{0}}	&	\qw &  \qw	&	\qw	&	\qw	&	\qw	&	\qw	&	\qw	&	\qw	&	\qw	&	\qw	&	\qw	&	\qw	&		\targ	&\qw	&	\qw	&	\qw	&	\qw	&	\qw	&	\qw	&	\ctrl{15}	&	\push{\rule{0.8em}{0.4pt}}\qw	&	\qw	&	\qw	&	\qw	&	\qw	&	\qw	&	\qw	&	\qw	&	\qw	&	\qw	&	\qw	&	\qw	&	\qw	&	\qw	&\qw	&	\qw	&	\qw	&	\targ	&	\qw	&	\qw	&	\qw	&		\qw	&	\qw	&	\qw	&	\qw	&	\qw	&	\qw	&	\qw	&	\qw	&	\meter{} & \cw & 0	\\
&&&&&&\lstick{\ket{0}}	&	\qw &  \qw	&	\qw	&	\qw	&	\qw	&	\qw	&	\qw	&	\qw	&	\gate{R_y}	&	\ctrl{-4}	&	\ctrl{1}	&	\targ	&	\ctrl{-1}	&	\targ	&\qw	&	\qw	&		\qw	&	\qw	&	\qw	&	\qw	&	\ctrl{15}	&	\push{\rule{0.8em}{0.4pt}}\qw	&	\qw	&	\qw	&	\qw	&	\qw	&	\qw	&	\qw	&	\qw	&	\qw	&	\qw	&	\qw	&	\qw	&	\qw	&	\qw	&	\qw	&	\targ	&	\ctrl{-1}	&	\targ	&	\ctrl{1}	&	\ctrl{-4}	&	\gate{R_y}	&	\qw	&	\qw	&	\qw	&	\qw	&	\qw	&	\qw	&	\qw	&	\meter{} & \cw & 0\\
&&&&&&\lstick{\ket{0}}	&	\qw &  \qw	&	\qw	&	\qw	&	\qw	&	\qw	&	\qw	&	\qw	&	\qw	&	\qw	&	\gate{R_y}	&	\ctrl{-1}	&	\qw	&	\ctrl{-1}	&	\qw	&	\qw	&	\qw	&	\qw	&	\qw	&	\qw	&	\qw	&	\ctrl{15}	&	\qw	&	\qw	&	\qw	&	\qw	&	\qw	&	\qw	&	\qw	&	\qw	&	\qw	&	\qw	&	\qw	&\qw	&	\qw	&	\qw	&	\ctrl{-1}	&	\qw	&	\ctrl{-1}	&	\gate{R_y}	&		\qw	&	\qw	&	\qw	&	\qw	&	\qw	&	\qw	&	\qw	&	\qw	&	\qw	&	\meter{} & \cw & 0	\\
&&&&&&\lstick{\ket{0}}	&	\qw &  \qw	&	\qw	&	\qw	&	\qw	&	\qw	&	\targ	&	\qw	&	\ctrl{4}	&	\targ	&	\ctrl{1}	&	\targ	&	\qw	&	\qw	&	\qw	&	\qw	&	\qw	&	\qw	&	\qw	&	\qw	&	\qw	&	\qw	&	\ctrl{12}	&	\qw	&	\qw	&	\qw	&	\qw	&	\qw	&	\qw	&	\qw	&	\qw	&	\qw	&	\qw	&	\qw	&	\qw	&	\qw	&	\qw	&	\qw	&	\targ	&	\ctrl{1}	&	\targ	&	\ctrl{4}	&	\qw	&	\targ	&	\qw	&	\qw	&	\qw	&	\qw	&	\qw	&	\meter{} & \cw & 0	\\
&&&&&&\lstick{\ket{0}}	&	\qw &  \qw	&	\qw	&	\qw	&	\qw	&	\qw	&	\qw	&	\qw	&	\qw	&	\qw	&	\gate{R_y}	&	\ctrl{-1}	&	\ctrl{1}	&	\targ	&	\qw	&	\qw	&	\qw	&	\qw	&	\qw	&	\qw	&	\qw	&	\qw	&	\qw	&	\ctrl{12}	&	\qw	&	\qw	&	\qw	&	\qw	&	\qw	&	\qw	&	\qw	&	\qw	&	\qw	&	\qw	&	\qw	&	\qw	&	\targ	&	\ctrl{1}	&	\ctrl{-1}	&	\gate{R_y}	&	\qw	&	\qw	&	\qw	&	\qw	&	\qw	&	\qw	&	\qw	&	\qw	&	\qw	&	\meter{} & \cw & 0	\\
&&&&&&\lstick{\ket{0}}	&	\qw &  \qw	&	\qw	&	\qw	&	\qw	&	\qw	&	\qw	&	\qw	&	\qw	&	\qw	&	\qw	&	\qw	&	\gate{R_y}	&	\ctrl{-1}	&	\qw	&	\qw	&	\qw	&	\qw	&	\qw	&	\qw	&	\qw	&	\qw	&	\qw	&	\qw	&	\ctrl{12}	&	\push{\rule{0.8em}{0.4pt}}\qw	&	\qw	&	\qw	&	\qw	&	\qw	&	\qw	&	\qw	&	\qw	&	\qw	&	\qw	&	\qw	&	\ctrl{-1}	&	\gate{R_y}	&	\qw	&	\qw	&	\qw	&	\qw	&	\qw	&	\qw	&	\qw	&	\qw	&	\qw	&	\qw	&	\qw	&	\meter{} & \cw & 0	\\
&&&&&&\lstick{\ket{0}}	&	\qw &  \qw	&	\qw	&	\qw	&	\qw	&	\qw	&	\qw	&	\qw	&	\qw	&	\qw	&	\qw	&	\qw	&		\targ	&\qw	&	\qw	&	\qw	&	\qw	&	\qw	&	\qw	&	\qw	&	\qw	&	\qw	&	\qw	&	\qw	&	\qw	&	\ctrl{9}	&	\push{\rule{0.8em}{0.4pt}}\qw	&	\qw	&	\qw	&	\qw	&	\qw	&	\qw	&	\qw	&	\qw	&	\qw	&\qw	&	\qw	&	\targ	&		\qw	&	\qw	&	\qw	&	\qw	&	\qw	&	\qw	&	\qw	&	\qw	&	\qw	&	\qw	&	\qw	&	\meter{} & \cw & 0	\\
&&&&&&\lstick{\ket{0}}	&	\qw &  \qw	&	\qw	&	\qw	&	\qw	&	\qw	&	\qw	&	\qw	&	\gate{R_y}	&	\ctrl{-4}	&	\ctrl{1}	&	\targ	&		\ctrl{-1}	&	\targ	&\qw	&	\qw	&	\qw	&	\qw	&	\qw	&	\qw	&	\qw	&	\qw	&	\qw	&	\qw	&	\qw	&	\qw	&	\ctrl{9}	&	\push{\rule{0.8em}{0.4pt}}\qw	&	\qw	&	\qw	&	\qw	&	\qw	&	\qw	&	\qw&	\qw	&	\qw		&	\targ	&	\ctrl{-1}	&	\targ	&	\ctrl{1}	&	\ctrl{-4}	&	\gate{R_y}	&	\qw	&	\qw	&	\qw	&	\qw	&	\qw	&	\qw	&	\qw	&	\meter{} & \cw & 0	\\
&&&&&&\lstick{\ket{0}}	&	\qw &  \qw	&	\qw	&	\qw	&	\qw	&	\qw	&	\qw	&	\qw	&	\qw	&	\qw	&	\gate{R_y}	&	\ctrl{-1}	&	\qw	&	\ctrl{-1}	&	\qw	&	\qw	&	\qw	&	\qw	&	\qw	&	\qw	&	\qw	&	\qw	&	\qw	&	\qw	&	\qw	&	\qw	&	\qw	&	\ctrl{9}	&	\qw	&\qw	&	\qw	&	\qw	&	\qw	&	\qw	&	\qw	&	\qw	&	\ctrl{-1}	&	\qw	&	\ctrl{-1}	&	\gate{R_y}	&		\qw	&	\qw	&	\qw	&	\qw	&	\qw	&	\qw	&	\qw	&	\qw	&	\qw	&	\meter{} & \cw & 0	\\
&&&&&&\lstick{\ket{0}}	&	\qw &  \qw	&	\qw	&	\qw	&	\qw	&	\qw	&	\qw	&	\targ	&	\ctrl{4}	&	\targ	&	\ctrl{1}	&	\targ	&	\qw	&	\qw	&	\qw	&	\qw	&	\qw	&	\qw	&	\qw	&	\qw	&	\qw	&	\qw	&	\qw	&	\qw	&	\qw	&	\qw	&	\qw	&	\qw	&	\ctrl{6}	&	\qw	&	\qw	&	\qw	&	\qw	&	\qw	&	\qw	&	\qw	&	\qw	&	\qw	&	\targ	&	\ctrl{1}	&	\targ	&	\ctrl{4}	&	\targ	&	\qw	&	\qw	&	\qw	&	\qw	&	\qw	&	\qw	&	\meter{} & \cw & 0	\\
&&&&&&\lstick{\ket{0}}	&	\qw &  \qw	&	\qw	&	\qw	&	\qw	&	\qw	&	\qw	&	\qw	&	\qw	&	\qw	&	\gate{R_y}	&	\ctrl{-1}	&	\ctrl{1}	&	\targ	&	\qw	&	\qw	&	\qw	&	\qw	&	\qw	&	\qw	&	\qw	&	\qw	&	\qw	&	\qw	&	\qw	&	\qw	&	\qw	&	\qw	&	\qw	&	\ctrl{6}	&	\qw	&	\qw	&	\qw	&	\qw	&	\qw	&	\qw	&	\targ	&	\ctrl{1}	&	\ctrl{-1}	&	\gate{R_y}	&	\qw	&	\qw	&	\qw	&	\qw	&	\qw	&	\qw	&	\qw	&	\qw	&	\qw	&	\meter{} & \cw & 0	\\
&&&&&&\lstick{\ket{0}}	&	\qw &  \qw	&	\qw	&	\qw	&	\qw	&	\qw	&	\qw	&	\qw	&	\qw	&	\qw	&	\qw	&	\qw	&	\gate{R_y}	&	\ctrl{-1}	&	\qw	&	\qw	&	\qw	&	\qw	&	\qw	&	\qw	&	\qw	&	\qw	&	\qw	&	\qw	&	\qw	&	\qw	&	\qw	&	\qw	&	\qw	&	\qw	&	\ctrl{6}	&	\push{\rule{0.8em}{0.4pt}}\qw	&	\qw	&	\qw	&	\qw	&	\qw	&	\ctrl{-1}	&	\gate{R_y}	&	\qw	&	\qw	&	\qw	&	\qw	&	\qw	&	\qw	&	\qw	&	\qw	&	\qw	&	\qw	&	\qw	&	\meter{} & \cw & 0	\\
&&&&&&\lstick{\ket{0}}	&	\qw &  \qw	&	\qw	&	\qw	&	\qw	&	\qw	&	\qw	&	\qw	&	\qw	&	\qw	&	\qw	&	\qw	&		\targ	&\qw	&	\qw	&	\qw	&	\qw	&	\qw	&	\qw	&	\qw	&	\qw	&	\qw	&	\qw	&	\qw	&	\qw	&	\qw	&	\qw	&	\qw	&	\qw	&	\qw	&	\qw	&	\ctrl{3}	&	\push{\rule{0.8em}{0.4pt}}\qw	&	\qw	&	\qw&	\qw	&	\qw		&	\targ	&	\qw	&	\qw	&	\qw	&	\qw	&	\qw	&	\qw	&	\qw	&	\qw	&	\qw	&	\qw	&	\qw	&	\meter{} & \cw & 0	\\
&&&&&&\lstick{\ket{0}}	&	\qw &  \qw	&	\qw	&	\qw	&	\qw	&	\qw	&	\qw	&	\qw	&	\gate{R_y}	&	\ctrl{-4}	&	\ctrl{1}	&	\targ	&		\ctrl{-1}	&	\targ	&\qw	&	\qw	&	\qw	&	\qw	&	\qw	&	\qw	&	\qw	&	\qw	&	\qw	&	\qw	&	\qw	&	\qw	&	\qw	&	\qw	&	\qw	&	\qw	&	\qw	&	\qw	&	\ctrl{3}	&	\push{\rule{0.8em}{0.4pt}}\qw	&\qw	&	\qw	&	\targ	&	\ctrl{-1}	&	\targ	&	\ctrl{1}	&		\ctrl{-4}	&	\gate{R_y}	&	\qw	&	\qw	&	\qw	&	\qw	&	\qw	&	\qw	&	\qw	&	\meter{} & \cw & 0	\\
&&&&&&\lstick{\ket{0}}	&	\qw &  \qw	&	\qw	&	\qw	&	\qw	&	\qw	&	\qw	&	\qw	&	\qw	&	\qw	&	\gate{R_y}	&	\ctrl{-1}	&		\qw	&	\ctrl{-1}	&	\qw	&	\qw	& \qw	&	\qw	&	\qw	&	\qw	&	\qw	&	\qw	&	\qw	&	\qw	&	\qw	&	\qw	&	\qw	&	\qw	&	\qw	&	\qw	&	\qw	&	\qw	&	\qw	&	\ctrl{3}	&		\qw	&	\qw&\ctrl{-1}	&	\qw	&	\ctrl{-1}	&	\gate{R_y}		&	\qw	&	\qw	&	\qw	&	\qw	&	\qw	&	\qw	&	\qw	&	\qw	&	\qw	&	\meter{} & \cw & 0	\\
&&&&&&	&	\qw &  \qw	&	\qw	&	\qw	&	\qw	&	\qw	&	\qw	&	\qw	&	\qw	&	\qw	&	\qw	&	\qw	&	\qw	&	\qw	&	\qw	&	\qw	&	\targ	&	\qw	&	\qw	&	\ctrl{0}	&	\qw	&	\qw	&	\targ	&	\qw	&	\qw	&	\ctrl{0}	&	\qw	&	\qw	&	\targ	&	\qw	&	\qw	&	\ctrl{0}	&	\qw	&	\qw	&	\qw	&	\qw	&	\qw	&	\qw	&	\qw	&	\qw	&	\qw	&	\qw	&	\qw	&	\qw	&	\qw	&	\qw	&	\qw	&	\qw	&	\qw	&	\qw	\\
&&&&&&	&	\qw &  \qw	&	\qw	&	\qw	&	\qw	&	\qw	&	\qw	&	\qw	&	\qw	&	\qw	&	\qw	&	\qw	&	\qw	&	\qw	&	\qw	&	\qw	&	\qw	&	\targ	&	\qw	&	\qw	&	\ctrl{0}	&	\qw	&	\qw	&	\targ	&	\qw	&	\qw	&	\ctrl{0}	&	\qw	&	\qw	&	\targ	&	\qw	&	\qw	&	\ctrl{0}	&	\qw	&	\qw	&	\qw	&	\qw	&	\qw	&	\qw	&	\qw	&	\qw	&	\qw	&	\qw	&	\qw	&	\qw	&	\qw	&	\qw	&	\qw	&	\qw	&	\qw	\\
&&&&&&	&	\qw &  \qw	&	\qw	&	\qw	&	\qw	&	\qw	&	\qw	&	\qw	&	\qw	&	\qw	&	\qw	&	\qw	&	\qw	&	\qw	&	\qw	&	\qw	&	\qw	&	\qw	&	\targ	&	\qw	&	\qw	&	\ctrl{0}	&	\qw	&	\qw	&	\targ	&	\qw	&	\qw	&	\ctrl{0}	&	\qw	&	\qw	&	\targ	&	\qw	&	\qw	&	\ctrl{0}	&	\qw	&	\qw	&	\qw	&	\qw	&	\qw	&	\qw	&	\qw	&	\qw	&	\qw	&	\qw	&	\qw	&	\qw	&	\qw	&	\qw	&	\qw	&	\qw	\inputgroupv{4}{6}{0.8em}{2.4em}{\ket{i_0}} \inputgroupv{7}{9}{0.8em}{2.4em}{\ket{j_0}} \inputgroupv{10}{12}{0.8em}{2.4em}{\ket{i_1}} \inputgroupv{13}{15}{0.8em}{2.4em}{\ket{j_1}} \inputgroupv{16}{18}{0.8em}{2.4em}{\ket{i_2}} \inputgroupv{19}{21}{0.8em}{2.4em}{\ket{j_2}} \inputgroupv{22}{24}{0.8em}{2.4em}{\ket{\varphi}}\\
\end{myqcircuitc}
$
}}
\end{center}

\end{document}